\title[]{Housing Market on Networks}
\author{Xinwei Song}
\affiliation{
  \institution{Key Laboratory of Intelligent Perception and Human-Machine Collaboration, ShanghaiTech University}
  \city{Shanghai}
  \country{China}}
\email{songxw2024@shanghaitech.edu.cn}
\author{Tianyi Yang}
\affiliation{
  \institution{Key Laboratory of Intelligent Perception and Human-Machine Collaboration, ShanghaiTech University}
  \city{Shanghai}
  \country{China}}
\email{yty2@shanghaitech.edu.cn}
\author{Dengji Zhao}
\affiliation{
  \institution{Key Laboratory of Intelligent Perception and Human-Machine Collaboration, ShanghaiTech University}
  \city{Shanghai}
  \country{China}}
\email{zhaodj@shanghaitech.edu.cn}
\begin{abstract}
Incentivizing the existing participants to invite new participants to join an auction, matching or cooperative game have been extensively studied recently. One common challenge to design such incentive in these games is that the invitees and inviters are competitors. To have such an incentive, we normally have to sacrifice some of the traditional properties. Especially, in a housing market (one kind of one-sided matching), we cannot maintain the traditional stability and optimality. The previous studies proposed some new matching mechanisms to have the invitation incentive (part of the incentive compatibility), but did not have any guarantee on stability and optimality.

In this paper, we propose new notions of stability and optimality which are achievable with incentive compatibility. We weaken stability and optimality on a special structure (complete components) on networks. We first prove that the weakened notions are the best we can achieve with incentive compatibility.
Then, we propose three mechanisms (Swap With Neighbors, Leave and Share, and Connected Trading Cycles) to satisfy the desirable properties. Connected Trading Cycles is the first mechanism to satisfy the best stability and optimality compatible with incentive compatibility.
\end{abstract}
\keywords{Mechanism Design, Housing Market, Incentive Compatible}
\newcommand{\BibTeX}{\rm B\kern-.05em{\sc i\kern-.025em b}\kern-.08em\TeX}
\begin{document}

%%% The following commands remove the headers in your paper. For final 
%%% papers, these will be inserted during the pagination process.

\pagestyle{fancy}
\fancyhead{}

%%% The next command prints the information defined in the preamble.

\maketitle 

%%%%%%%%%%%%%%%%%%%%%%%%%%%%%%%%%%%%%%%%%%%%%%%%%%%%%%%%%%%%%%%%%%%%%%%%
\section{Introduction}
Housing market on networks studies an endowment exchange problem with special consideration of the participants' social connections~\cite{DBLP:conf/atal/KawasakiWTY21}. In this model, participants' social connections are private information and only a small group initiates the matching. Those who are already in the game can decide whether or not to invite their neighbors. Expanding the market via invitation can be beneficial because more participants will provide more matching options, which opens the possibility of a more satisfactory matching. Several real-world applications, like online housing markets\footnote{https://www.homeexchange.com} and second-hand goods exchange platforms\footnote{https://www.freecycle.org}, have utilized social networks to promote their markets. On these platforms, existing participants are encouraged to share the matching information on their social media and invite their friends to form larger markets. But sometimes participants may hesitate to invite because their friends will compete with them and take away their favorite items in the matching game. For this reason, the main obstacle is ensuring that the inviters' match is not getting worse after inviting others. 

To solve the above problem, one way is to constrain participants' selection space to remove potential competition caused by invitation. For example, \citet{DBLP:conf/atal/KawasakiWTY21} restricted the social network between agents as trees and proposed Modified TTC which only allows participants to choose from their parents and descendants. In this paper, we aim to design mechanisms for general social networks structures and design mechanisms to incentivize invitations. Our first mechanism, Swap With Neighbors (SWN) restricts participants to choosing from their neighbors' items. Under this restriction, invitations can only bring matching opportunities not competitions, thus naturally satisfying IC. The drawback of SWN is that there is a very low probability that a participant can benefit from the enlarged market, which contradicts the purpose of enlarging the market for better matching. To improve the matching, we observed that if a group of participants trade and leave the market, they do not care about what happens next in the matching. Inspired by this observation, our second mechanism, called Leave and Share (LS), uses SWN as a base protocol but dynamically connects the remaining participants when some participants are leaving the market (the sharing process). Due to the sharing process, LS gives a more satisfying outcome.

All the above-mentioned mechanisms focused on the IC property, but to properly evaluate a matching mechanism, optimality and stability are the two key properties. The former characterizes how efficient a matching is, i.e., whether participants can improve their matching without making others' matching worse, and the latter depicts the robustness of a matching, i.e., whether participants have incentives to deviate from the matching and exchange within a smaller group. The Top Trading Cycles (TTC) presented by \citet{shapley1974cores} is the only Pareto optimal and stable mechanism for housing market problem~\cite{ma1994strategy}. However, \citet{kumar2022integration} show that more than half of the participants prefer their matching given by TTC in a smaller market than in a large integrated market. Hence, in the network setting where participants can control the market size through strategic invitations, TTC fails to be incentive compatible (IC). Due to the uniqueness of TTC, stability and Pareto optimality are not compatible with IC in housing market on networks. 

To maintain invitation incentives, we weaken stability and optimality notions with complete components (cc), called stable-cc and optimal-cc, respectively. Stable-cc weakens stability in the sense that only participants in a complete component can deviate from the matching and exchange within the component.
Optimal-cc weakens optimality in the sense that a matching can only be improved if all the participants whose allocation is improved have to form a complete component. We then prove that both stable-cc and optimal-cc are the tightest notions compatible with IC. Besides IC, individual rationality (IR) is also a commonly considered property that ensures participating in the game is not harmful, serving as a base incentive for people to join the game. Finally, combining all the properties together, we construct the theoretical boundaries for housing market on networks.

The best properties we can achieve together for our setting are stable-cc, optimal-cc, IC, and IR. The previously mentioned SWN and LS detect all the cycles formed by neighbors, so they naturally satisfy stable-cc. However, they both add restrictions to the selection space and fail to satisfy optimal-cc. Hence, we further propose the third mechanism, called Connected Trading Cycles (CTC), which does not add restrictions on the participant's selection space to satisfy all the properties. Similar to TTC, CTC detects the top trading cycles (to ensure optimality). The key difference is that CTC makes sure the trading cycles have to satisfy certain connectedness on networks (to ensure invitation incentives). Intuitively, the connectedness guarantees that a group of participants can stay together regardless of the others' strategic invitations. Thus, in the network setting, the trading cycles should build on a group of connected participants.

To sum up, our contributions are threefold:
\begin{itemize}
    \item We define the stability and optimality notions compatible with IC for housing market on networks. 
    \item We prove the best achievable stability and optimality in the network setting. %Through theoretical demonstration, we construct the theoretical boundaries.
    \item We propose three mechanisms: Swap With Neighbors, Leave and Share, and Connected Trading Cycles to meet the desirable properties (Table~\ref{tab:mechanism} gives a quick overview of the theoretical performance of each mechanism).
\end{itemize}

\begin{table}[ht]
\caption{Comparison on housing market mechanisms on social networks. We propose SWN, LS and CTC in this paper.}
\label{tab:mechanism}
\begin{tabular}{ccccc}\toprule
    \textit{Mechanism} & \textit{Stable-cc}& \textit{Optimal-cc} & \textit{IC} & \textit{IR}\\ \midrule
    TTC~\cite{shapley1974cores} &\ding{51} & \ding{51}& \ding{53}& \ding{51} \\
    Modified TTC~\cite{DBLP:conf/atal/KawasakiWTY21} & \ding{51}(Trees) &\ding{53} & \ding{51}(Trees)& \ding{51}  \\
    SWN & \ding{51} &\ding{53}& \ding{51}& \ding{51} \\
    LS  & \ding{51} &\ding{53}& \ding{51}& \ding{51} \\
    \textcolor{red}{CTC}  & \textcolor{red}{\ding{51}} &\textcolor{red}{\ding{51}} & \textcolor{red}{\ding{51}}& \textcolor{red}{\ding{51}} \\ \bottomrule
\end{tabular}
\end{table}

\section{Related Work}
\citet{li2017mechanism} initiate the line of research on mechanism design on social networks. They utilize social networks to incentivize participants to invite their friends to form a larger market so that participants can receive better outcomes~\cite{DBLP:conf/atal/Zhao21}. However, the participants are competitors and may not want to invite all their friends without proper incentives. Many classic solutions fail to provide such incentives, so we cannot directly apply them for this purpose. To combat this, in auctions, we can pay the harmed inviters some rewards~\cite{li2022diffusion}. In cooperative games, we can let the inviters share their invitees' contributions~\cite {zhang2022incentives}. However, in matching, we face a greater challenge since the mechanism cannot compensate participants for the loss caused by invitation through payments. 

Existing solutions for matching on networks restricts participants' selection space and the social network structures. \citet{DBLP:conf/atal/KawasakiWTY21} and \citet{tenants_social_network} model the social network as a directed graph where neighborhood relationships can be asymmetric (i.e., A is B's neighbor does not necessarily mean B is A's neighbor). They focus on trees and modify the classic mechanisms to incentivize invitations for housing market as well as its variant~\cite{abdulkadirouglu1999house}. Specifically, they add limits on the selection space of the participants in a tree to prevent an invitee competing with her inviters. For two-sided matching, \citet{ijcai2022-27} model the school choice problem into the network setting and design invitation incentives for the student side only. Another thread of work also takes into account the social network and its influence, but they differ from our setting~\cite{DBLP:conf/ijcai/GourvesLW17,HOEFER201320}. In their setting, the social network is priorly known and it constrains possible allocations.

Optimality and stability are two an well-concerned property for matching mechanisms~\cite{abdulkadiroglu2013matching}. The celebrated TTC is the only Pareto optimal and stable solution in the traditional housing market problem~\cite{ma1994strategy}. \citet{abraham2004pareto} study various ways to evaluate the optimality of a house allocation mechanism and illustrate why TTC gives the Pareto optimal allocation. \citet{fleischer2008dynamic} point out the transitions between different optimal allocations in the house allocation problem. \citet{brandt2019convergence} analyze whether different types of dynamic pairwise swaps converge to Pareto optimal allocations for matching markets. In the network setting, no previous work has investigated the theoretical boundaries. We propose tight stability and optimality notions for this new setting and design mechanisms to reach them.

\section{The Model}
We consider a housing market problem on a social network denoted by an undirected graph $G=(N,E)$, which contains $n$ agents $N=\{1,\dots,n\}$. Each agent $i\in N$ is endowed with an indivisible item $h_i$, usually referred to as a house. Let $H=\{h_1,\dots,h_n\}$ be the set of all agents' items. Considering the nature of the exchange economy, we formulate agents' social relationships as symmetric in this model, i.e., $i$ has connections with $j$ in the social network indicates $j$ has connections with $i$. We define agent $i$ as $j$'s neighbor if there is an edge $e\in E$ between agent $i$ and $j$, let $r_i\subseteq N$ be $i$'s neighbor set. 

Despite social relationship, each agent $i\in N$ has a strict preference $\succ_i$ over $H$. $h\succ_i h'$ means $i$ prefers $h$ to $h'$ and we use $\succeq_i$ to represent the weak preference. Thus, we denote agent $i$'s private type as $\theta_i=(\succ_i,r_i)$ and $\theta = (\theta_1,\cdots,\theta_n)$ as the type profile of all agents. Let $\theta_{-i}$ be the type profile of all agents except for agent $i$, then $\theta$ can be written as $(\theta_i,\theta_{-i})$. Let $\Theta$ be the type profile space of all agents. Similarly, we have $\Theta = (\Theta_i,\Theta_{-i})$.

In the housing market problem, the goal is to construct a matching following the principle that each agent will exchange their endowments to get better allocations. Assume there is a trusted center to enforce the execution of a matching mechanism, each agent is required to report her type (reporting neighbor set is treated as inviting neighbors in practice). We denote agent $i$'s reported type as $\theta'_i = (\succ'_i,r'_i)$, where $\succ'_i$ is the reported preference and $r'_i\subseteq r_i$ is the reported neighbor set. Let $\theta'=(\theta'_1,\cdots,\theta'_n)$ be the reported type profile of all agents. 

\begin{definition}
In a housing market problem, a matching mechanism is defined by an allocation policy $\pi = (\pi_i)_{i\in N}$, where $\pi_i:\Theta \to H$ satisfies for all $\theta \in \Theta$, $\pi_i(\theta) \in H$, and $\pi_i(\theta) \neq \pi_{j\neq i}(\theta)$.
\end{definition}

In the networked setting, we assume only a subset of the agents are initially in the game and the matching mechanism can incentivize agents to diffuse the information thus enlarging the matching. Without loss of generality, suppose an agent set $N_0 \subseteq N$ contains the initial participants in the market. The others need the existing participants' invitation to join the game. Since the invitation is modeled as reporting neighbors, we define the qualified participants by their reported types.

For a given reported type profile $\theta'$, we generate a directed graph $G(\theta')=(N(\theta'),E(\theta'))$, where edge $\langle i,j\rangle \in E(\theta')$ if and only if $j \in r_i'$. Under $\theta'$, we say agent $i$ is qualified if and only if there is a path from any agent in $N_0$ to $i$ in $G(\theta')$. That is, $i$ can connect to the agent set $N_0$ by an invitation chain. Let $Q(\theta')$ be the set of all qualified agents under $\theta'$. In the networked housing market problem, diffusion matching mechanisms can only use $Q(\theta')$.

\begin{definition}
In a networked housing market problem, a diffusion matching mechanism is a matching mechanism $\pi = (\pi_i)_{i\in N}$, such that for all reported type profile $\theta'$, it satisfies:
\begin{enumerate}
    \item for all unqualified agents $i\notin Q(\theta')$, $\pi_i(\theta') = h_i$.
    \item for all qualified agents $i\in Q(\theta')$, $\pi_i(\theta')$ is independent of the reports of all unqualified agents.
\end{enumerate}
\end{definition}

The difference between a diffusion matching and the matching defined in Definition 3.1 is that the participants can affect the qualification of other participants. If a participant changes her reported neighbor set, the qualified agent set may change. This is the challenge of this setting.

Next, we define two desirable properties for diffusion matching mechanisms: individual rationality and incentive compatibility. Intuitively, individual rationality requires that for each agent, reporting her type truthfully guarantees that she gets an item no worse than her own. For incentive compatibility, it means reporting type truthfully is a dominant strategy for each agent.

\begin{definition}[Individual Rationality (IR)]
A mechanism $\pi$ is individually rational if for all $i\in N $, all $\theta_i \in \Theta_i$, and all $\theta'_{-i}\in \Theta_{-i}$, we have $\pi_i(\theta_i,\theta_{-i}')\succeq_i h_i$.
\end{definition}

\begin{definition}[Incentive Compatibility (IC)]
A mechanism $\pi$ is incentive compatible if for all $i\in N$, all $\theta'_{-i}\in \Theta_{-i}$ and all $\theta_i, \theta'_i\in \Theta_i$, we have $\pi_i(\theta_i,\theta'_{-i})\succeq_i \pi_i(\theta'_i,\theta'_{-i})$.
\end{definition}

To evaluate the performance of a matching mechanism, an important metric is called Pareto optimality. A matching is Pareto optimal if no agent can improve her matching without others' allocation getting worse.

\begin{definition}[Pareto Optimality (PO)]
A mechanism $\pi$ is Pareto optimal if for all type profile $\theta$, there is no other allocation $\pi'(\theta)$ such that for each qualified agent $i\in Q(\theta)$, $\pi'_i(\theta) \succeq_i \pi_i(\theta)$, and there exists at least one qualified agent $j\in Q(\theta)$, $\pi'_j(\theta) \succ_j \pi_j(\theta)$.
\end{definition}

Another metric is stability. It requires no group of agents can deviate from the matching and match among themselves to make no one in the group worse off and at least one better off. 

\begin{definition}[Stability]
A mechanism $\pi$ is stable if for all type profile $\theta$, there is no other allocation $\pi'(\theta)$ such that there exists a group $S\subseteq Q(\theta)$ that satisfies $\forall i\in S, \pi'_i(\theta)\in H_S$, and $\forall i\in S, \pi'_i\succeq_i\pi_i(\theta)$ with at least one $j\in S, \pi'_j(\theta)\succ_j\pi_j(\theta)$.
\end{definition}

In summary, our model is a practical variant of the networked housing market model proposed in~\cite{DBLP:conf/atal/KawasakiWTY21}. In their paper, the authors designed matching on an asymmetric social network and proved the impossibility results of the compatibility of IC, PO and stability. In the next section, we will prove the impossibilities in our model and further construct the theoretical boundaries.

\section{The Boundaries}
In this section, we first prove that a diffusion matching mechanism cannot achieve Pareto optimality or stability together with IC. Next, we define the best optimality and stability notion compatible with IC and then construct the theoretical boundaries.

\begin{table}[t]
    % \small
	\caption{Desirable properties for diffusion matching mechanisms. IC, IR, stable-cc and optimal-cc are the boundaries.The implications between properties are proved in Section 4.3.}
	\label{tab:boundaries}
	\begin{tabular}{cccc}\toprule
		\textit{IR} & \textit{IC} & \textit{Stability} & \textit{Optimality} \\ \midrule
		\textcolor{red}{IR} & \textcolor{red}{IC} & Stable & PO \\
		 &  & $\downarrow$ & $\downarrow$ \\
		 &  & Stable-wcc & Optimal-wcc \\
	   &  & $\downarrow$ & $\downarrow$ \\
          &  & \textcolor{red}{Stable-cc} & \textcolor{red}{Optimal-cc} \\ \bottomrule
	\end{tabular}
\end{table}

\subsection{Impossibility Results}
% In this section, we prove that Pareto optimality and stability cannot coexist with enhanced incentive compatibility in the networked setting.

\begin{theorem}[Impossibility for PO, IC and IR]
No diffusion matching mechanism is PO, IC and IR.
\label{the:POxIC}
\end{theorem}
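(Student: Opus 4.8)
The plan is to exhibit a single small network and type profile on which no allocation can simultaneously satisfy the three properties, by arguing that IC forces the mechanism to behave in a way that contradicts PO (given IR). First I would take the initial set $N_0$ to be a single agent, say agent $1$, and build a tiny instance with two or three additional agents hanging off agent $1$. The key idea is to compare two reported type profiles: one in which agent $1$ invites a neighbor $2$ (so a profitable trade between $1$ and $2$ becomes available), and one in which agent $1$ does not invite $2$ (so agent $1$ is forced to keep $h_1$, or trade only within a smaller qualified set). IR pins down the outcome in the smaller market, and PO in the larger market forces agent $1$ to be matched to something strictly better than $h_1$; but then, by choosing agent $2$'s preferences adversarially, I can make inviting $2$ strictly \emph{hurt} agent $1$ under any PO allocation of the enlarged market, violating IC.

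Concretely, I would look for a profile where, in the large market, PO together with IR forces a particular trading cycle that excludes agent $1$ from her best option while some invited agent grabs it, whereas in the small market agent $1$ could have secured a better house. The cleanest version: have agent $1$'s favorite house be $h_3$ where $3$ is only reachable through $2$; PO in the full market will route $h_3$ to whoever values it via the top trading cycle, and by making agent $2$ also top-rank $h_3$ (with $2$ owning something $3$ wants), the unique PO-consistent cycle gives $h_3$ to $2$, leaving $1$ worse off than in the market where $1$ only invited some other neighbor. The crucial step is verifying that \emph{every} PO and IR allocation on the large instance gives agent $1$ something she ranks below what she can guarantee by a different (truthful but strategically chosen) report — i.e., the contradiction must hold against all such allocations, not just one, since PO does not pin down a unique outcome.

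The main obstacle I anticipate is exactly this quantification over all PO allocations: I must design the preferences tightly enough that PO + IR leaves essentially no freedom in agent $1$'s assignment across the two reports, so that the IC inequality $\pi_1(\theta_1,\theta'_{-i}) \succeq_1 \pi_1(\theta'_1,\theta'_{-i})$ is provably violated. I expect to handle this by making the relevant instances small (three to four agents) and doing a short case analysis on which houses can go to agent $1$ under PO. A secondary point to be careful about is the diffusion structure: I must ensure that the "small market" report is a legitimate report (agent $1$ simply reports a subset $r'_1 \subsetneq r_1$ of her neighbors, which is permitted), and that the agents whose qualification changes are handled consistently with the definition of a diffusion matching mechanism. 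Once the instance is fixed, the contradiction itself should be a one-line comparison of the two forced outcomes for agent $1$.
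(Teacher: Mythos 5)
Your plan has a genuine gap at exactly the step you flag as "crucial": you want a single instance in which \emph{every} PO and IR allocation of the enlarged market gives the inviter (agent 1) strictly less than the (pinned-down) outcome she gets after hiding a neighbor. Such an instance cannot exist. Whatever allocation $\mu$ a PO--IR mechanism would produce on the reduced report (with the disqualified agents keeping their endowments), $\mu$ is itself a feasible, IR allocation of the enlarged market; if it is not Pareto optimal there, it can be Pareto-completed among the qualified agents, and the resulting allocation is still IR and is weakly better for agent 1 than $\mu$. Hence the enlarged market always admits a PO and IR allocation that is at least as good for the inviter as anything she can secure by under-reporting her neighbors, so a mechanism can always "protect" the single deviating inviter on a fixed pair of profiles. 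Your concrete sketch shows the same problem from another angle: if agent 3 is reachable only through agent 2, then hiding 2 also removes $h_3$ from the market, and the fallback trade (e.g.\ with an agent 4) that agent 1 could realize in the small market remains available to the mechanism in the large market, so no strict loss from inviting can be forced.

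What the paper does instead --- and what any correct argument here needs --- is a case analysis over which PO--IR allocation the mechanism selects under the \emph{truthful} profile, with \emph{different} agents deviating in the different cases. In the paper's three-agent example there are exactly two PO--IR allocations; if the mechanism picks the one favoring agent 1 (the swap $1\leftrightarrow 3$), then agent 2 profitably hides agent 3 (a neighbor misreport), and if it picks the one favoring agent 2 (the swap $1\leftrightarrow 2$), then agent 1 profitably misreports her \emph{preference} so that PO and IR force the $1\leftrightarrow 3$ swap. Your proposal considers only the inviter's neighbor-hiding deviation and never a second deviating agent or a preference misreport, so the contradiction you aim for cannot be closed; to repair it you would need to argue against each possible choice of the mechanism separately, as above.
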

\begin{proof}
In the example shown in Figure \ref{fig:POstaxIC}, suppose $N_0=\{1,2\}$, the Pareto optimal and IR allocations are $\pi_1=(h_3,h_2,h_1)$ and $\pi_2=(h_2,h_1,h_3)$. If a mechanism allocates $\pi_1$, agent 2 can misreport her neighbor set as $r'_2=\{1\}$. Under agent 2’s misreport, agent 3 cannot join the game and the only PO and IR allocation will be $\pi_2=(h_2,h_1,h_3)$, and 2 gets a better allocation compared to that in $\pi_1$. However, if the mechanism allocates $\pi_2$, agent 1 can misreport her preference as $h_3\succ'_1 h_1 \succ'_1 h_2$. In this way, the only PO and IR allocation is $\pi_1=(h_3,h_2,h_1)$, and agent 1 reaches a better allocation. Hence, no diffusion matching mechanism is PO, IC and IR. 
\end{proof}

\begin{figure}[ht]
\centering
\begin{minipage}[t]{0.45\linewidth}
\centering
\begin{tikzpicture}[scale=0.2, sibling distance=5em,
  every node/.style = {scale=0.65, shape=circle, draw, align=center},
    outline/.style={draw=#1,thick,fill=#1!100}]
  every draw/.style = {scale=1}
  \node[] (node1) at (0,0) {1};
  \node[] (node2) at (4,0) {2};
  \node[] (node3) at (0,4) {3};
  \draw[latex-latex] (node1)--(node2);
  \draw[latex-latex] (node2)--(node3);
\draw[latex-] (node1)--(node3);
\end{tikzpicture}
\end{minipage}
\caption{A social network example. Preferences are $h_3 \succ_1 h_2 \succ_1 h_1, \ \ \ h_1 \succ_2 h_2 \succ_2 h_3, \ \ \ h_1 \succ_3 h_3 \succ_3 h_2$.} %Double-arrows represent social connections.
\label{fig:POstaxIC}
\Description{Example for Theorem 4.1 and Theorem 4.2.}
\end{figure}

\begin{theorem}[Impossibility for stability and IC]
No diffusion matching mechanism is stable and IC.
\label{the:StaxIC}
\end{theorem}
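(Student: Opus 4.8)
The plan is to reuse the same three-agent network in Figure~\ref{fig:POstaxIC} and argue that any stable mechanism is forced, by stability constraints in two different reported profiles, into allocations between which some agent can profitably deviate. First I would observe that on the full network with $N_0 = \{1,2\}$, stability severely restricts the admissible allocation: the pair $\{1,2\}$ forms a complete component (they are neighbors), so if the mechanism ever gives agent $1$ something she likes less than $h_2$ while agent $2$ gets something she likes less than $h_1$, they would block by swapping; similarly the cycle $1 \to 3 \to 1$ (agent $1$ wants $h_3$, agent $3$ wants $h_1$) is a blocking coalition unless at least one of them is satisfied. Working through the strict preferences, the only stable and IR allocation on the full graph is $\pi^1 = (h_3, h_1, ?)$ — wait, we must check feasibility: if $1 \gets h_3$ and $2 \gets h_1$ then $3 \gets h_2$, but then $\{1,3\}$ does not block (agent $1$ has her top choice) and $\{1,2\}$ does not block (agent $2$ has her top choice), so this allocation is stable. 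The other candidate is $\pi^2 = (h_2, h_1, h_3)$, which is blocked by $\{1,3\}$ since both strictly improve ($h_3 \succ_1 h_2$ and $h_1 \succ_3 h_3$); so in fact on the full graph the unique stable allocation gives agent $2$ her top item $h_1$.

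\textbf{Key steps.} Next I would consider agent $2$'s misreport $r_2' = \{1\}$, which makes agent $3$ unqualified. Now only agents $1$ and $2$ are in the market, and the unique stable (indeed the only IR and non-wasteful) outcome is the swap $\pi_1(\theta_2', \theta_{-2}) = h_2$ for agent~$1$ and $h_1$ for agent~$2$. So far agent $2$ gets $h_1$ in both cases, so this misreport alone is not the lever. The real tension is on agent $1$'s side: I would instead fix the situation where the mechanism, on the truthful full-graph profile, must give agent $2$ her top item $h_1$ (as shown above), hence agent $1$ gets either $h_2$ or $h_3$. If agent $1$ gets $h_2$, consider her misreporting preferences to $h_3 \succ_1' h_1 \succ_1' h_2$: then on the resulting profile the pair $\{1,2\}$ blocks any allocation giving $1$ something worse than $h_1$ unless $2$ gets something she prefers to — here I need the blocking structure to pin the stable allocation to one giving agent $1$ the true-preference-better item $h_3$, exactly as in the proof of Theorem~\ref{the:POxIC}. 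Combining: truthful reporting gives agent~$1$ at most $h_2$ in some stable mechanism branch, while a suitable misreport forces $h_3 \succ_1 h_2$, contradicting IC.

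\textbf{Main obstacle.} The delicate point is that, unlike Pareto optimality, stability does not by itself uniquely determine the allocation on a given graph, so I cannot simply say ``the only stable allocation is $\pi^1$.'' I expect the main work to be a careful case analysis: enumerate all allocations consistent with stability on each relevant reported profile (the full graph, and the graph after agent~$2$ or agent~$1$ misreports), and show that in every combination there is a profitable deviation for agent~$1$ or agent~$2$. A cleaner route, which I would try first, is to piggyback on Theorem~\ref{the:POxIC}: every stable mechanism that is also IR restricts the outcome enough that the PO-style argument goes through verbatim — concretely, one shows the stable-and-IR allocations on Figure~\ref{fig:POstaxIC}'s network are exactly $\{\pi^1, \pi^2\}$ in the notation of Theorem~\ref{the:POxIC} (since any other allocation admits a blocking pair or cycle), and then the identical misreport chain — agent~$2$ hiding agent~$3$, agent~$1$ inflating $h_3$ — yields the contradiction. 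Verifying that ``stable $+$ IR $\Rightarrow$ outcome in $\{\pi^1,\pi^2\}$'' on this small instance is the one genuinely new computation, and it is a finite check over at most a handful of permutations.
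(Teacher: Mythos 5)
There is a genuine error in your stability analysis, and it derails the whole argument. On the full three-agent profile of Figure~\ref{fig:POstaxIC}, the allocation you declare stable, $(h_3,h_1,h_2)$, gives agent~$3$ the item $h_2$, which is her \emph{worst} item and in particular worse than her endowment $h_3$ (her preference is $h_1 \succ_3 h_3 \succ_3 h_2$). It is therefore blocked by the singleton coalition $\{3\}$ (take $S=\{3\}$, $\pi'_3=h_3\in H_S$), and it also fails the IR requirement you claim to be imposing; you only checked the coalitions $\{1,3\}$ and $\{1,2\}$. A complete check of the six permutations shows the unique stable allocation on the full graph is $(h_3,h_2,h_1)$: agents $1$ and $3$ trade, agent $2$ keeps $h_2$. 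Consequently your conclusion ``the unique stable allocation gives agent~$2$ her top item $h_1$'' is false, and the inference you draw from it — that agent~$2$'s neighbor-hiding misreport ``is not the lever'' — discards exactly the deviation that proves the theorem. The paper's proof is precisely this: under truthful reports the unique stable outcome gives agent~$2$ only $h_2$; if agent~$2$ reports $r'_2=\{1\}$, agent~$3$ is disqualified, the unique stable outcome on $\{1,2\}$ is the swap $(h_2,h_1,h_3)$, and agent~$2$ improves from $h_2$ to $h_1$, contradicting IC. (The paper uses $N_0=\{1\}$, but the same works with $N_0=\{1,2\}$ since agent~$3$ is only reachable through agent~$2$.)

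The alternative route you then pursue — a preference misreport by agent~$1$, borrowed from Theorem~\ref{the:POxIC} — cannot be repaired: under the unique stable allocation $(h_3,h_2,h_1)$ for the truthful profile, agent~$1$ already receives her favorite item $h_3$, so no misreport of hers can yield an IC violation, and no case analysis over ``stable mechanism branches'' is needed because stability pins the outcome uniquely. Your fallback claim that the stable-and-IR allocations are exactly $\{\pi_1,\pi_2\}$ of Theorem~\ref{the:POxIC} is also internally inconsistent: you yourself observe that $\pi_2=(h_2,h_1,h_3)$ is blocked by $\{1,3\}$ on the full graph, so it is not stable there. The lesson is that stability is \emph{stronger} than PO+IR here, which makes the impossibility argument simpler, not harder: one misreport by agent~$2$ suffices, and agent~$1$ plays no role.
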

\begin{proof}
Consider the example in Figure~\ref{fig:POstaxIC}, suppose $N_0=\{1\}$, the only stable allocations is $\pi_1=(h_3,h_2,h_1)$. However, agent 2 can misreport her neighbor set as $r'_2=\{1\}$, so agent 3 cannot join the game. In this way, the only stable allocation is $\pi_2=(h_2,h_1,h_3)$. This means agent 2 can misreport to improve her matching result, so stability is incompatible with IC.
\end{proof}

To seek achievable optimality and stability notions for IC diffusion matching mechanisms, we focus on a special graph structure: complete components, i.e., the agents are neighbors of each other. For Pareto optimality, we restrict that only fully connected agents can improve their matching. For stability, we constrain the agents who can deviate from the matching and swap among themselves to be fully connected.

\begin{definition}[Complete Component]
A connected directed graph $G=(V,E)$ is a complete component if for any two nodes $i,j\in V$, we have $\langle i,j\rangle \in E$. 
\end{definition}

The idea of the limitation comes from the reality that only a group of people who know each other have a higher possibility to communicate and negotiate for a better matching/trade. Besides, in the traditional housing market problem, since there are no constraints on social connections, agents can be viewed as fully connected. Then, any group of agents is a fully connected component.

Following this idea, we define optimality and stability under complete components: optimal-cc and stable-cc.

\begin{definition}[Optimality under Complete Components (Optimal-cc)]
A diffusion matching mechanism $\pi$ is optimal under complete components if for all type profiles $\theta$ and the allocation $\pi(\theta)$, there is no other allocation $\pi'(\theta)$ such that $\forall i \in N, \pi'_i(\theta) \succeq_i \pi_i(\theta)$ and $\exists j \in N, \pi'_j(\theta) \succ_j \pi_j(\theta)$ and agents $\{i\in N|\pi_i(\theta) \neq \pi'_i(\theta)\}$ forms a \textbf{complete component} in $G(\theta)$. 
\end{definition}

\begin{definition}[Stability under Complete Components (Stable-cc)]
A diffusion matching mechanism $\pi$ is stable under complete components if for all type profiles $\theta$ and the allocation $\pi(\theta)$, there is no agent set $S\subseteq N$ (with item set $H_S\subseteq H$) that forms a \textbf{complete component} in $G(\theta)$, and another allocation $\pi'(\theta)$ with $\forall i \in S, \pi'_i(\theta) \in H_S$ such that $\forall i \in S, \pi'_i(\theta)\succeq_i\pi_i(\theta)$ and $\exists j \in S, \pi'_j(\theta)\succ_j\pi_j(\theta)$.
\end{definition}

\subsection{Stable-cc and Optimal-cc are the Boundaries}
Stable-cc and optimal-cc pose strict constraints on the agents who can improve by exchanging internally. Hence, it is worth investigating whether we can relax this limitation to get stronger notions. For instance, can we allow any connected group (no need to be a complete component) to improve their matching? Following this idea, we define stronger notions based on weakly complete components, but they are not compatible with IC.

\begin{definition}[Weakly Complete Component]
A connected directed graph $G=(V,E)$ is a weakly complete component if there exists at most one pair of nodes $i,j\in V$ such that $\langle i,j\rangle\notin E$.
\end{definition}

\begin{definition}[Optimality under Weakly Complete Components (Optimal-wcc)]
A diffusion matching mechanism $\pi$ is optimal under weakly complete components if for all type profiles $\theta$ and the allocation $\pi(\theta)$, there is no other $\pi'(\theta)$ such that $\forall i \in N, \pi'_i(\theta) \succeq_i \pi_i(\theta)$ and $\exists j \in N, \pi'_j(\theta) \succ_j \pi_j(\theta)$ and agents $\{i\in N|\pi_i(\theta) \neq \pi'_i(\theta)\}$ forms a \textbf{weakly complete component} in $G(\theta)$. 
\end{definition}

Next, we prove that no IC and IR diffusion mechanisms can be optimal-wcc.

% optimal-wcc graph
\begin{figure}[t]
\centering
\subfigure{
\centering
\begin{minipage}[t]{1\linewidth}
\centering
\begin{tikzpicture}[scale=0.2, sibling distance=5em,
  every node/.style = {scale=0.65, shape=circle, draw, align=center},
    outline/.style={draw=#1,thick,fill=#1!100}]
  every draw/.style = {scale=1}

  \node[] (node1) at (0,0) {1};
  \node[] (node2) at (4,0) {2};
  \node[] (node3) at (0,4) {3};
  \node[] (node4) at (4,-4) {4};
  \draw[latex-latex] (node1)--(node3);
  \draw[latex-latex] (node1)--(node2);
  \draw[latex-] (node1)--(node4);
  \draw[latex-] (node2)--(node3);
  \draw[latex-latex] (node2)--(node4);

  \draw[-latex,red] (node1).. controls(1,-3) ..(node4);
  \draw[-latex,red] (node2).. controls(3,3) ..(node3);
  \draw[-latex,red] (node3).. controls(-1,2) ..(node1);
  \draw[-latex,red] (node4).. controls(3,-1) ..(node1);
  \draw[-latex,red,dashed] (node1).. controls(2,1) ..(node2);
  
  \node[rectangle,color=black,draw=none] (node002) at (2,-6) {(a)};

%__________________________________________________________________________________________________

%  \draw[-latex,very thick] (-9,0)--(9,0);
%  \draw[-latex,very thick] (-9,-5)--(4,-15);

  \node[] (node1a) at (14,0) {1};
  \node[] (node2a) at (18,0) {2};
  \node[] (node3a) at (14,4) {3};
  \node[dashed] (node4a) at (18,-4) {4};
  \draw[latex-latex] (node1a)--(node2a);
  \draw[latex-] (node2a)--(node3a);
  \draw[latex-latex] (node3a)--(node1a);

  %\draw[-latex,red] (node1a).. controls(-4,-13) ..(node4a);
  \draw[-latex,red] (node2a).. controls(17,3) ..(node3a);
  \draw[-latex,red] (node3a).. controls(13,2) ..(node1a);
  %\draw[-latex,red] (node4a).. controls(-2,-11) ..(node1a);
  \draw[-latex,red,dashed] (node1a).. controls(16,1) ..(node2a);
  
  \node[rectangle,color=black,draw=none] (node002) at (16,-6) {(b)};

%__________________________________________________________________________________________________

  \node[] (node1b) at (28,0) {1};
  \node[] (node2b) at (32,0) {2};
  \node[dashed] (node3b) at (28,4) {3};
  \node[] (node4b) at (32,-4) {4};
%  \draw[latex-latex] (node1b)--(node3b); 
  \draw[latex-latex] (node1b)--(node2b);
  \draw[latex-] (node1b)--(node4b);
%  \draw[latex-] (node2b)--(node3b);
  \draw[latex-latex] (node2b)--(node4b);

  \draw[-latex,red] (node1b).. controls(29,-2) ..(node4b);
%  \draw[-latex,red] (node2b).. controls(12,-12) ..(node3b);
%  \draw[-latex,red] (node3b).. controls(8,-13) ..(node1b);
  \draw[-latex,red] (node4b).. controls(31,-1) ..(node1b);
  \draw[-latex,red,dashed] (node1b).. controls(30,1) ..(node2b);
  
  \node[rectangle,color=black,draw=none] (node002) at (30,-6) {(c)};

\end{tikzpicture}
\end{minipage}
}
\subfigure{
\centering
\begin{minipage}[ht]{1\linewidth}
\centering
\small
\begin{tabular}{cc}\toprule
    \textit{Example} & \textit{Optimal-wcc Allocation}\\ \midrule
    (a) & $\pi_a=(h_4,h_2,h_3,h_1)$, $\pi_b=(h_2,h_3,h_1,h_4)$\\
    (b) & $\pi_c=(h_2,h_3,h_1,h_4)$\\
    (c) & $\pi_d=(h_4,h_2,h_3,h_1)$\\ \bottomrule
\end{tabular}
\end{minipage}
}
\caption{A counterexample for the coexistence of optimal-wcc, IR, and IC. Preferences are $h_4 \succ_1 h_2 \succ_1 h_1, \ \ \ h_3 \succ_2 h_2, \ \ \ h_1 \succ_3 h_3, \ \ \ h_1 \succ_4 h_4$. Agents 1,2 are initial players in the matching. The red solid arrows represent favorite pointing. The red dashed arrows represent the second favorite pointing. The dashed agents are unqualified, so they are allocated their endowments.}
\label{fig:counterexample}
\Description{A counterexample for the coexistence of optimal-wcc, IR, and IC.}
\end{figure}

\begin{theorem}
No IC and IR diffusion one-sided matching mechanism is optimal-wcc.
\label{the:optwcc}
\end{theorem}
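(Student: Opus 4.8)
The plan is to give a proof by contradiction that exhibits a single network instance on which no mechanism can simultaneously be IC, IR, and optimal-wcc, mirroring the structure of the proofs of Theorems~\ref{the:POxIC} and~\ref{the:StaxIC} but using the richer four-agent example in Figure~\ref{fig:counterexample}. First I would fix the instance: agents $1,2$ are the initial participants $N_0=\{1,2\}$, the reported graph when everyone is truthful is the one in panel (a), and the preferences are those listed in the caption. The first step is to pin down, for each of the relevant subnetworks (a), (b), (c) obtained by agents $1$ and $2$ withholding invitations, exactly which allocations are optimal-wcc and IR; the table in the figure already claims these, so I would verify that in (a) only $\pi_a=(h_4,h_2,h_3,h_1)$ and $\pi_b=(h_2,h_3,h_1,h_4)$ survive, in (b) (agent $4$ absent) only $\pi_c=(h_2,h_3,h_1,h_4)$ survives, and in (c) (agent $3$ absent) only $\pi_d=(h_4,h_2,h_3,h_1)$ survives. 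The key point is that in the full network there are exactly two admissible outcomes, and each one is ``escapable'' by some agent via an invitation deviation into a subnetwork where the unique admissible outcome is strictly better for that agent.

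The second step is the case analysis on what an IC, IR, optimal-wcc mechanism $\pi$ does on the truthful profile of instance (a). Case 1: $\pi$ outputs $\pi_a=(h_4,h_2,h_3,h_1)$, so agent $2$ receives $h_2$. Then agent $2$ deviates by reporting $r_2'=\{1\}$ (not inviting $4$), producing subnetwork (b); there the unique optimal-wcc and IR allocation is $\pi_c=(h_2,h_3,h_1,h_4)$, giving agent $2$ the house $h_3$, and since $h_3\succ_2 h_2$ this is a profitable deviation, contradicting IC. Case 2: $\pi$ outputs $\pi_b=(h_2,h_3,h_1,h_4)$, so agent $1$ receives $h_2$, her second choice. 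Then agent $1$ deviates by reporting $r_1'=\{2\}$ (not inviting $3$), producing subnetwork (c); there the unique optimal-wcc and IR allocation is $\pi_d=(h_4,h_2,h_3,h_1)$, giving agent $1$ the house $h_4$, her top choice, so again IC is violated. Since on instance (a) any optimal-wcc outcome must be one of these two, both cases lead to a contradiction, and no IC, IR, optimal-wcc diffusion mechanism can exist.

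I would need to be careful about two subtleties. First, I must use the diffusion-mechanism property: when agent $2$ withholds the invitation to $4$, agent $4$ becomes unqualified, and by definition $\pi_4=h_4$ and the allocation to the remaining qualified agents is independent of $4$'s (now irrelevant) report — this is exactly why the ``reduced'' instances (b) and (c) are the right objects to compare against. Second, I must justify the uniqueness claims: in instance (b) the agents $1,2,3$ form a complete component (it is the three-cycle on mutually-connected agents), so optimal-wcc there degenerates to ordinary Pareto optimality on $\{1,2,3\}$, and among IR allocations the top-trading-cycle outcome $\pi_c$ is the unique Pareto optimal one; a similar argument handles instance (c) with the complete component $\{1,2,4\}$. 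The main obstacle is precisely this bookkeeping — enumerating the IR allocations in each reduced instance and checking optimal-wcc/Pareto-optimality rules out every alternative, so that each case of the dichotomy really does force a strictly profitable misreport. Once the uniqueness is nailed down, the IC contradiction in both branches is immediate.
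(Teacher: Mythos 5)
Your proposal is correct and is essentially the paper's own proof: the same Figure~\ref{fig:counterexample} instance, the same characterization of the admissible outcomes $\pi_a,\pi_b$ in (a) and the unique outcomes $\pi_c,\pi_d$ in the reduced networks, and the same two invitation deviations (agent 2 dropping 4, agent 1 dropping 3) yielding the IC contradiction in both cases. One tiny bookkeeping note: with the directed reported edges of panel (b), the set $\{1,2,3\}$ is only a \emph{weakly} complete component (the edge from 2 to 3 is missing), not a complete one, but this does not affect your uniqueness argument or the conclusion.
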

\begin{proof}
Let's consider the example given in Figure~\ref{fig:counterexample}. In figure (a), IR and optimal-wcc solutions are $\pi_a,\pi_b$. If we choose $\pi_a$, agent 2 cannot get $h_3$. So, agent 2 has the incentive to misreport and not invite agent 4. In this way, figure (a) transforms to figure (b) where the only IR and optimal-wcc solution is $\pi_c$. Then, agent 2 can get her favorite item $h_3$. To ensure IC, we should allocate $h_3$ to agent 2 when she truthfully reports (i.e., allocate $\pi_b$ in figure (a)). Next, we consider agent 1's incentive to misreport when we choose $\pi_b$ in figure (a). If agent 1 does not invite agent 3, figure (a) transforms to figure (c) where the only IR and optimal-wcc solution is $\pi_d$. Here, agent 1 can get her favorite item $h_4$. To ensure IC in figure (a), we should allocate $h_4$ to agent 1 when she truthfully reports (i.e., allocate $\pi_a$ in figure (a)). This is a contradiction, so optimal-wcc is not compatible with IC. Hence, the tightest optimality notion compatible with IC is optimal-cc.
\end{proof}

Similarly, for stability, we define a stronger notion called stable-wcc and prove that stable-wcc is not compatible with IC and IR. 

\begin{definition}[Stability under Complete Components (Stable-wcc)]
A diffusion matching mechanism $\pi$ is stable under weakly complete components if for all type profiles $\theta$ and the allocation $\pi(\theta)$, there is no agent set $S\subseteq N$ (with item set $H_S\subseteq H$) that forms a \textbf{weakly complete component} in $G(\theta)$, and another allocation $\pi'(\theta)$ with $\forall i \in S, \pi'_i(\theta) \in H_S$ such that $\forall i \in S, \pi'_i(\theta)\succeq_i\pi_i(\theta)$ and $\exists j \in S, \pi'_j(\theta)\succ_j\pi_j(\theta)$.
\end{definition}

\begin{theorem}
No IC and IR diffusion one-sided matching mechanism is stable-wcc.
\label{the:stablewcc}
\end{theorem}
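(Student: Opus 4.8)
The plan is to mirror the structure of the proof of Theorem~\ref{the:optwcc}, exhibiting a single network together with preferences on which the requirements of stable-wcc, IR, and IC collide. I would reuse (or lightly adapt) the four-agent example of Figure~\ref{fig:counterexample}, with agents $1,2$ the initial players and agents $3,4$ invitable only through $1$ and $2$ respectively. The first step is to enumerate, for the truthful profile (figure (a)) and for the two relevant deviations (figure (b), where agent~$2$ withholds agent~$4$, and figure (c), where agent~$1$ withholds agent~$3$), exactly which allocations survive the combined IR and stable-wcc constraints. The key observation making stable-wcc bite is that in the full graph the only pairs that are \emph{not} edges are few enough that large coalitions (e.g. $\{1,2,3\}$ or $\{1,2,4\}$, or even $\{1,2,3,4\}$ minus one missing edge) qualify as weakly complete components; so any allocation that leaves such a coalition with an unrealised internal improvement is blocked.

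The core of the argument is then the same ``squeeze'' used for optimal-wcc. First I would argue that in figure (b) stable-wcc plus IR forces agent~$2$ to receive $h_3$ (her top choice): the relevant blocking coalition on the remaining qualified agents $\{1,2,3\}$ forms a weakly complete component, and the cycle $1\to 3\to 1$ together with $2$'s claim on $h_3$ leaves only the allocation in which $2$ gets $h_3$ unblocked. Hence IC demands that agent~$2$ also get $h_3$ under truthful reporting in figure (a); this eliminates the allocation giving $1$ its top item $h_4$. Symmetrically, in figure (c) stable-wcc plus IR forces agent~$1$ to receive $h_4$, so IC demands $1$ gets $h_4$ in figure (a) too. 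These two IC-forced requirements are mutually exclusive in figure (a) (agent~$1$ and agent~$2$ cannot simultaneously be handed items that belong to the same trading cycle in the way required), giving the contradiction.

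I expect the main obstacle to be the bookkeeping in the first step: for stable-wcc one must check \emph{all} weakly complete subcomponents of each of the three graphs, not just the one obvious cycle, and verify that the allocation one wants to single out is genuinely unblocked by every such coalition (in particular by the $2$- and $3$-element coalitions, and by the full $4$-agent set when it happens to be weakly complete). This is more delicate than the optimal-wcc case, where only a single global improvement had to be ruled out. A secondary subtlety is confirming that the deviations in figures (b) and (c) are legitimate diffusion deviations — i.e. that withholding agent~$3$ or agent~$4$ only changes the qualification status of that one agent and leaves the induced subgraph on the remaining qualified agents exactly as drawn — which follows directly from the definition of a diffusion matching mechanism but should be stated. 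Once the case analysis is pinned down, the IC chaining and the final contradiction are immediate, and we conclude that stable-cc is the tightest stability notion compatible with IC.
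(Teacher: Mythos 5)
Your argument is sound, but it is not the route the paper takes. The paper's own proof of Theorem~\ref{the:stablewcc} is a one-shot version of the proof of Theorem~\ref{the:StaxIC}: it reuses the three-agent example of Figure~\ref{fig:POstaxIC} with $N_0=\{1\}$, observes that the unique stable-wcc allocation under truthful reports is $(h_3,h_2,h_1)$, and notes that agent~2 hiding agent~3 changes the unique stable-wcc allocation to $(h_2,h_1,h_3)$, a strict gain for agent~2 --- a single deviation, no squeeze, and IR is not even needed. Your proof instead transplants the two-deviation squeeze from Theorem~\ref{the:optwcc} onto the four-agent example of Figure~\ref{fig:counterexample}; it does work: in figure (b) the only stable-wcc and IR allocation is the three-cycle giving 2 her top item $h_3$, in figure (c) it is the $1\leftrightarrow 4$ swap giving 1 her top item $h_4$, so IC forces both $h_3$ to 2 and $h_4$ to 1 under truth, and then IR of agents 3 and 4 (both would have to receive $h_1$) yields the contradiction. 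What your route buys is a single counterexample that simultaneously kills optimal-wcc and stable-wcc; what it costs is exactly the bookkeeping you anticipate, plus an essential reliance on IR in the last step (without IR of 3 and 4, e.g.\ $(h_4,h_3,h_2,h_1)$ is feasible), whereas the paper's example needs none of this. Two small corrections to your sketch: in figure (b) the binding improvement for the coalition $\{1,2,3\}$ is the three-cycle $1\to 2\to 3\to 1$, not a ``cycle $1\to 3\to 1$'' (a $1$--$3$ swap is not even individually rational for agent 1); and the final ``mutually exclusive'' claim should be argued via IR of agents 3 and 4 explicitly rather than via the vague ``same trading cycle'' remark (note also that singleton coalitions already make stable-wcc imply IR for qualified agents, which you could invoke instead).
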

\begin{proof}
Consider the example in Figure~\ref{fig:POstaxIC}, suppose $N_0=\{1\}$, the only stable-wcc allocations is $\pi_1=(h_3,h_2,h_1)$. However, agent 2 can misreport her neighbor set as $r'_2=\{1\}$, so that agent 3 cannot join the game. In this way, the only stable-wcc allocation is $\pi_2=(h_2,h_1,h_3)$. This means agent 2 can misreport to improve her matching result, so it can be concluded that stable-wcc is incompatible with IC.
\end{proof}

Combining Theorem~\ref{the:optwcc} and Theorem~\ref{the:stablewcc}, we conclude that the best stability and optimality we can aim for an IC and IR diffusion matching mechanism are stable-cc and optimal-cc respectively.

\subsection{Implications between Properties}
In this section, we study the relationships between different stability and optimality notions. By definition, Pareto optimality implies optimal-wcc and further implies optimal-cc. Stability implies stable-wcc and further implies stable-cc. In the traditional setting, a stable matching mechanism naturally satisfies Pareto optimal.

\begin{theorem}
A mechanism $\pi$ is stable implies that $\pi$ is PO.
\end{theorem}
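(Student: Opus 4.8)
The plan is to prove the contrapositive: if a mechanism $\pi$ is not PO, then it is not stable. Suppose $\pi$ fails Pareto optimality at some type profile $\theta$. Then there exists an alternative allocation $\pi'(\theta)$ with $\pi'_i(\theta) \succeq_i \pi_i(\theta)$ for every qualified agent $i \in Q(\theta)$ and $\pi'_j(\theta) \succ_j \pi_j(\theta)$ for at least one qualified agent $j$. The goal is to extract from $\pi'$ a deviating coalition $S$ that witnesses a violation of stability, i.e.\ a set $S \subseteq Q(\theta)$ closed under $\pi'$ in the sense that $\pi'_i(\theta) \in H_S$ for all $i \in S$, with everyone in $S$ weakly better off and at least one strictly better off under $\pi'$.

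The key step is the coalition-extraction argument. Consider the permutation on $Q(\theta)$ induced by comparing $\pi(\theta)$ and $\pi'(\theta)$: since both are allocations (bijections between agents and their held items), the map sending each agent to the agent who owns the item she receives under $\pi'$ decomposes $Q(\theta)$ into cycles. Restrict attention to the cycle $C$ containing the strictly-improved agent $j$. By construction of a cycle, $\{\pi'_i(\theta) : i \in C\} = H_C$, so $C$ is self-contained. Every agent in $C$ is weakly better off (inheriting this from the Pareto comparison over all of $Q(\theta)$), and $j \in C$ is strictly better off. Define the deviating allocation on $C$ to agree with $\pi'$. Then $S = C$ together with this allocation violates the stability condition in Definition (Stability). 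I would also need to handle the trivial bookkeeping that unqualified agents are irrelevant: a non-PO witness only involves qualified agents, and the stability definition only quantifies over $S \subseteq Q(\theta)$, so the restriction causes no trouble.

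The main obstacle — really the only subtle point — is making sure the cycle $C$ genuinely contains a \emph{strictly} improved agent and that restricting $\pi'$ to $C$ preserves the weak-improvement property for \emph{all} of $C$. The first is immediate: put $j$ (a strictly improved agent guaranteed by the failure of PO) inside the cycle we pick. The second is also immediate because weak improvement was asserted for every qualified agent by the PO-violation hypothesis, and $C \subseteq Q(\theta)$; so no agent in $C$ can be made worse off. One should double-check the edge case where $\pi'$ differs from $\pi$ but the cycle through $j$ is a fixed point — this cannot happen, since if $j$'s cycle were a singleton then $\pi'_j(\theta) = \pi_j(\theta)$, contradicting $\pi'_j(\theta) \succ_j \pi_j(\theta)$. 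Hence $|C| \geq 2$ and $C$ is a legitimate nontrivial deviating coalition, completing the contrapositive and thus the theorem.
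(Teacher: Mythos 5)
Your contrapositive argument is essentially the paper's own proof: the paper likewise extracts from the PO-violating allocation $\pi'$ a coalition closed under $\pi'$ containing a strictly improved agent (by ``consecutively adding agents''), which is exactly the cycle of your ownership permutation through $j$. One small slip that does not affect correctness: a singleton cycle through $j$ means $\pi'_j(\theta)=h_j$, not $\pi'_j(\theta)=\pi_j(\theta)$, but this case is harmless anyway, since $S=\{j\}$ with $\pi'_j(\theta)=h_j\succ_j\pi_j(\theta)$ already constitutes a blocking coalition under the stability definition.
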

\begin{proof}
For all type profiles $\theta$, if a stable mechanism $\pi$ is not Pareto optimal, there exists a $\pi'$ such that $\forall i\in N$, $\pi'_i(\theta) \succeq_i \pi_i(\theta)$ and $\exists j\in N$, $\pi'_j(\theta) \succ_j \pi_j(\theta)$. In this case, we can construct a group of agents $S$ starting from $\{j\in N| \pi'_j(\theta) \succ_j \pi_j(\theta)\}$ and consecutively add other agents so that $\forall i \in S$, $\pi'_i(\theta)\in H_S$. So, $S$ can deviate from the matching together which violates stability. Thus, $\pi$ is stable implies that $\pi$ is Pareto optimal.
\end{proof}

Thus, when designing mechanisms, we can only focus on stability, and the optimality condition will be satisfied at the same time. However, in the networked setting, a similar implication relationship does not exist between the redefined notions.

\begin{theorem}
A mechanism $\pi$ is stable-cc does not imply that $\pi$ is optimal-cc.
\end{theorem}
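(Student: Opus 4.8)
The plan is to exhibit a single type profile together with a diffusion matching mechanism that is stable-cc on that profile (in fact we can make it stable-cc everywhere by using a standard neighbor-swap rule) but whose output admits a Pareto improvement within a complete component, hence violating optimal-cc. The natural candidate is a mechanism in the spirit of SWN (Swap With Neighbors): it detects and executes trading cycles only among neighbors, so any blocking coalition that forms a complete component would already have been cleared, giving stable-cc for free. The key is then to engineer a graph where the ``neighbor-only'' restriction forces the mechanism to miss a Pareto-improving reallocation that happens to take place entirely inside a complete component of $G(\theta)$.

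Concretely, I would first recall (or re-state in one line) that any mechanism which, on every profile, allocates by repeatedly executing top-trading cycles restricted to edges of $G(\theta)$ is stable-cc: if some $S$ forming a complete component could block, the agents in $S$ are mutually adjacent, so the cycle realizing the block is available to the mechanism and would have been taken, contradiction. Second, I would build a small instance — three or four agents suffice — in which the restricted-cycle mechanism outputs an allocation $\pi(\theta)$, yet there is an alternative allocation $\pi'(\theta)$ that weakly improves everyone, strictly improves someone, and changes the assignment only on a set of agents that is pairwise adjacent (a complete component), but where the improving ``cycle'' is not a neighbor cycle because the mechanism's tie-breaking / ordering led it down a different path. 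The cleanest way is to have two disjoint improving cycles competing: the mechanism picks one, but the union of the as-yet-unmatched agents still forms a complete component on which a further Pareto improvement is possible, so $\pi$ fails optimal-cc even though $\pi$ is genuinely stable-cc. I would then verify the two claims on the instance: (i) stable-cc holds — check every complete component $S$ and confirm no internal reallocation dominates $\pi(\theta)$; (ii) optimal-cc fails — point to the explicit $\pi'(\theta)$ and the explicit complete component of changed agents.

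The steps, in order: (1) define the witness mechanism (restricted top-trading-cycles / SWN-like), state and give the one-paragraph argument that it is stable-cc on all profiles — I may simply cite that SWN is stable-cc, which is claimed for SWN later in the paper, but to keep the proof self-contained I would inline the short argument; (2) present the graph $G(\theta)$, the endowments, and the preferences, chosen so the mechanism's execution is forced and computes a specific $\pi(\theta)$; (3) exhibit $\pi'(\theta)$ and check the three optimal-cc conditions — weak improvement for all, strict for one, and that $\{i : \pi_i(\theta)\neq\pi'_i(\theta)\}$ is a complete component in $G(\theta)$; (4) conclude that this mechanism separates the two notions, so stable-cc does not imply optimal-cc.

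The main obstacle is the simultaneous balancing act in the construction: I need the set of ``changed'' agents in the Pareto improvement to be a complete component (so that optimal-cc, not merely optimal-wcc or PO, is the property being violated), while also ensuring that this same improvement is genuinely unreachable by a neighbor-restricted mechanism — and, crucially, that on this instance \emph{no} stable-cc mechanism would stumble onto it, or at least that \emph{our chosen} mechanism does not, with its execution pinned down unambiguously by the preferences. Getting the adjacency pattern and the preference lists to cooperate — enough edges to make the improving set complete, but an execution order that still avoids it — is the delicate part; a three-agent example where the improving cycle is the full triangle but the mechanism is led to a different sub-trade, or a four-agent example with two competing triangles, is the kind of witness I would aim to nail down.
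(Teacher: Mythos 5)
Your high-level plan (exhibit a stable-cc mechanism such as SWN together with one profile on which its output admits a Pareto improvement confined to a complete component) is the right kind of argument, and it is in fact how the paper's example can be read: on the path $1-2-3-4$ with preferences $h_3 \succ_1 h_2 \succ_1 h_1$, $h_4 \succ_2 h_1 \succ_2 h_2$, $h_1 \succ_3 h_4 \succ_3 h_3$, $h_2 \succ_4 h_3 \succ_4 h_4$, the allocation $\pi=(h_2,h_1,h_4,h_3)$ (exactly what a neighbor-restricted cycle mechanism produces) is stable-cc, yet $\pi'=(h_2,h_4,h_1,h_3)$ strictly improves the complete component $\{2,3\}$. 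The problem is that your proposal stops short of producing any such witness: the entire content of the theorem is the explicit instance, and you explicitly defer it (``the kind of witness I would aim to nail down''), so as written the proof is incomplete.

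More importantly, the structural template you propose for the construction is off target, which is why the ``balancing act'' feels delicate to you. The reason stable-cc and optimal-cc can be separated is that an optimal-cc improvement on a complete component $B$ reallocates the items \emph{assigned} to $B$ by the mechanism, which may be endowments of agents outside $B$, whereas a stable-cc blocking coalition may only redistribute its own endowments $H_S$. In the working example, agents $2$ and $3$ receive $h_1$ and $h_4$ from the outside agents $1$ and $4$ and would like to swap those received items; this is invisible to stable-cc. Your sketch never isolates this distinction, and your ``cleanest way'' --- leftover unmatched agents forming a complete component that ``can further improve among themselves'' --- is, in its natural reading (trading their own endowments), exactly what your own step (1) argument shows a neighbor-cycle mechanism never leaves on the table; so that route is self-defeating. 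Likewise, a three-agent instance where ``the improving cycle is the full triangle'' cannot work for SWN, since on a complete triangle SWN coincides with TTC and is Pareto optimal there. Until you write down a concrete graph and preferences in which the improving component trades items fed in from outside it, and verify stable-cc by checking every complete component against its own endowment set, the non-implication is not established.
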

\begin{proof}
See the example in Figure~\ref{fig:figure1}, allocation $\pi=(h_2,h_1,h_4,h_3)$ is stable under complete components. However, there exists another allocation $\pi'=(h_2,h_4,h_1,h_3)$ where $\{2,3\}$ is a strictly better off group. Thus, $\pi$ is stable-cc does not imply that $\pi$ is optimal-cc.\end{proof}

% 1-2-3-4 
\begin{figure}[ht]
\centering
\begin{tikzpicture}[scale=0.2, sibling distance=5em,
  every node/.style = {scale=0.65, shape=circle, draw, align=center},
    outline/.style={draw=#1,thick,fill=#1!100}]
  every draw/.style = {scale=1}
  \node[] (node1) at (0,0) {1};
  \node[] (node2) at (4,0) {2};
  \node[] (node3) at (8,0) {3};
  \node[] (node4) at (12,0) {4};
  \draw[latex-latex] (node1)--(node2);
  \draw[latex-latex] (node2)--(node3);
  \draw[latex-latex] (node3)--(node4);
\end{tikzpicture}
\caption{Preferences are $h_3 \succ_1 h_2 \succ_1 h_1, \ \ \ h_4 \succ_2 h_1 \succ_2 h_2, \ \ \ h_1 \succ_3 h_4 \succ_3 h_3, \ \ \ h_2 \succ_4 h_3 \succ_4 h_4$. }
\label{fig:figure1}
\Description{Example for stable-cc cannot imply optimal-cc.}
\end{figure}

This vanish of implication also poses a greater challenge in designing a mechanism that reaches the new boundaries.
%theoretical boundaries in the network setting.

\section{The Mechanisms}
In this section, we introduce three IC mechanisms, Swap With Neighbors (SWN), Leave and Share (LS) and Connected Trading Cycles (CTC). By different techniques, they all eliminate competitions caused by invitation. That is to say, they guarantee that inviting more agents to join in only potentially improves one's matching. Formal proofs and complexity analysis are in Appendix B.

\begin{figure}[ht]
\centering
\begin{tikzpicture}[scale=0.3, sibling distance=5em,
  every node/.style = {scale=0.83, shape=circle, draw=none, align=center},
    outline/.style={draw=#1,thick,fill=#1!100}]
  every draw/.style = {scale=1}
  %pattern={north east lines}, 
  \draw[draw=black, thick] (5,6) rectangle (15,11.5);
  \draw[color=black,thick] (14,8.5) ellipse (6 and 1.5);
  \draw[color=black,thick] (6,8.5) ellipse (6 and 1.5);

  \node[rectangle,color=black] (node7) at (10,8.5) {\textcolor{red}{\textbf{CTC}}};
  \node[rectangle,color=black] (node9) at (13,9.25) {\textcolor{red}{LS}};
  \node[rectangle,color=black] (node10) at (13,7.75) {\textcolor{red}{SWN}};
  \node[rectangle,color=black] (node11) at (10,10.5) {IC, IR};
  \node[rectangle,color=black] (node12) at (17,8.5) {stable-cc};
  \node[rectangle,color=black] (node13) at (2.75,8.5) {optimal-cc};
  
\end{tikzpicture}
\caption{The properties of our three mechanisms. LS and SWN are stable-cc, IC and IR. CTC reaches the boundaries.}
\Description{A figure that shows the properties of our three mechanisms.}
\label{fig:Venn}
\end{figure}

\subsection{Swap With Neighbors}
First, we present SWN which is an intuitive extension of Top Trading Cycles. SWN only allows agents to choose favorite items from neighbors to achieve invitation incentives.
% incentive compatibility is guaranteed by trading cycles and restricting selection domains. 

\begin{definition}[Swap With Neighbors]
For a given $G(\theta')$, construct a directed graph by letting each agent point to her favorite item among herself and her neighbors remaining in the matching. There is at least one cycle. For each cycle, allocate the item to the agent who points to it and remove the cycle. Repeat the process until there is no agent left.  
\end{definition}

In SWN, agents' allocation is determined by trading cycles, which makes it strategy-proof for the preference report. Also, each agent can only get allocated a house from her neighbors, which means misreporting on one's neighbor set is not beneficial, so SWN satisfies IC. Besides, a trading cycle within neighbors can always be allocated by SWN, so SWN satisfies stable-cc. 

However, SWN achieves IC and stable-cc at the cost of efficiency, as it limits the matching options for agents. To improve the matching, we should provide more matching opportunities. But this also brings agents' strategic report aiming for a better match. Thus, we need a strategy-proof order to help with the mechanism design and ensure that truthfully reporting one's type is a dominant strategy for every agent. Our order is based on the shortest path length from the initial agent set to each agent. The reason is that non-initial agents require other agents' permission to join the matching, the layer structure of the social network should be kept properly for incentive compatibility. This order is applied in both LS and CTC.

% SWN achieves IC and stable-cc at the cost of efficiency, as it limits the matching options for agents. Our goal is to design a mechanism that preserves IC and Stable-CC while improving efficiency. To this end, we propose LS and CTC, which gurantees IC and Stable-CC as well as improves allocation. To ensure IC, LS also add restrictions on agents' selection space. CTC on the other hand, add connection conditions on the allowed trading cycles. 

% Since non-initial agents require other agents' permission to join the matching, the layer structure of the social network should be kept properly for incentive compatibility. As a result, we define a order according to each agent's distance towards the initial set, which cannot be controlled by the agents' report. This order is applied in both LS and CTC.

\begin{definition}
An ordering of agents is a one-to-one function $\mathcal {O}:\mathbb {N}^{+}\to N$, where agent $\mathcal {O}(t)$ is the $t^{th}$ agent in the ordering. For simplicity, we denote agent $i$'s order as $\mathcal{O}^{-1}(i)$. Agents in $\mathcal{O}$ are sorted in ascending order by the length of the shortest path from agent set $N_0$ to them. Especially, for any agent $i \in N_0$, its shortest path length from agent set $N_0$ is $0$. When multiple agents have the same length of the shortest path, we use a random tie-breaking. 
\label{def:order}
\end{definition}

% For simplicity, we also define that $\forall i \in N, \mathcal{O}^{-1}(i)=n, n \in \mathbb{N}^+$.

\subsection{Leave and Share}
% In SWN, agents' allocation is determined by trading cycles, which makes it strategy-proof for the preference report. Also, each agent can only get allocated a house from her neighbors, which means misreporting on one's neighbor set is not beneficial, so SWN satisfies IC. Besides, a trading cycle within neighbors can always be allocated by SWN, so SWN satisfies stable-cc. 

% However, SWN achieves IC and stable-cc at the cost of efficiency, as it limits the matching options for agents. Our goal is to design a mechanism that preserves IC and Stable-CC while improving efficiency. To this end, we propose a new mechanism called Leave and Share (LS), which satisfies IC, IR and stable-cc in all networks. 

Leave and Share uses SWN as a base and adds a natural sharing process to enlarge agents' selection space, trying to provide a better allocation. The mechanism consists of two steps: first, agents are matched and left by rounds in a protocol that resembles SWN under a strategy-proof order. This guarantees that inviters are not worse off. Second, we share the neighbors of the left agents in this round by connecting their neighbors to each other, thus their neighbors can have new neighbors in the next round. This dynamic neighbor set update comes naturally because a matched cycle does not care how the remaining neighbors will be matched. Also, their remaining neighbors and other agents cannot prevent this sharing.

\begin{figure}[ht]
\centering

\begin{tikzpicture}[scale=0.2, sibling distance=5em,
  every node/.style = {scale=0.65, shape=circle, draw, align=center},
    outline/.style={draw=#1,thick,fill=#1!100}]
  every draw/.style = {scale=1}
  % \node[] (node1) at (0,-3.5) {1};
  \node[] (node2) at (0,0) {1};
  \node[] (node3) at (4,0) {2};
  \node[] (node4) at (8,0) {3};
  \node[] (node5) at (12,0) {4};
  % \node[] (node6) at (12,-3.5) {6};
  % \draw[latex-latex] (node1)--(node2);
  \draw[latex-latex] (node2)--(node3);
  \draw[latex-latex] (node3)--(node4);
  \draw[latex-latex] (node4)--(node5);
  % \draw[latex-latex] (node5)--(node6);
  \draw[-latex,red] (node3).. controls(6,1) ..(node4);
  \draw[-latex,red] (node4).. controls(6,-1) ..(node3);
  \draw[-latex,red] (node2).. controls(4.5,2.5)..(node5);
  \draw[-latex,red] (node5).. controls(7.5,-2.5)..(node2);
  % \draw[-latex,red] (node1).. controls(6,-2) ..(node6);
  % \draw[-latex,red] (node6).. controls(6,-3.5) ..(node1);

\end{tikzpicture}

\caption{Preferences are $h_4 \succ_1 h_1 \succ_1 \cdots, \ \ \ h_3 \succ_2 h_2 \succ_2 \cdots, \ \ \ h_2 \succ_3 h_3 \succ_3 \cdots, \ \ \ h_1 \succ_4 h_4 \succ_4 \cdots$. 
%One allocation for the agents $1$ to $4$ is $( h_4,h_3,h_2,h_1)$.
}
% Preferences are $h_6 \succ_1 h_1 \succ_1 \cdots, \ \ \  h_5 \succ_2 h_2 \succ_2 \cdots, \ \ \ h_4 \succ_3 h_3 \succ_3 \cdots, \ \ \ h_3 \succ_4 h_4 \succ_4 \cdots, \ \ \ h_2 \succ_5 h_5 \succ_5 \cdots, \ \ \ h_1 \succ_6 h_6 \succ_6 \cdots$. One allocation for the agents $1$ to $6$ is $( h_6,h_5,h_4,h_3,h_2,h_1 )$.
\label{fig:value_ls}
\Description{Driving example that shows the value of \textit{Leave and Share}.}
\end{figure}

To see the value of our mechanism, consider the example given in Figure~\ref{fig:value_ls}, where only agents 2 and 3 can exchange with each other in SWN. The rest of the agents will end up with their own items. However, agents 2 and 3 will not block the exchange for agents 1 and 4 once they get their preferred items. After agents 2 and 3 are matched and \textbf{Leave}, we \textbf{Share} their remaining neighbors then agents 1 and 4 can swap. The process of Leave and Share is the name and core of our mechanism.

Before formalizing our mechanism, we introduce two notations to simplify our description.

\begin{definition}
Given a set $A \subseteq N$, we say $f_i(A)= j\in A$ is $i$'s favorite agent in $A$ if for any agent $k\in A, h_j\succeq'_i h_{k}$. 
\end{definition}

\begin{framed}
 \noindent\textbf{Leave and Share (LS)}
  
 \begin{enumerate}
  \item Initialize $N_{out}=\emptyset$ and an empty stack $S$. Define the top and bottom of $S$ as $S_{top}$ and $S_{bottom}$ respectively, and let $R_{i}=r_{i}'\cup \{ S_{bottom} , i\}$.
  \item While $N_{out} \neq N$:
  \begin{enumerate}
    \item Find the minimum $t$ such that $\mathcal{O}(t) \notin N_{out}$. Push $\mathcal{O}(t)$ into $S$.

    \item While $S$ is not empty:
        \begin{enumerate}
            \item While $f_{S_{top}}(R_{S_{top}}) \notin S$, push $f_{S_{top}}(R_{S_{top}})$ into $S$.
            \item Pop off all agents from $S_{top}$ to $f_{S_{top}}(R_{S_{top}})$, who already formed a trading cycle $C$ following their favorite agents. Allocate each agent $i\in C$ the item $h_{f_i(R_i)}$ . Add $C$ to $N_{out}^t$.
            \item Update the neighbor set of $C$'s remaining neighbors by removing $C$, i.e., for all $j\in \bigcup_{i\in C} r_i' \setminus N_{out}^t$, set $r_j' = r_j' \setminus C$. 
         \end{enumerate}
    \item Add $N_{out}^t$ to $N_{out}$. Let all remaining neighbors of $N_{out}^t$ connect with each other, i.e., they become neighbors of each other. That is, let $X = \bigcup_{i\in N_{out}^t} r_i' \setminus N_{out}^t$ and for all $j\in X$, set $r_j' = r_j' \cup X$. 
    \end{enumerate}
  
 \end{enumerate}
\end{framed}

In LS, we first define an order $\mathcal {O}$ which depends on each agent's shortest distance to the initial agent set. Under this order, the first while loop (step $2$) guarantees that the agent pushed into the stack is the remaining agent with the smallest order, and all agents are matched (including self-match) in the end. A new round begins each time the stack empties. 

In the Leave stage, each agent that is pushed into the stack pushes her (current) favorite neighbor into the stack (step (a)). If her favorite is already in the stack, we pop all the agents between herself and her favorite to form a trading cycle. Specially, we allow agents to choose the agent at the bottom of the stack as favorite, which leads to popping off all the agents in the stack (step (b)).

Once the stack is empty, the mechanism enters the Share stage and updates the neighbor set of the remaining agents (step (c)). All neighbors of the left agents become new neighbors to each other. In the next Leave stage, they can choose %their favorite neighbors 
in a larger neighbor set.

\subsection{Connected Trading Cycles}
In this section, we present our mechanism called Connected Trading Cycles (CTC) which takes both the trading cycles and agents' connections into consideration. CTC is IR, IC, optimal-cc, and stable-cc, which is the best we can get in the network setting. To begin with, we give a few definitions to serve the description of our mechanism.

\begin{definition}
Given a reported type profile $\theta'$, we generate a directed graph $F(\theta')$, in which each qualified agent has exactly one outgoing edge pointing to a qualified agent in $G(\theta')$. An edge $\langle i,j\rangle$ in $F(\theta')$ means agent $i$ likes agent $j$'s item the most among all the qualified agents' items ($i,j$ can be the same agent). We name $F(\theta')$ as the favorite pointing graph for $\theta'$. 
\end{definition}

There is at least one cycle in $F(\theta')$, and we formalize the definition of a cycle as below.

\begin{definition}
A cycle in $F(\theta')$ is an agent sequence $C_m = \{c_1,\dots,c_m\}$ such that $\forall i\in \{1,\dots,m-1\}$, $c_i$ points to $c_{i+1}$, and $c_m$ points to $c_1$ in the favorite pointing graph $F(\theta')$.  
\end{definition}

% 重复边和不IC的关系，直觉解释

Since the optimality and stability boundaries are constructed on complete components, it is natural to let cycles build on complete components to trade. However, if we only allow cycles formed by complete components to get traded, the performance is not better than SWN or LS (it meets stable-cc, but far from optimal-cc). Thus, we also consider cycles formed by connected components. Given that optimal-c is not compatible with IC, it is clear that not all such cycles can get traded. Therefore, we should further distinguish these cycles and identify the ones compatible with IC. This detection also makes our CTC description complex.

\begin{figure}[ht]
\centering
\begin{tikzpicture}[scale=0.2, sibling distance=5em,
  every node/.style = {scale=0.65, shape=circle, draw, align=center},
    outline/.style={draw=#1,thick,fill=#1!100}]
  every draw/.style = {scale=1}
  \node[] (node4) at (0,0) {4};
  \node[] (node2) at (4,0) {2};
  \node[] (node1) at (8,0) {1};
  \node[] (node3) at (12,0) {3};
  \node[] (node5) at (16,0) {5};
  \draw[latex-latex] (node4)--(node2);
  \draw[latex-latex] (node2)--(node1);
  \draw[latex-latex] (node1)--(node3);
  \draw[latex-latex] (node3)--(node5);
  \draw[-latex,red,dashed] (node1).. controls(6,2) ..(node2);
  \draw[-latex,red] (node3).. controls(6,3) ..(node4);
  \draw[-latex,blue] (node1).. controls(12,2) ..(node5);
  \draw[-latex,blue] (node5).. controls(14,-2) ..(node3);
  \draw[-latex,blue,dashed] (node3).. controls(10,-2) ..(node1);
  \draw[-latex,red] (node4).. controls(4,-2) ..(node1);
  \draw[-latex,red] (node2).. controls(8,-2) ..(node3);

  \draw[-latex] (4,-2)--(0,-5);
  \node[draw=none,font=\fontsize{12}{6}\selectfont] (node20) at (-4.5,-3) {3 does not invite 5};

  \node[] (node04) at (-12,-7) {4};
  \node[] (node02) at (-8,-7) {2};
  \node[] (node01) at (-4,-7) {1};
  \node[] (node03) at (0,-7) {3};
  \node[dashed] (node05) at (4,-7) {5};
  \draw[latex-latex] (node04)--(node02);
  \draw[latex-latex] (node02)--(node01);
  \draw[latex-latex] (node01)--(node03);
%  \draw[latex-latex] (node03)--(node05);
  \draw[-latex,red,dashed] (node01).. controls(-6,-5) ..(node02);
  \draw[-latex,red] (node03).. controls(-6,-4) ..(node04);
%  \draw[-latex,blue] (node01).. controls(4,-3) ..(node05);
%  \draw[-latex,blue] (node05).. controls(6,-7) ..(node03);
%  \draw[-latex,blue,dashed] (node03).. controls(2,-7) ..(node01);
  \draw[-latex,red] (node04).. controls(-8,-9) ..(node01);
  \draw[-latex,red] (node02).. controls(-4,-9) ..(node03);

  \draw[-latex] (12,-2)--(16,-5);
  \node[draw=none,font=\fontsize{12}{6}\selectfont] (node21) at (20.5,-3) {1 does not invite 2};

  \node[dashed] (node14) at (11,-7) {4};
  \node[dashed] (node12) at (15,-7) {2};
  \node[] (node11) at (19,-7) {1};
  \node[] (node13) at (23,-7) {3};
  \node[] (node15) at (27,-7) {5};
%  \draw[latex-latex] (node14)--(node12);
%  \draw[latex-latex] (node12)--(node11);
  \draw[latex-latex] (node11)--(node13);
  \draw[latex-latex] (node13)--(node15);
%  \draw[-latex,red] (node12).. controls(10,-3) ..(node11);
%  \draw[-latex,red] (node14).. controls(10,-1) ..(node13);
  \draw[-latex,blue] (node11).. controls(23,-5) ..(node15);
  \draw[-latex,blue] (node15).. controls(25,-9) ..(node13);
  \draw[-latex,blue,dashed] (node13).. controls(21,-9) ..(node11);
%  \draw[-latex,red, dashed] (node11).. controls(8,-7) ..(node14);
%  \draw[-latex,red] (node13).. controls(12,-7) ..(node12);
  
\end{tikzpicture}
\caption{Preferences are $h_5 \succ_1 h_2 \succ_1 h_1, \ \ \ h_3 \succ_2 h_2, \ \ \ h_4 \succ_3 h_1 \succ_3 h_3, \ \ \ h_1 \succ_4 h_4, \ \ \ h_3 \succ_5 h_5$. The solid arrows represent favorite pointing. The dashed arrows represent second favorite pointing. The dashed agents are unqualified agents
% , they are allocated their endowments
. }
\label{fig:figure_cycle}
\Description{A motivated example for \textbf{connected cycles}.}
\end{figure}

Here comes a motivated example. The social network is shown in Figure~\ref{fig:figure_cycle} and agent 1 is the initial agent. If agent 1 does not invite agent 2, the connected component $\{1,3,5\}$ will form the blue cycle ($C_3=\{1,5,3\}$). On the other hand, if agent 3 does not invite agent 5, the connected component $\{4,2,1,3\}$ will form the red cycle ($C_4=\{1,2,3,4\}$). If both cycles are allowed, IC cannot hold since both agent 1 and agent 3 have the incentive to not invite a neighbor and form the cycle in favor of herself. Thus, an IC mechanism can only allow at most one of these two cycles to get traded. To distinguish the two cycles, we pay attention to each agent's pointing. In the blue cycle, each agent has an exclusive path to reach her pointing: $\textbf{1}\to 3,\textbf{3}\to 5, \textbf{5}\to 3\to 1$. However, in the red cycle, $\textbf{2}\to 1\to 3$ and $\textbf{4}\to 2\to 1$ have a shared edge $2\to 1$. Similarly, $\textbf{1}\to 2$ and $\textbf{3}\to 1\to 2\to 4$ have a shared edge $1\to 2$. Recall that in the cycle formed by a complete component (which is always allowed to get traded), each agent has an exclusive path to her pointing. Intuitively, the exclusive path can be regarded as an insurance for invitation incentives (no agent wants to influence other's pointing by not inviting a neighbor). Following this idea, we design an algorithm to detect cycles in which each agent has an exclusive path to her pointing.

We also needs a strategy-proof order of the agents to determine who can be matched first.

% \begin{definition}
% An order of agents is a one-to-one function $\mathcal{O}:\mathbb{N}^{+}\to N$, where agent $\mathcal{O}(i)$ is the $i^{th}$ agent in the order. Agents in $\mathcal{O}$ are sorted in ascending order by the length of the shortest path from agent set $N_0$ to them. Especially, for any agent $i\in N_0$, its shortest path length is $0$. When multiple agents have the same length of the shortest path, we use a random tie-breaking. 
% \end{definition}

\begin{framed}
\noindent\textbf{Path Detection}

\noindent Input: a cycle $C_m$, a reported type profile $\theta'$, a favorite pointing graph $F(\theta')$ and social network $G(\theta')$.
\begin{enumerate}
    \item [a.] Find a minimum connected component $T$ in $G(\theta')$ where $C_m\subseteq T$ and $\forall i\in T$, let $j$ be $i$'s pointing in $F(\theta')$, we have $j\in T$. Note that if the cycle $C_m$ form a connected component in $G(\theta')$, then $T=C_m$. If there is no such $T$ let $T = \emptyset$.
    \item [b.] Construct a subgraph $G_T$ with the reported type profile of all agents in $T$.
    \item [c.] If agent $i\in T$ points to herself in $F(\theta')$, add a mark $i$ on all the outgoing edges of $i$ in $G_T$.
    \item [d.] Construct a shortest path set $SP=\{sp_i|i\in T\}$ in $G_T$, where $sp_i$ is the shortest path from agent $i\in T$ to her pointing in $F(\theta')$. While $SP\neq \emptyset$, execute the following,
    \begin{enumerate}
        \item [i.] Sort $SP$ by the ascending order of path length. If $sp_i$ and $sp_j$ have the same length, sort them by the agents ascending order $\mathcal{O}^{-1}(i)$ and $\mathcal{O}^{-1}(j)$.
        \item [ii.] Remove the first $sp_i$ in $SP$. Add a mark $i$ on every edge on path $sp_i$ in $G_T$. If $sp_i$ contains marks other than $i$, and $i$ has another path to her pointing, add the next shortest path $sp'_i$ for $i$ to $SP$.
    \end{enumerate}
\end{enumerate}
\noindent Output: an agent set $T$ and the marked subgraph $G_T$.
\end{framed}

\begin{figure}[ht]
\centering
\subfigure[]{
\centering
\begin{minipage}[t]{0.45\linewidth}
\centering
\begin{tikzpicture}[scale=0.2, sibling distance=0em,
  every node/.style = {scale=0.65, shape=circle, draw, align=center},
    outline/.style={draw=#1,thick,fill=#1!100}]
  every draw/.style = {scale=2}
  %\node[line width=1pt] (node1) at (4,-4) {1};
  %\node[] (node2) at (0,-4) {2};
  %\node[dashed,fill=black!10] (node7) at (16,0) {7};
  %\draw[very thin] (node10)--(node11);
  \node[] (node5) at (3,-3) {5};
  \node[] (node4) at (9,-3) {4};
  \node[] (node6) at (15,-3) {6};
  \node[] (node2) at (0,2) {2};
  \node[] (node1) at (6,2) {1};
  \node[] (node3) at (12,2) {3};
  \draw[latex-latex,very thin] (node2)--(node1);
  \draw[latex-latex,very thin] (node1)--(node3);
  \draw[latex-latex,very thin] (node1)--(node4);
  \draw[latex-latex,very thin] (node3)--(node4);
  \draw[latex-latex,very thin] (node4)--(node5);
  \draw[latex-latex,very thin] (node6)--(node4);

  \draw[-latex,red] (node2).. controls(3.5,1) ..(node4);
  \draw[-latex,red] (node4).. controls(5.5,-2) ..(node2);
  \draw[-latex,red] (node1).. controls(5.5,-0.5) ..(node5);
  \draw[-latex,red] (node5).. controls(3.5,-0.5) ..(node1);
  \draw[-latex,red] (node3).. controls(9,3) ..(node1);
  \draw[-latex,red] (node6).. controls(14,-0.5) ..(node3);

  % \draw[dashed,color=black] (1,-4.5) -- (4,-4.5) -- (8,3.5) -- (5,3.5) -- (1,-4.5);
\end{tikzpicture}
%-{Latex[length=2mm]},blue,line width=1.2pt
%-latex,red,line width=0.3pt

\end{minipage}
}
\subfigure[]{
\centering
\begin{minipage}[t]{0.45\linewidth}
\centering
\begin{tikzpicture}[scale=0.2, sibling distance=0em,
  every node/.style = {scale=0.65, shape=circle, draw, align=center},
    outline/.style={draw=#1,thick,fill=#1!100}]
  every draw/.style = {scale=2}
  %\node[line width=1pt] (node1) at (4,-4) {1};
  %\node[] (node2) at (0,-4) {2};
  %\node[dashed,fill=black!10] (node7) at (16,0) {7};
  %\draw[very thin] (node10)--(node11);
  \node[] (node5) at (3,-3) {5};
  \node[] (node4) at (9,-3) {4};
  \node[] (node6) at (15,-3) {6};
  \node[] (node2) at (0,2) {2};
  \node[] (node1) at (6,2) {1};
  \node[] (node3) at (12,2) {3};
  \draw[latex-latex,very thin] (node2)--(node1);
  \draw[latex-latex,very thin] (node1)--(node3);
  \draw[latex-latex,very thin] (node1)--(node4);
  \draw[latex-latex,very thin] (node3)--(node4);
  \draw[latex-latex,very thin] (node4)--(node5);
  \draw[latex-latex,very thin] (node6)--(node4);

  \draw[-latex,red] (node2).. controls(3.5,1) ..(node4);
  \draw[-latex,red] (node4).. controls(5.5,-2) ..(node2);
  \draw[-latex,red] (node1).. controls(5.5,-0.5) ..(node5);
  \draw[-latex,red] (node5).. controls(3.5,-0.5) ..(node1);
  \draw[-latex,red] (node3).. controls(9,3) ..(node1);
  \draw[-latex,red] (node6).. controls(14,-0.5) ..(node3);

  \draw[dashed,color=black] (2.1,-4.5) -- (11.7,-4.5) -- (6.9,3.5) -- (-2.7,3.5) -- (2.1,-4.5);

\end{tikzpicture}
%-{Latex[length=2mm]},blue,line width=1.2pt
%-latex,red,line width=0.3pt

\end{minipage}
}
\medskip
\subfigure[]{
\centering
\begin{minipage}[t]{0.45\linewidth}
\centering
\begin{tikzpicture}[scale=0.2, sibling distance=0em,
  every node/.style = {scale=0.65, shape=circle, draw, align=center},
    outline/.style={draw=#1,thick,fill=#1!100}]
  % every draw/.style = {scale=2}
    \node[rectangle,draw=none,font=\fontsize{12}{6}\selectfont] (node01) at (0,3) {$sp_1$};
    \node[] (node11) at (3,3) {1};
    \node[] (node12) at (7,3) {4};
    \node[] (node14) at (11,3) {5};
    \draw[latex-] (node12)--(node11);
    \draw[latex-] (node14)--(node12);
    \draw[-latex,red] (node11).. controls(7,5) ..(node14);
    \node[rectangle,draw=none,font=\fontsize{12}{6}\selectfont] (node02) at (0,1) {$sp_2$};
    \node[rectangle,draw=none,font=\fontsize{12}{6}\selectfont] (node00) at (7,0) {$\cdots$};
    \node[rectangle,draw=none,font=\fontsize{12}{6}\selectfont] (node04) at (0,-1) {$sp_4$};
    \node[rectangle,draw=none,font=\fontsize{12}{6}\selectfont] (node05) at (0,-3) {$sp_5$};
    \node[] (node51) at (3,-3) {5};
    \node[] (node52) at (7,-3) {4};
    \node[] (node54) at (11,-3) {1};
    \draw[latex-] (node52)--(node51);
    \draw[latex-] (node54)--(node52);
    \draw[-latex,red] (node51).. controls(7,-5) ..(node54);
\end{tikzpicture}
\end{minipage}
}
\subfigure[]{
\centering
\begin{minipage}[t]{0.45\linewidth}
\centering
\begin{tikzpicture}[scale=0.2, sibling distance=0em,
  every node/.style = {scale=0.65, shape=circle, draw, align=center},
    outline/.style={draw=#1,thick,fill=#1!100}]
  every draw/.style = {scale=2}
  %\node[line width=1pt] (node1) at (4,-4) {1};
  %\node[] (node2) at (0,-4) {2};
  %\node[dashed,fill=black!10] (node7) at (16,0) {7};
  %\draw[very thin] (node10)--(node11);
  \node[] (node5) at (3,-3) {5};
  \node[] (node4) at (9,-3) {4};
  %\node[] (node6) at (15,-3) {6};
  \node[] (node2) at (3,2) {2};
  \node[] (node1) at (9,2) {1};
  %\node[] (node3) at (12,2) {3};
  %\draw[latex-latex,very thin] (node2)--(node1);
  %\draw[latex-latex,very thin] (node1)--(node3);
  %\draw[latex-latex,very thin] (node1)--(node4);
  %\draw[latex-latex,very thin] (node3)--(node4);
  %\draw[latex-latex,very thin] (node4)--(node5);
  %\draw[latex-latex,very thin] (node6)--(node4);

  %\draw[-latex,red] (node2).. controls(3.5,1) ..(node4);
  %\draw[-latex,red] (node4).. controls(5.5,-2) ..(node2);
  %\draw[-latex,red] (node1).. controls(5.5,-0.5) ..(node5);
  %\draw[-latex,red] (node5).. controls(3.5,-0.5) ..(node1);
  %\draw[-latex,red] (node3).. controls(9,3) ..(node1);
  %\draw[-latex,red] (node6).. controls(14,-0.5) ..(node3);
  
  \draw[-latex] (8.5,-2.25) -- (8.5,1.25);
  \draw[-latex] (9.5,1.25) -- (9.5,-2.25);
  \draw[-latex] (3.75,-2.5) -- (8.25,-2.5);
  \draw[-latex] (8.25,-3.5) -- (3.75,-3.5);
  %\draw[-latex,thick,blue] (3.5,-2) -- (8,-2) -- (8,1.5) ;
  \draw[-latex] (3.75,2.5) -- (8.25,2.5);
  \draw[-latex] (8.25,1.5) -- (3.75,1.5);
  %\draw[dashed,color=black] (2.1,-4.5) -- (11.7,-4.5) -- (6.9,3.5) -- (-2.7,3.5) -- (2.1,-4.5);
  
  \node[draw=none,text=blue,font=\fontsize{9}{6}\selectfont] (node14) at (7.5,-0.5) {4,5};
  \node[draw=none,text=blue,font=\fontsize{9}{6}\selectfont] (node41) at (10.5,-0.5) {2,1};
  \node[draw=none,text=blue,font=\fontsize{9}{6}\selectfont] (node21) at (6,3) {2};
  \node[draw=none,text=blue,font=\fontsize{9}{6}\selectfont] (node12) at (6,1) {4};
  \node[draw=none,text=blue,font=\fontsize{9}{6}\selectfont] (node45) at (6,-2) {5};
  \node[draw=none,text=blue,font=\fontsize{9}{6}\selectfont] (node54) at (6,-4) {1};

\end{tikzpicture}
%-{Latex[length=2mm]},blue,line width=1.2pt
%-latex,red,line width=0.3pt

\end{minipage}
}
\caption{Path Detection process for cycle $C_5=\{1,5\}$. The double arrows represent agents' connections. The red arrow is agents' pointing in the favorite pointing graph. The dashed box shows the minimum connected component $T$.}
\label{fig:path_detection_example}
\Description{A mini-example for Path Detection process.}
\end{figure}
% \subfigure[]{
% \centering
% \begin{minipage}[t]{0.45\linewidth}
% \centering
% \begin{tikzpicture}[scale=0.2, sibling distance=0em,
%   every node/.style = {scale=0.65, shape=circle, draw, align=center},
%     outline/.style={draw=#1,thick,fill=#1!100}]
%   every draw/.style = {scale=2}
%   %\node[line width=1pt] (node1) at (4,-4) {1};
%   %\node[] (node2) at (0,-4) {2};
%   %\node[dashed,fill=black!10] (node7) at (16,0) {7};
%   %\draw[very thin] (node10)--(node11);
%   \node[] (node5) at (3,-3) {5};
%   \node[] (node4) at (9,-3) {4};
%   %\node[] (node6) at (15,-3) {6};
%   \node[] (node2) at (0,2) {2};
%   \node[line width=1pt] (node1) at (6,2) {1};
%   %\node[] (node3) at (12,2) {3};
%   \draw[latex-latex,very thin] (node2)--(node1);
%   %\draw[latex-latex,very thin] (node1)--(node3);
%   \draw[latex-latex,very thin] (node1)--(node4);
%   %\draw[latex-latex,very thin] (node3)--(node4);
%   \draw[latex-latex,very thin] (node4)--(node5);
%   %\draw[latex-latex,very thin] (node6)--(node4);

%   \draw[-latex,red] (node2).. controls(3.5,1) ..(node4);
%   \draw[-latex,red] (node4).. controls(5.5,-2) ..(node2);
%   \draw[-latex,red] (node1).. controls(5.5,-0.5) ..(node5);
%   \draw[-latex,red] (node5).. controls(3.5,-0.5) ..(node1);
%   %\draw[-latex,red] (node3).. controls(9,3) ..(node1);
%   %\draw[-latex,red] (node6).. controls(14,-0.5) ..(node3);

%   \draw[dashed,color=black] (2.1,-4.5) -- (11.7,-4.5) -- (6.9,3.5) -- (-2.7,3.5) -- (2.1,-4.5);

% \end{tikzpicture}
% %-{Latex[length=2mm]},blue,line width=1.2pt
% %-latex,red,line width=0.3pt

% \end{minipage}
% }

We use a brief example to show how path detection works. As shown in Figure~\ref{fig:path_detection_example}, agents' social connections (edges in $G(\theta')$) are denoted by double arrows, and the red arrows represent edges in the favorite pointing graph $F(\theta')$. Take cycle $C_5=\{1,5\}$ as an example, the minimum connected component for $C_5$ is $T=\{1,2,4,5\}$. Agent 4 is involved to help agent 1 and 5 connect to each other, and since $\langle4,2\rangle\in F(\theta')$, agent 2 should also be involved in $T$ (as in Path Detection step a). Then we detect the shortest path set $SP$ for subgraph $G_T$, the sorted path set is shown in Figure~\ref{fig:path_detection_example}(c). Finally, follow the instruction in Path Detection step d.ii, we mark the subgraph $G_T$ to represents which directed edge by which agent. 

In cycles like $C_4=\{1,2,3,4\}$ or $C_5=\{1,5\}$, not all agents have an exclusive path to their pointing, so some have to switch their pointing in $F(\theta')$ to a less preferred one. We define a next favorite function to adjust $F(\theta')$ based on agents' preferences.

\begin{definition}
Given a reported type profile $\theta'$, the qualified agent set is $Q(\theta')$. Suppose agent $i$ points to agent $j$ in the favorite pointing graph $F(\theta')$, we have $\succ_i^{next}(F(\theta'))=j'$ if $j'$ owns $i$'s favorite item in $Q(\theta')\setminus\{k\in Q(\theta')|k\succeq_i j\}$.
\end{definition}

Now we are ready to introduce our mechanism. 
% %-----------------------------------------New Mechanism begin----------------------------------------------------------------------------------------------------------------

\begin{framed}
\noindent\textbf{Connected Trading Cycles (CTC)}

\noindent Given a reported type profile $\theta'$, construct the reported social network $G(\theta')$ and favorite pointing graph $F(\theta')$. Initiate a settled agent set $V=\emptyset$. Execute the following steps until $V=N$.

\begin{enumerate}
    \item [a.] Find the agent $p_1 \in N \setminus V$ with the minimum order $\mathcal{O}^{-1}(p_1)$. Start from agent $p_1$, detect a node sequence $P_m = \{p_1,\dots,p_l,\dots,p_m\}$ in $F(\theta')$ such that $\forall i\in \{1,\dots,m-1\}$, $p_i$ points to $p_{i+1}$, and $\{p_l,\dots,p_m\}$ is a cycle $C_m$.

    \item [b.] Run \textbf{Path Detection} algorithm with input $C_m$, $\theta'$, $G(\theta')$ and $F(\theta')$. Get an agent set $T$ and marked subgraph $G_T$ as output.

    \item [c.] Find every agent $i\in T$ such that no path in $G_T$ from $i$ to her pointing is exclusively marked by $i$, and $i$'s outgoing edges are marked only by $i$. Denote the set of such agents as $S\subset T$.
    \begin{enumerate}
        \item [i.] If $T=\emptyset$, find the last agent $i$ on $C_m$ who cannot connect to her pointing in $G(\theta')$, let $i$ switch her pointing to $\succ_{i}^{next}(F(\theta'))$.
        \item [ii.] If $S=\emptyset$ and $C_m=T$, add all agents in $T$ into the settled agent set $V$. While an agent $i\in N\setminus V$ points to an agent in $V$, let $i$ point to $\succ_i^{next}(F(\theta'))$.

        \item [iii.] If $S=\emptyset$ and $C_m\subset T$, find the agent $j\in T\setminus C_m$ with minimum order  $\mathcal{O}^{-1}(j)$. Start from agent $j$'s pointing, find the last agent $i\in C_m$ such that $i$'s path to its pointing passes at least one agent in $T\setminus C_m$. If no such $i$, start from agent $j$'s pointing, find the last agent $i\in (T\setminus C_m)$ such that $i$'s path to its pointing passes at least one agent in $C_m$. Let $i$ point to $\succ_i^{next}(F(\theta'))$.

        \item [iv.] If $(T\setminus C_m) \cap S\neq \emptyset$, find the agent $i\in (T\setminus C_m) \cap S$ with minimum order $\mathcal{O}^{-1}(i)$. Let $i$ point to $\succ_{i}^{next}(F(\theta'))$.
        \item [v.] If $(T\setminus C_m) \cap S= \emptyset$, find the agent $i\in S$ with the minimum order $\mathcal{O}^{-1}(i)$ such that $i$'s path to its pointing covers another agent's path to her pointing. If no such $i$, find agent $i\in S$ with the minimum order $\mathcal{O}^{-1}(i)$. Let $i$ point to $\succ_{i}^{next}(F(\theta'))$.
    \end{enumerate}
\end{enumerate}

\end{framed}

Our mechanism first detects cycles in the favorite pointing graph because these cycles can ensure optimality. Thus, agents will not misreport their preferences to get a better allocation (also why TTC is IC in the traditional setting). However, in the network setting, trading cycles formed by agents who cannot connect to each other are fragile. The reason is that others can misreport neighbor set to break the cycle (for instance, disqualify some agents on the cycle). Hence, we construct a minimum connected component $T$ for each cycle where they can stay connected. 

Recall the example in Figure~\ref{fig:figure_cycle}, we can see that if every agent on the cycle has an exclusive path to her pointing (e.g. 1-3-5), the cycle can get traded. 
For cycles that do not meet this condition, we identify a special agent set $S$. each agent in $S$ does not have an exclusive path to her pointing, and her outgoing edges are only marked by herself. These agents cannot misreport to avoid path overlap because the overlap lies outside their outgoing edges (which is irrelevant to their report). Hence, we let agents in $S$ switch their pointing to a less preferred one.

For cases where $S=\emptyset$, if $T$ consists of only agents on the cycle, they can get traded (CTC step c-ii). Otherwise, we start from an agent $i\in T$ who is not on the cycle and find the last agent on the cycle who does not have a path to her pointing only by connections on the cycle (CTC step c-iii). This is because every agent in $T$ now has an exclusive path to her pointing, and the cycle has to rely on agents not on the cycle to stay connected. Thus, agents on the cycle have to switch preferences first, and we start from the last agent to maintain the completeness of the cycle to the greatest extent.

For cases where $S\neq\emptyset$, if $S$ contains an agent not on the cycle, let her switch pointing to a less preferred one first (CTC step c-iv). This is because such agents do not provide any connection for the cycle and agents on the cycle have incentives to disqualify them to eliminate path overlaps. If $S$ consists only of agents on the cycle, we find the agent $i$ whose path covers another agent $j$'s path to her pointing and let $i$ switch her preference (CTC step c-v). This is because, to ensure everyone has an exclusive path to her pointing, $i$ and $j$ can never hold their pointing together. Agent $i$ should switch preference first because her path contains $j$'s outgoing edges ($j$'s report has impacts on $i$).

\begin{figure}[ht]
\centering
\subfigure[]{
\centering
\begin{minipage}[t]{0.45\linewidth}
\centering
\begin{tikzpicture}[scale=0.2, sibling distance=0em,
  every node/.style = {scale=0.65, shape=circle, draw, align=center},
    outline/.style={draw=#1,thick,fill=#1!100}]
  every draw/.style = {scale=2}
  %\node[line width=1pt] (node1) at (4,-4) {1};
  %\node[] (node2) at (0,-4) {2};
  %\node[dashed,fill=black!10] (node7) at (16,0) {7};
  %\draw[very thin] (node10)--(node11);
  \node[] (node5) at (3,-3) {5};
  \node[] (node4) at (9,-3) {4};
  \node[] (node6) at (15,-3) {6};
  \node[] (node2) at (0,2) {2};
  \node[line width=1pt] (node1) at (6,2) {1};
  \node[] (node3) at (12,2) {3};
  \draw[latex-latex,very thin] (node2)--(node1);
  \draw[latex-latex,very thin] (node1)--(node3);
  \draw[latex-latex,very thin] (node1)--(node4);
  \draw[latex-latex,very thin] (node3)--(node4);
  \draw[latex-latex,very thin] (node4)--(node5);
  \draw[latex-latex,very thin] (node6)--(node4);

  \draw[-latex,red] (node2).. controls(3.5,1) ..(node4);
  \draw[-latex,red] (node4).. controls(5.5,-2) ..(node2);
  \draw[-latex,red] (node1).. controls(5.5,-0.5) ..(node5);
  \draw[-latex,red] (node5).. controls(3.5,-0.5) ..(node1);
  \draw[-latex,red] (node3).. controls(9,3) ..(node1);
  \draw[-latex,red] (node6).. controls(14,-0.5) ..(node3);

  % \draw[dashed,color=black] (1,-4.5) -- (4,-4.5) -- (8,3.5) -- (5,3.5) -- (1,-4.5);
\end{tikzpicture}
%-{Latex[length=2mm]},blue,line width=1.2pt
%-latex,red,line width=0.3pt

\end{minipage}
}
\subfigure[]{
\centering
\begin{minipage}[t]{0.45\linewidth}
\centering
\begin{tikzpicture}[scale=0.2, sibling distance=0em,
  every node/.style = {scale=0.65, shape=circle, draw, align=center},
    outline/.style={draw=#1,thick,fill=#1!100}]
  every draw/.style = {scale=2}
  %\node[line width=1pt] (node1) at (4,-4) {1};
  %\node[] (node2) at (0,-4) {2};
  %\node[dashed,fill=black!10] (node7) at (16,0) {7};
  %\draw[very thin] (node10)--(node11);
  \node[] (node5) at (3,-3) {5};
  \node[] (node4) at (9,-3) {4};
  \node[] (node6) at (15,-3) {6};
  \node[] (node2) at (0,2) {2};
  \node[line width=1pt] (node1) at (6,2) {1};
  \node[] (node3) at (12,2) {3};
  \draw[latex-latex,very thin] (node2)--(node1);
  \draw[latex-latex,very thin] (node1)--(node3);
  \draw[latex-latex,very thin] (node1)--(node4);
  \draw[latex-latex,very thin] (node3)--(node4);
  \draw[latex-latex,very thin] (node4)--(node5);
  \draw[latex-latex,very thin] (node6)--(node4);

  \draw[-latex,red] (node2).. controls(3.5,1) ..(node4);
  \draw[-latex,red] (node4).. controls(5.5,-2) ..(node2);
  \draw[-latex,red] (node1).. controls(5.5,-0.5) ..(node5);
  \draw[-latex,red] (node5).. controls(3.5,-0.5) ..(node1);
  \draw[-latex,red] (node3).. controls(9,3) ..(node1);
  \draw[-latex,red] (node6).. controls(14,-0.5) ..(node3);

  \draw[dashed,color=black] (2.1,-4.5) -- (11.7,-4.5) -- (6.9,3.5) -- (-2.7,3.5) -- (2.1,-4.5);

\end{tikzpicture}
%-{Latex[length=2mm]},blue,line width=1.2pt
%-latex,red,line width=0.3pt

\end{minipage}
}
\medskip
\subfigure[]{
\centering
\begin{minipage}[t]{0.45\linewidth}
\centering
\begin{tikzpicture}[scale=0.2, sibling distance=0em,
  every node/.style = {scale=0.65, shape=circle, draw, align=center},
    outline/.style={draw=#1,thick,fill=#1!100}]
  every draw/.style = {scale=2}
  %\node[line width=1pt] (node1) at (4,-4) {1};
  %\node[] (node2) at (0,-4) {2};
  %\node[dashed,fill=black!10] (node7) at (16,0) {7};
  %\draw[very thin] (node10)--(node11);
  \node[] (node5) at (3,-3) {5};
  \node[] (node4) at (9,-3) {4};
  \node[] (node6) at (15,-3) {6};
  \node[] (node2) at (0,2) {2};
  \node[line width=1pt] (node1) at (6,2) {1};
  \node[] (node3) at (12,2) {3};
  \draw[latex-latex,very thin] (node2)--(node1);
  \draw[latex-latex,very thin] (node1)--(node3);
  \draw[latex-latex,very thin] (node1)--(node4);
  \draw[latex-latex,very thin] (node3)--(node4);
  \draw[latex-latex,very thin] (node4)--(node5);
  \draw[latex-latex,very thin] (node6)--(node4);

  % \draw[-latex,red] (node2).. controls(3.5,1) ..(node4);
  \draw[-latex,red,dashed] (node2).. controls(3,3) ..(node1);
  \draw[-latex,red] (node4).. controls(5.5,-2) ..(node2);
  \draw[-latex,red] (node1).. controls(5.5,-0.5) ..(node5);
  \draw[-latex,red] (node5).. controls(3.5,-0.5) ..(node1);
  \draw[-latex,red] (node3).. controls(9,3) ..(node1);
  \draw[-latex,red] (node6).. controls(14,-0.5) ..(node3);

  \draw[dashed,color=black] (2.1,-4.5) -- (11.7,-4.5) -- (6.9,3.5) -- (-2.7,3.5) -- (2.1,-4.5);
  % \draw[dashed,color=black] (-2,-5) -- (14,-5) -- (14,4) -- (-2,4) -- (-2,-5);

\end{tikzpicture}
%-{Latex[length=2mm]},blue,line width=1.2pt
%-latex,red,line width=0.3pt

\end{minipage}
}
\subfigure[]{
\centering
\begin{minipage}[t]{0.45\linewidth}
\centering
\begin{tikzpicture}[scale=0.2, sibling distance=0em,
  every node/.style = {scale=0.65, shape=circle, draw, align=center},
    outline/.style={draw=#1,thick,fill=#1!100}]
  every draw/.style = {scale=2}
  %\node[line width=1pt] (node1) at (4,-4) {1};
  %\node[] (node2) at (0,-4) {2};
  %\node[dashed,fill=black!10] (node7) at (16,0) {7};
  %\draw[very thin] (node10)--(node11);
  \node[] (node5) at (3,-3) {5};
  \node[] (node4) at (9,-3) {4};
  \node[] (node6) at (15,-3) {6};
  \node[] (node2) at (0,2) {2};
  \node[line width=1pt] (node1) at (6,2) {1};
  \node[] (node3) at (12,2) {3};
  \draw[latex-latex,very thin] (node2)--(node1);
  \draw[latex-latex,very thin] (node1)--(node3);
  \draw[latex-latex,very thin] (node1)--(node4);
  \draw[latex-latex,very thin] (node3)--(node4);
  \draw[latex-latex,very thin] (node4)--(node5);
  \draw[latex-latex,very thin] (node6)--(node4);

  % \draw[-latex,red] (node2).. controls(3.5,1) ..(node4);
  \draw[-latex,red,dashed] (node2).. controls(3,3) ..(node1);
  \draw[-latex,red] (node4).. controls(5.5,-2) ..(node2);
  \draw[-latex,red] (node1).. controls(5.5,-0.5) ..(node5);
  \draw[-latex,red,dashed] (node5).. controls(7.5,-0.5) ..(node3);
  \draw[-latex,red] (node3).. controls(9,3) ..(node1);
  \draw[-latex,red] (node6).. controls(14,-0.5) ..(node3);

  \draw[dashed,color=black] (0.3,-4.5) -- (11.7,-4.5) -- (14.7,3.5) -- (-2.7,3.5) -- (0.3,-4.5);
  % \draw[dashed,color=black] (-2,-5) -- (14,-5) -- (14,4) -- (-2,4) -- (-2,-5);

  % \draw[dashed,color=black] (1,0) -- (5,0) -- (5,4) -- (1,4) -- (1,0);

\end{tikzpicture}
%-{Latex[length=2mm]},blue,line width=1.2pt
%-latex,red,line width=0.3pt

\end{minipage}
}
\medskip
\subfigure[]{
\centering
\begin{minipage}[t]{0.45\linewidth}
\centering
\begin{tikzpicture}[scale=0.2, sibling distance=0em,
  every node/.style = {scale=0.65, shape=circle, draw, align=center},
    outline/.style={draw=#1,thick,fill=#1!100}]
  every draw/.style = {scale=2}
  %\node[line width=1pt] (node1) at (4,-4) {1};
  %\node[] (node2) at (0,-4) {2};
  %\node[dashed,fill=black!10] (node7) at (16,0) {7};
  %\draw[very thin] (node10)--(node11);
  \node[] (node5) at (3,-3) {5};
  \node[] (node4) at (9,-3) {4};
  \node[] (node6) at (15,-3) {6};
  \node[] (node2) at (0,2) {2};
  \node[line width=1pt] (node1) at (6,2) {1};
  \node[] (node3) at (12,2) {3};
  \draw[latex-latex,very thin] (node2)--(node1);
  \draw[latex-latex,very thin] (node1)--(node3);
  \draw[latex-latex,very thin] (node1)--(node4);
  \draw[latex-latex,very thin] (node3)--(node4);
  \draw[latex-latex,very thin] (node4)--(node5);
  \draw[latex-latex,very thin] (node6)--(node4);

  % \draw[-latex,red] (node2).. controls(3.5,1) ..(node4);
  \draw[-latex,red,dashed] (node2).. controls(3,3) ..(node1);
  \draw[-latex,red] (node4).. controls(5.5,-2) ..(node2);
  \draw[-latex,red] (node1).. controls(5.5,-0.5) ..(node5);
  \draw[-latex,red,dashed] (node5).. controls(6,-4.3) ..(node4);
  \draw[-latex,red] (node3).. controls(9,3) ..(node1);
  \draw[-latex,red] (node6).. controls(14,-0.5) ..(node3);

  \draw[dashed,color=black] (2.1,-4.5) -- (11.7,-4.5) -- (6.9,3.5) -- (-2.7,3.5) -- (2.1,-4.5);

  % \draw[dashed,color=black] (-2,-5) -- (14,-5) -- (14,4) -- (5,4)-- (5,0)-- (-2,0)--(-2,-5);

\end{tikzpicture}
%-{Latex[length=2mm]},blue,line width=1.2pt
%-latex,red,line width=0.3pt

\end{minipage}
}
\subfigure[]{
\centering
\begin{minipage}[t]{0.45\linewidth}
\centering
\begin{tikzpicture}[scale=0.2, sibling distance=0em,
  every node/.style = {scale=0.65, shape=circle, draw, align=center},
    outline/.style={draw=#1,thick,fill=#1!100}]
  every draw/.style = {scale=2}
  \node[fill=black!10] (node5) at (3,-3) {5};
  \node[fill=black!10] (node4) at (9,-3) {4};
  \node[] (node6) at (15,-3) {6};
  \node[fill=black!10] (node2) at (0,2) {2};
  \node[fill=black!10,line width=1pt] (node1) at (6,2) {1};
  \node[] (node3) at (12,2) {3};
  \draw[latex-latex,very thin] (node2)--(node1);
  \draw[latex-latex,very thin] (node1)--(node3);
  \draw[latex-latex,very thin] (node1)--(node4);
  \draw[latex-latex,very thin] (node3)--(node4);
  \draw[latex-latex,very thin] (node4)--(node5);
  \draw[latex-latex,very thin] (node6)--(node4);
  \draw[-latex,red,dashed] (node2).. controls(3,3) ..(node1);
  \draw[-latex,red] (node4).. controls(5.5,-2) ..(node2);
  \draw[-latex,red] (node1).. controls(5.5,-0.5) ..(node5);
  \draw[-latex,red,dashed] (node5).. controls(6,-4.3) ..(node4);
  \draw[-latex,red,dashed] (node3).. controls(12,-0.5) ..(node6);
  \draw[-latex,red] (node6).. controls(14,-0.5) ..(node3);

\end{tikzpicture}

\end{minipage}
}
\caption{The double arrows represent agents' connections. The red arrow is agents' pointing in the favorite pointing graph. The dashed red arrow means the agent switch her pointing to a less favorite one. The dashed box shows the minimum connected component $T$.}
\label{fig:mini_example}
\Description{A mini-example for \textit{Connected Trading Cycles}, see the full version in Appendix A.}
\end{figure}
% %---------------------------------------------------------------------------------------------------------------------------------------------------------

We present an example to illustrate the execution of CTC in Figure~\ref{fig:mini_example}. In (a), everyone points to her favorite item. Since agent 1 has the minimum order, we start from agent 1 and detect cycle $C_5=\{1,5\}$. Then, we construct the minimum connected component $T=\{1,2,4,5\}$ in (b). By the Path Detection algorithm, we can see that both agent 2 and agent 5 do not have an exclusive path to their pointing, so $S=\{2,5\}$. By CTC step c-iv, we let agent 2 switch her pointing to a less favorite one. In (c), we also start from agent 1 and detect the same cycle $C_5$, and the same $T$. Now, only agent 5 does not have an exclusive path to her pointing, so $S=\{5\}$. By CTC step c-v, we let agent 5 switch her pointing. In (d), we detect cycle $C_3=\{1,5,3\}$ and construct $T=\{1,2,3,4,5\}$. Everyone has an exclusive path to reach her pointing, so $S=\emptyset$. By CTC step c-iii, we start from agent 2's pointing and find that agent 5 is the last agent whose path to her pointing passes at least one agent in $T\setminus C_3$. So, we let agent 5 switch her pointing. Finally in (e), $C_2=\{1,5,4,2\}$ is a connected cycle and $S=\emptyset$. By CTC step c-ii, $C_2$ can get traded and agent 3 will switch her pointing to her next favorite one as shown in (f). We leave the detailed execution in Appendix A.

To our best knowledge, the Connected Trading Cycles is the first mechanism that satisfies optimal-cc.

\begin{theorem}
For any order $\mathcal{O}$, CTC is optimal-cc.
\end{theorem}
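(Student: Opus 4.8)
The plan is to run the classical argument for Pareto optimality of Top Trading Cycles, localized to a single complete component and driven by the way CTC updates pointings. Suppose, toward a contradiction, that the CTC allocation $\pi$ for some truthful profile $\theta$ is not optimal-cc: there is an allocation $\pi'$ with $\pi'_i \succeq_i \pi_i(\theta)$ for all $i$, $\pi'_j \succ_j \pi_j(\theta)$ for some $j$, and $D := \{i : \pi_i(\theta) \neq \pi'_i(\theta)\}$ a complete component of $G(\theta)$. Because $\pi$ and $\pi'$ coincide outside $D$ and everyone is weakly better off, \emph{every} agent of $D$ is strictly better off, $D \neq \emptyset$, and $\pi'$ merely permutes the item set $P := \{\pi_i(\theta) : i \in D\}$ among the agents of $D$. (We may assume $Q(\theta)=N$, otherwise restrict to $Q(\theta)$.) I would then order the agents of $D$ by the round in which CTC settles them and let $i^\star \in D$ be the earliest, settled in some round $r_1$; by construction, when round $r_1$ begins every agent of $D$ is still unsettled, and $i^\star$ is a neighbour in $G(\theta)$ of every other agent of $D$.

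The core lemma I would prove, by bookkeeping over rounds, is a monotonicity property of CTC's updates. A pointing is only ever moved by the next-favorite operator $\succ_i^{next}$, and one application of $\succ_i^{next}$ to an agent $i$ pointing at $z$ returns an agent owning an item that $i$ weakly prefers to $h_k$ for \emph{every} still-unsettled qualified $k$ with $h_z \succ_i h_k$; hence $i$'s pointing can fall strictly below $h_k$ only in a round where $i$ points exactly at $k$ and is then forced off it, which, by inspection of steps c-i--c-v, requires that $k$ has just been settled (c-ii), or $k$ is unreachable from $i$ in the current graph (c-i), or $i$'s marked path runs through $k$'s neighbourhood and is ``contested'' on the detected cycle (c-iii--c-v). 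Applying this to $i^\star$: each $k \in D$ is unsettled throughout round $r_1$ and is a neighbour of $i^\star$; a neighbour's length-one path is never contested and never places $i^\star$ in the set $S$, and it is always reachable and not yet settled, so none of c-i--c-v can force $i^\star$ off $h_k$. Therefore $\pi_{i^\star}(\theta) \succeq_{i^\star} h_k$ for all $k \in D$.

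Combining with $\pi'_{i^\star} \succ_{i^\star} \pi_{i^\star}(\theta)$ gives $\pi'_{i^\star} \succ_{i^\star} h_k$ for all $k \in D$, so $\pi'_{i^\star} \notin \{h_k : k \in D\}$. Since $\pi'_{i^\star} \in P$, there is an agent $m \notin D$ and an agent $j_m \in D$ with $\pi'_{i^\star} = h_m = \pi_{j_m}(\theta)$. Now $m$ sits on the same CTC trading cycle as $j_m$; that cycle is actually traded (step c-ii fires on it), hence it is connected and $m$ is settled in round $r_{j_m} \geq r_1$. Consequently $h_m$ is still available when $i^\star$ is settled, and $m$ is reachable from $i^\star$ at round $r_1$ along $i^\star \to j_m \to \cdots \to m$: $j_m$ is a neighbour of $i^\star$, is unsettled at the start of round $r_1$, and reaches $m$ within its (connected) trading cycle, a path that already exists at round $r_1$ since reachability only shrinks as agents leave. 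Feeding this into the monotonicity lemma once more should give $\pi_{i^\star}(\theta) \succeq_{i^\star} h_m$, contradicting $\pi'_{i^\star} = h_m \succ_{i^\star} \pi_{i^\star}(\theta)$; this contradiction proves the theorem.

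The main obstacle is exactly this last step. For a \emph{neighbour} of $i^\star$ the monotonicity lemma is clean, since a length-one path has no interior vertices and its owner can never drop $i^\star$ into $S$; but the item $h_m$ that $\pi'$ assigns to $i^\star$ may be held by a non-neighbour reachable only along a longer path, and such a path can be contested, so one must rule out that steps c-iii--c-v ever push $i^\star$'s pointing below $h_m$ while $m$ is still available. I expect the right way to close it is to strengthen the lemma to the invariant ``whenever CTC trades a cycle, each agent on it receives her favourite item among those of all agents reachable from her in that round,'' proved by induction on rounds, using that a traded cycle certifies exclusive paths and that $j_m$'s eventual reachability of $m$ on a connected traded cycle retro-certifies reachability at round $r_1$. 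Finally, the whole argument is oblivious to the tie-breaking order $\mathcal{O}$: it uses only that $\succ_i^{next}$ moves pointings strictly down a preference list and that every traded cycle carries exclusive paths, both of which hold for every $\mathcal{O}$ --- the order only decides \emph{which} eligible agent switches first --- so the statement holds ``for any order $\mathcal{O}$.''
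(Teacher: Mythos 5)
Your setup is the right one (the improving set $D$ is a complete component, every member is strictly better off, and $\pi'$ permutes among $D$ the items $\{\pi_j(\theta):j\in D\}$ rather than the endowments $\{h_k:k\in D\}$), and your neighbour-case monotonicity claim is essentially the same observation the paper uses in its stable-cc proof. But the proof is not closed: the decisive case is exactly the one you flag as ``the main obstacle,'' namely $\pi'_{i^\star}=h_m$ where $m$, the original owner, lies outside $D$ and is reachable from $i^\star$ only through a longer, possibly contested path. You do not prove anything for this case; you only conjecture a strengthened invariant. So as written there is a genuine gap at the heart of the argument.

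Worse, the invariant you propose to close it --- ``whenever CTC trades a cycle, each agent on it receives her favourite item among those of all agents reachable from her in that round'' --- is false, as the paper's own running example in Appendix A shows: the traded cycle is $C_2=\{1,5,4,2\}$ and agent $5$ receives $h_4$, even though agents $1$ and $3$ (owners of $h_1\succ_5 h_3\succ_5 h_4$) are reachable from $5$ via $5\to 4$ in that round; CTC pushed $5$ off those pointings precisely because her paths were contested (steps c-iii and c-v), which is the phenomenon your lemma would have to exclude. The way the paper closes this case does not go through reachability alone: it uses that $D$ is a complete component, so $i$ has a \emph{direct outgoing edge} to the agent $j\in D$ who holds $\pi'_i(\theta)=\pi_j(\theta)=h_k$ under $\pi$, and that since CTC actually traded $j$'s cycle (step c-ii), $j$ has an exclusive path to the owner $k$; concatenating $i\to j$ with $j$'s exclusive path gives $i$ a path to $k$ along which CTC never forces $i$ below $h_k$, so the cycle realizing $\pi'$ would have formed, contradicting $\pi\neq\pi'$. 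You would need an argument of this shape (exploiting the direct edge into $D$ plus the exclusivity certificate carried by traded cycles), not a reachability-based monotonicity lemma, to make your plan work.
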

\begin{proof}
For any given type profile $\theta$, if the allocation $\pi(\theta)$ given by CTC violates optimal-cc, there exists another allocation $\pi'(\theta)$ such that every agent in a complete component $B$ in $G(\theta)$ is strictly better, while others receive the same allocation. Let $i,j$ be two agents in $B$, and $\pi'_i(\theta)=\pi_j(\theta)=h_k$. Because $B$ is a complete component in $G(\theta)$, $i$ can reach $j$ by her outgoing edge, which means $i\to j$ is an exclusive path for $i$. Since $\pi(\theta)$ is given by CTC, $j$ has an exclusive path to agent $k$, the owner of her allocation $\pi_j(\theta)$ (as in CTC step c.ii). Combining the two paths together, agent $i$ can reach $k$, the owner of $\pi'_i(\theta)$, by $i \to j \to k$. Hence, if $i$ points to $k$, CTC will not make agent $i$ switch her pointing to a less preferred agents. Therefore, there exists an exclusive path for every agent $i\in B$ to reach the owner of $\pi'_i(\theta)$, and CTC will allocate $\pi'(\theta)$. This means $\pi(\theta)=\pi'(\theta)$, which contradicts our assumption.
\end{proof}

\section{Conclusion}
In this paper, we redefine optimality and stability notions in social networks and show their tightness by proving impossibilities. We propose two novel matching protocols called Swap With Neighbors and Leave and Share respectively that satisfy IC, IR and stable-cc. We also prove the theoretical boundaries and propose a mechanism called Connected Trading Cycles to reach them. One future work is to study network settings with practical limits to find better optimality and stability notions. Another direction is to design the distributed execution for centralized matching mechanisms to make them more applicable.

\begin{acks}
We thanks the reviewers for their valuable comments on the original submission. This work was supported in part by Science and Technology Commission of Shanghai Municipality (No.22ZR1442200 and No.23010503000), and Shanghai Frontiers Science Center of Human-centered Artificial Intelligence (ShangHAI).
\end{acks}

\bibliographystyle{matching} 
\bibliography{matching}

%%%%%%%%%%%%%%%%%%%%%%%%%%%%%%%%%%%%%%%%%%%%%%%%%%%%%%%%%%%%%%%%%%%%%%%%
% Appendix
\newpage
\appendix
\section{A Detailed Example for CTC}
%%%%%%%figure of the example and table
\begin{figure}[ht]
\centering
\subfigure[]{
\centering
\begin{minipage}[t]{0.45\linewidth}
\centering

\begin{tikzpicture}[scale=0.2, sibling distance=0em,
  every node/.style = {scale=0.65, shape=circle, draw, align=center},
    outline/.style={draw=#1,thick,fill=#1!100}]
  every draw/.style = {scale=2}
  \node[] (node5) at (3,-3) {5};
  \node[] (node4) at (9,-3) {4};
  \node[] (node6) at (15,-3) {6};
  \node[] (node2) at (0,2) {2};
  \node[] (node1) at (6,2) {1};
  \node[] (node3) at (12,2) {3};
  \draw[latex-latex,very thin] (node2)--(node1);
  \draw[latex-latex,very thin] (node1)--(node3);
  \draw[latex-latex,very thin] (node1)--(node4);
  \draw[latex-latex,very thin] (node3)--(node4);
  \draw[latex-latex,very thin] (node4)--(node5);
  \draw[latex-latex,very thin] (node6)--(node4);
  
\end{tikzpicture}
\end{minipage}
}
\subfigure[]{
\centering
\begin{minipage}[t]{0.45\linewidth}
\centering

\begin{tikzpicture}[scale=0.2, sibling distance=0em,
  every node/.style = {scale=0.65, shape=circle, draw, align=center},
    outline/.style={draw=#1,thick,fill=#1!100}]
  every draw/.style = {scale=2}
  \node[] (node5) at (3,-3) {5};
  \node[] (node4) at (9,-3) {4};
  \node[] (node6) at (15,-3) {6};
  \node[] (node2) at (0,2) {2};
  \node[] (node1) at (6,2) {1};
  \node[] (node3) at (12,2) {3};
  \draw[-latex,red] (node2).. controls(3.5,1) ..(node4);
  \draw[-latex,red] (node4).. controls(5.5,-2) ..(node2);
  \draw[-latex,red] (node1).. controls(5.5,-0.5) ..(node5);
  \draw[-latex,red] (node5).. controls(3.5,-0.5) ..(node1);
  \draw[-latex,red] (node3).. controls(9,3) ..(node1);
  \draw[-latex,red] (node6).. controls(14,-0.5) ..(node3);
\end{tikzpicture}

\end{minipage}
}
\subfigure[]{
\centering
\begin{minipage}[ht]{0.65\linewidth}
\centering
\small
\begin{tabular}{rlll}\toprule
    \textit{i} & \textit{$\succ_i$} & \textit{$r_i$}  & \textit{$\pi_i$}\\ \midrule
    1 & $h_5 \succ h_1 \succ \cdots$ & 2,3,4 & $h_5$ \\
    2 & $h_4 \succ h_1 \succ h_2 \succ \cdots$ & 1 & $h_1$\\
    3 & $h_1 \succ h_6 \succ h_3 \succ \cdots$ & 1 & $h_3$\\
    4 & $h_2 \succ h_4 \succ \cdots $ & 1,3,5,6 & $h_2$\\
    5 & $h_1 \succ h_3 \succ h_4 \succ h_5 \succ \cdots$ & 4 & $h_4$\\
    6 & $h_3 \succ h_6 \succ \cdots$ & 4 & $h_6$\\ \bottomrule
\end{tabular}
\end{minipage}
}

\caption{Figure (a) represents the $G(\theta')$. Figure (b) is the favorite pointing graph $F(\theta')$. Table (c) presents the preference, neighbor set, and the allocation given by CTC for each agent.}
\label{fig:pref_table}
\Description{A detailed example for \textit{Connected Trading Cycles}, graph and profile table.}
\end{figure}

%%%%%%%figure of the execution
\begin{figure}[ht]
\centering
\subfigure[]{
\centering
\begin{minipage}[t]{0.45\linewidth}
\centering
\begin{tikzpicture}[scale=0.22, sibling distance=0em,
  every node/.style = {scale=0.65, shape=circle, draw, align=center},
    outline/.style={draw=#1,thick,fill=#1!100}]
  every draw/.style = {scale=2}
  \node[] (node5) at (3,-3) {5};
  \node[] (node4) at (9,-3) {4};
  \node[] (node6) at (15,-3) {6};
  \node[] (node2) at (0,2) {2};
  \node[line width=1pt] (node1) at (6,2) {1};
  \node[] (node3) at (12,2) {3};
  \draw[latex-latex,very thin] (node2)--(node1);
  \draw[latex-latex,very thin] (node1)--(node3);
  \draw[latex-latex,very thin] (node1)--(node4);
  \draw[latex-latex,very thin] (node3)--(node4);
  \draw[latex-latex,very thin] (node4)--(node5);
  \draw[latex-latex,very thin] (node6)--(node4);

  \draw[-latex,red] (node2).. controls(3.5,1) ..(node4);
  \draw[-latex,red] (node4).. controls(5.5,-2) ..(node2);
  \draw[-latex,red] (node1).. controls(5.5,-0.5) ..(node5);
  \draw[-latex,red] (node5).. controls(3.5,-0.5) ..(node1);
  \draw[-latex,red] (node3).. controls(9,3) ..(node1);
  \draw[-latex,red] (node6).. controls(14,-0.5) ..(node3);

  \draw[dashed,color=black] (2.1,-4.5) -- (11.7,-4.5) -- (6.9,3.5) -- (-2.7,3.5) -- (2.1,-4.5);

\end{tikzpicture}

\end{minipage}
}
\subfigure[]{
\centering
\begin{minipage}[t]{0.45\linewidth}
\centering
\begin{tikzpicture}[scale=0.22, sibling distance=0em,
  every node/.style = {scale=0.65, shape=circle, draw, align=center},
    outline/.style={draw=#1,thick,fill=#1!100}]
  every draw/.style = {scale=2}
  \node[] (node5) at (3,-3) {5};
  \node[] (node4) at (9,-3) {4};
  \node[] (node6) at (15,-3) {6};
  \node[] (node2) at (0,2) {2};
  \node[line width=1pt] (node1) at (6,2) {1};
  \node[] (node3) at (12,2) {3};
  \draw[latex-latex,very thin] (node2)--(node1);
  \draw[latex-latex,very thin] (node1)--(node3);
  \draw[latex-latex,very thin] (node1)--(node4);
  \draw[latex-latex,very thin] (node3)--(node4);
  \draw[latex-latex,very thin] (node4)--(node5);
  \draw[latex-latex,very thin] (node6)--(node4);
  \draw[-latex,red,dashed] (node2).. controls(3,3) ..(node1);
  \draw[-latex,red] (node4).. controls(5.5,-2) ..(node2);
  \draw[-latex,red] (node1).. controls(5.5,-0.5) ..(node5);
  \draw[-latex,red] (node5).. controls(3.5,-0.5) ..(node1);
  \draw[-latex,red] (node3).. controls(9,3) ..(node1);
  \draw[-latex,red] (node6).. controls(14,-0.5) ..(node3);

  \draw[dashed,color=black] (2.1,-4.5) -- (11.7,-4.5) -- (6.9,3.5) -- (-2.7,3.5) -- (2.1,-4.5);

\end{tikzpicture}

\end{minipage}
}
\medskip
\subfigure[]{
\centering
\begin{minipage}[t]{0.45\linewidth}
\centering
\begin{tikzpicture}[scale=0.2, sibling distance=0em,
  every node/.style = {scale=0.65, shape=circle, draw, align=center},
    outline/.style={draw=#1,thick,fill=#1!100}]
  every draw/.style = {scale=2}
  \node[] (node5) at (3,-3) {5};
  \node[] (node4) at (9,-3) {4};
  \node[] (node6) at (15,-3) {6};
  \node[] (node2) at (0,2) {2};
  \node[line width=1pt] (node1) at (6,2) {1};
  \node[] (node3) at (12,2) {3};
  \draw[latex-latex,very thin] (node2)--(node1);
  \draw[latex-latex,very thin] (node1)--(node3);
  \draw[latex-latex,very thin] (node1)--(node4);
  \draw[latex-latex,very thin] (node3)--(node4);
  \draw[latex-latex,very thin] (node4)--(node5);
  \draw[latex-latex,very thin] (node6)--(node4);
  \draw[-latex,red,dashed] (node2).. controls(3,3) ..(node1);
  \draw[-latex,red] (node4).. controls(5.5,-2) ..(node2);
  \draw[-latex,red] (node1).. controls(5.5,-0.5) ..(node5);
  \draw[-latex,red,dashed] (node5).. controls(7.5,-0.5) ..(node3);
  \draw[-latex,red] (node3).. controls(9,3) ..(node1);
  \draw[-latex,red] (node6).. controls(14,-0.5) ..(node3);

  \draw[dashed,color=black] (0.3,-4.5) -- (11.7,-4.5) -- (14.7,3.5) -- (-2.7,3.5) -- (0.3,-4.5);
\end{tikzpicture}

\end{minipage}
}
\subfigure[]{
\centering
\begin{minipage}[t]{0.45\linewidth}
\centering
\begin{tikzpicture}[scale=0.2, sibling distance=0em,
  every node/.style = {scale=0.65, shape=circle, draw, align=center},
    outline/.style={draw=#1,thick,fill=#1!100}]
  every draw/.style = {scale=2}
  \node[] (node5) at (3,-3) {5};
  \node[] (node4) at (9,-3) {4};
  \node[] (node6) at (15,-3) {6};
  \node[] (node2) at (0,2) {2};
  \node[line width=1pt] (node1) at (6,2) {1};
  \node[] (node3) at (12,2) {3};
  \draw[latex-latex,very thin] (node2)--(node1);
  \draw[latex-latex,very thin] (node1)--(node3);
  \draw[latex-latex,very thin] (node1)--(node4);
  \draw[latex-latex,very thin] (node3)--(node4);
  \draw[latex-latex,very thin] (node4)--(node5);
  \draw[latex-latex,very thin] (node6)--(node4);
  \draw[-latex,red,dashed] (node2).. controls(3,3) ..(node1);
  \draw[-latex,red] (node4).. controls(5.5,-2) ..(node2);
  \draw[-latex,red] (node1).. controls(5.5,-0.5) ..(node5);
  \draw[-latex,red,dashed] (node5).. controls(6,-4.3) ..(node4);
  \draw[-latex,red] (node3).. controls(9,3) ..(node1);
  \draw[-latex,red] (node6).. controls(14,-0.5) ..(node3);

  \draw[dashed,color=black] (2.1,-4.5) -- (11.7,-4.5) -- (6.9,3.5) -- (-2.7,3.5) -- (2.1,-4.5);

\end{tikzpicture}

\end{minipage}
}
\medskip
\subfigure[]{
\centering
\begin{minipage}[t]{0.45\linewidth}
\centering
\begin{tikzpicture}[scale=0.2, sibling distance=0em,
  every node/.style = {scale=0.65, shape=circle, draw, align=center},
    outline/.style={draw=#1,thick,fill=#1!100}]
  every draw/.style = {scale=2}
  \node[fill=black!10] (node5) at (3,-3) {5};
  \node[fill=black!10] (node4) at (9,-3) {4};
  \node[] (node6) at (15,-3) {6};
  \node[fill=black!10] (node2) at (0,2) {2};
  \node[fill=black!10] (node1) at (6,2) {1};
  \node[line width=1pt] (node3) at (12,2) {3};
  \draw[latex-latex,very thin] (node2)--(node1);
  \draw[latex-latex,very thin] (node1)--(node3);
  \draw[latex-latex,very thin] (node1)--(node4);
  \draw[latex-latex,very thin] (node3)--(node4);
  \draw[latex-latex,very thin] (node4)--(node5);
  \draw[latex-latex,very thin] (node6)--(node4);
  \draw[-latex,red,dashed] (node2).. controls(3,3) ..(node1);
  \draw[-latex,red] (node4).. controls(5.5,-2) ..(node2);
  \draw[-latex,red] (node1).. controls(5.5,-0.5) ..(node5);
  \draw[-latex,red,dashed] (node5).. controls(6,-4.3) ..(node4);
  \draw[-latex,red,dashed] (node3).. controls(12,-0.5) ..(node6);
  \draw[-latex,red] (node6).. controls(14,-0.5) ..(node3);

\end{tikzpicture}

\end{minipage}
}
\medskip
\subfigure[]{
\centering
\begin{minipage}[t]{0.45\linewidth}
\centering
\begin{tikzpicture}[scale=0.2, sibling distance=0em,
  every node/.style = {scale=0.65, shape=circle, draw, align=center},
    outline/.style={draw=#1,thick,fill=#1!100}]
  every draw/.style = {scale=2}
  \node[fill=black!10] (node5) at (3,-3) {5};
  \node[fill=black!10] (node4) at (9,-3) {4};
  \node[] (node6) at (15,-3) {6};
  \node[fill=black!10] (node2) at (0,2) {2};
  \node[fill=black!10] (node1) at (6,2) {1};
  \node[line width=1pt] (node3) at (12,2) {3};
  \draw[latex-latex,very thin] (node2)--(node1);
  \draw[latex-latex,very thin] (node1)--(node3);
  \draw[latex-latex,very thin] (node1)--(node4);
  \draw[latex-latex,very thin] (node3)--(node4);
  \draw[latex-latex,very thin] (node4)--(node5);
  \draw[latex-latex,very thin] (node6)--(node4);
  \draw[-latex,red,dashed] (node2).. controls(3,3) ..(node1);
  \draw[-latex,red] (node4).. controls(5.5,-2) ..(node2);
  \draw[-latex,red] (node1).. controls(5.5,-0.5) ..(node5);
  \draw[-latex,red,dashed] (node5).. controls(6,-4.3) ..(node4);
  \draw[-latex,red,dashed] (node3).. controls(12,-0.5) ..(node6);
  \draw[-latex,red,dashed] (node6).. controls(14,-0.5) and (16,-0.5) ..(node6);
  \draw[dashed,color=black] (13,-4.5) --(17,-4.5)-- (17,-0.5)--(13,-0.5)-- (13,-4.5);

\end{tikzpicture}

\end{minipage}
}
\subfigure[]{
\centering
\begin{minipage}[t]{0.45\linewidth}
\centering
\begin{tikzpicture}[scale=0.2, sibling distance=0em,
  every node/.style = {scale=0.65, shape=circle, draw, align=center},
    outline/.style={draw=#1,thick,fill=#1!100}]
  every draw/.style = {scale=2}
  \node[fill=black!10] (node5) at (3,-3) {5};
  \node[fill=black!10] (node4) at (9,-3) {4};
  \node[fill=black!10] (node6) at (15,-3) {6};
  \node[fill=black!10] (node2) at (0,2) {2};
  \node[fill=black!10] (node1) at (6,2) {1};
  \node[line width=1pt] (node3) at (12,2) {3};
  \draw[latex-latex,very thin] (node2)--(node1);
  \draw[latex-latex,very thin] (node1)--(node3);
  \draw[latex-latex,very thin] (node1)--(node4);
  \draw[latex-latex,very thin] (node3)--(node4);
  \draw[latex-latex,very thin] (node4)--(node5);
  \draw[latex-latex,very thin] (node6)--(node4);
  \draw[-latex,red,dashed] (node2).. controls(3,3) ..(node1);
  \draw[-latex,red] (node4).. controls(5.5,-2) ..(node2);
  \draw[-latex,red] (node1).. controls(5.5,-0.5) ..(node5);
  \draw[-latex,red,dashed] (node5).. controls(6,-4.3) ..(node4);
  \draw[-latex,red,dashed] (node3).. controls(11,4) and (13,4) ..(node3);
  \draw[-latex,red,dashed] (node6).. controls(14,-0.5) and (16,-0.5) ..(node6);
  
  \draw[dashed,color=black] (10,0) --(14,0)-- (14,4)--(10,4)-- (10,0);
\end{tikzpicture}

\end{minipage}
}

\caption{An running example of CTC. The double arrows represent agents' connections. The red arrow is agents' pointing in the favorite pointing graph. The dashed red arrow means the agent switch her pointing to a less favorite agent. We use a dashed box to indicate agents in the minimum connected component $T$, and the matched agents are in grey.}
\label{fig:run_example}
\Description{Execution of \textit{Connected Trading Cycles} on Figure~\ref{fig:pref_table}.}
\end{figure}

In this section, We run a detailed example to illustrate our mechanism. The initial agent set is $N_0=\{ 1 \}$. The order is $\mathcal{O}=(1,2,3,4,5,6)$. The social network, type profiles, and allocation are presented in Figure~\ref{fig:pref_table}. The following steps match the sub-figures in Figure~\ref{fig:run_example}. Since edges in both $G(\theta')$ and $F(\theta')$ are directed, which may cause ambiguity, we adopt $i \to j$ to represent the directed edge $\langle i,j\rangle$ in $G(\theta')$ and use $\langle i,j \rangle$ for directed edge between $i,j$ in $F(\theta')$. For simplicity, we denote a path $(\langle i,j \rangle,\langle j,k \rangle)$ in $G(\theta')$ as $(i \to j \to k)$.

\begin{enumerate}
    \item[(a)] In the beginning, agent 1 has the minimum order in $N \setminus V=\{ 1,2,3,4,5,6 \}$. We detect $P_5 = C_5 = \{1,5\}$. In the path detection process, the minimum connected component that contains $C_5$ is $T = \{1,2,4,5\}$. We then compute the shortest path set $SP=\{(\textbf{1}\to4\to5),(\textbf{2}\to1\to4),(\textbf{4}\to1\to2),(\textbf{5}\to4\to1)\}$. Both agent 2 and agent 5 do not have a path to their pointing only marked by themselves, and their outgoing edges are not marked by others, so $S=\{2,5\}$. Since $(T\setminus C_5)\cap S = \{2\}$, and agent 2 has the minimum order in $S$, we change $\langle 2,4\rangle$ to $\langle 2,\succ_2^{next}(F(\theta'))=3\rangle$ (CTC step c-iv).
    
    \item[(b)] Agent 1 has the minimum order in $N \setminus V=\{ 1,2,3,4,5,6\}$. We detect $P_5 = C_5 = \{1,5\}$. In the path detection process, the minimum connected component that contains $C_5$ is $T = \{1,2,4,5\}$. The shortest path set is $SP=\{(\textbf{2}\to1),(\textbf{1}\to4\to5),(\textbf{4}\to1\to2),(\textbf{5}\to4\to1)$. Agent 5 is the only agent who does not have a path to her pointing only marked by herself, and her outgoing edge is not marked by others, so $S=\{5\}$. Since $(T\setminus C_m) \cap S = \emptyset$ and $S=\{5\}$, we change $\langle5,1\rangle$ to $\langle 5,\succ_5^{next}(F(\theta'))=3\rangle$ (CTC step c-v). 

    \item[(c)] Agent 1 has the minimum order in $N \setminus V=\{ 1,2,3,4,5,6\}$. We detect $P_3 = C_3 = \{1,5,3\}$. In the path detection process, the minimum connected component that contains $C_3$ is $T=\{1,2,3,4,5\}$. The shortest path set $SP=\{(\textbf{2}\to1),(\textbf{3}\to1),(\textbf{1}\to4\to 5),(\textbf{4}\to1\to2),(\textbf{5}\to4\to3)\}$. All agents in $T$ have a path to their pointing only marked by themselves, so $S=\emptyset$ and $C_3\subset T$. Agent 2 has the minimum order in $T\setminus C_3=\{2,4\}$, we start from agent 2's pointing, and agent 5 is the last agent on $C_3$ whose path to her pointing contains at least one agent (i.e., agent 4) in $T\setminus C_3$. We change $\langle 5,3 \rangle$ to $\langle 5,\succ_5^{next}(F(\theta'))=4\rangle$ (CTC step c-iii).
    
    \item[(d)] Agent 1 has the minimum order in $N \setminus V=\{ 1,2,3,4,5,6 \}$. We detect $P_2 = C_2 = \{1,5,4,2\}$. In the path detection process, the minimum connected component that contains $C_2$ is $T = \{1,2,4,5\}$. We then compute the shortest path set $SP=\{(\textbf{2}\to1),(\textbf{5}\to4),(\textbf{1}\to4\to5),(\textbf{4}\to1\to2)$. All agents in $T$ have a path to their pointing only marked by themselves, so $S=\emptyset$ and $C_2=T$. We add $\{1,2,4,5\}$ to $V$. Since agent 3 belongs to $N\setminus V$ but she points to agent 1 who is added to $V$, we change $\langle 3,1 \rangle$ to $\langle 3,\succ_3^{next}(F(\theta'))=6\rangle$ (CTC step c-ii).
    
    \item[(e)] Agent 3 has the minimum order in $N \setminus V=\{ 3,6 \}$. We detect $P_6 =C_6=\{3,6\}$. In the path detection process, we cannot find a connected component $T$ that contains $C_6$. Agent 6 is the last agent on $C_6$ who cannot connect to her pointing, we change $\langle 6,3 \rangle$ to $\langle 6,\succ_6^{next}(F(\theta'))=6\rangle$ (CTC step c-i). 
    
    \item[(f)] Agent 3 has the minimum order in $N \setminus V=\{3,6\}$. We detect $P_6 = \{3,6\}$ and $C_6=\{6\}$. In the path detection process, the minimum connected component that contains $C_6$ is $T=C_6=\{6\}$. The shortest path set $SP=\{(\textbf{6}\to6)\}$. Agent 6 is the only agent in $T$, and she has a path to herself that is not marked by others. We have $S=\emptyset$. So, we add $\{6\}$ to $V$ (CTC step c-ii). Since agent 3 belongs to $N\setminus V$ but she points to agent 6 who is added to $V$, we change $\langle 3,6 \rangle$ to $\langle 3,\succ_3^{next}(F(\theta'))=3\rangle$ (CTC step c-ii).

    \item[(g)] Agent 3 has the minimum order in $N \setminus V=\{3\}$. We detect $P_3 = C_3 = \{3\}$. The shortest path set $SP=\{(\textbf{3}\to3)\}$. Agent 3 is the only agent in $T$, and she has a path to herself that is not marked by others. We have $S=\emptyset$. So, we add $\{3\}$ to $V$ (CTC step c-ii).
    
    \item[(h)] Now $V=N$, the algorithm terminates. The allocation given by CTC is $\pi=(h_5,h_1,h_3,h_2,h_4,h_6)$.
\end{enumerate}

\section{Formal Proofs for Mechanisms}
\subsection{Swap With Neighbors}
In this section, we prove that SWN is IR, IC and stable-cc.

\begin{theorem}
SWN is IR.
\end{theorem}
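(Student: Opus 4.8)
The plan is to split into two cases according to whether agent $i$ is qualified under the reported profile, and in the qualified case to exploit the structural fact that in SWN every agent is always allowed to point at her own house. Throughout, recall that IR fixes agent $i$'s report to her true type $\theta_i=(\succ_i,r_i)$, so in particular $i$ points according to her true preference $\succ_i$.

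First I would dispose of the easy case: if $i\notin Q(\theta_i,\theta'_{-i})$, then by the definition of a diffusion matching mechanism $\pi_i(\theta_i,\theta'_{-i})=h_i$, and hence $\pi_i(\theta_i,\theta'_{-i})\succeq_i h_i$ trivially. So assume $i$ is qualified and therefore takes part in the SWN pointing-and-removal procedure.

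Next I would verify that the procedure is well defined and terminates, and identify what $i$ ends up with. At the start of every round, each remaining agent has exactly one outgoing edge, namely to her current favorite house among herself and her remaining neighbors; thus the remaining graph has out-degree one on a finite vertex set and so contains at least one cycle (a self-loop counts), which is exactly the ``there is at least one cycle'' claim in the definition. Removing all such cycles strictly decreases the number of remaining agents, so after finitely many rounds every agent, including $i$, has been removed in some cycle and allocated. Let $r$ be the round in which $i$ is removed and let $h^\ast$ be the house $i$ points to in round $r$; then SWN allocates $\pi_i(\theta_i,\theta'_{-i})=h^\ast$.

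Finally comes the key step. In round $r$, the set of houses $i$ may point to always contains $h_i$ (agent $i$ is, by construction, one of the candidates ``herself and her neighbors remaining in the matching''), and $i$ points to her $\succ_i$-favorite element of that set. Hence $h^\ast\succeq_i h_i$, i.e. $\pi_i(\theta_i,\theta'_{-i})=h^\ast\succeq_i h_i$, which is precisely IR. The only mildly non-trivial ingredient is the termination/cycle-existence argument, and that is just the standard fact that a finite digraph with minimum out-degree one contains a cycle; no computation is required.
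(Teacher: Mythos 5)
Your proof is correct and rests on the same key observation as the paper's (one-line) argument: agent $i$'s own house $h_i$ is always among her pointing options until she is removed, so the house she is allocated satisfies $\pi_i(\theta_i,\theta'_{-i})\succeq_i h_i$. You merely spell out the termination/cycle-existence details and the unqualified-agent case, which the paper leaves implicit.
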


\begin{proof}
 In SWN, agent $i$ can always choose herself as her favorite agent, and then SWN will allocate $h_i$ to $i$. Thus, SWN is IR.   
\end{proof}

\begin{theorem}
SWN is IC.
\end{theorem}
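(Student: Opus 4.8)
The plan is to show that no agent can benefit from misreporting either her preference or her neighbor set in SWN. Since SWN is defined via a sequence of trading-cycle removals where each remaining agent points to her favorite item among herself and her remaining neighbors, I would separate the two dimensions of manipulation and handle them in turn, then combine.

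First I would handle misreporting of the preference $\succ_i'$ while keeping $r_i' = r_i$ fixed. The key observation is that SWN restricted to a fixed set of agents and a fixed neighbor graph is essentially a (generalized) Top Trading Cycles computation where an agent's ``acceptable set'' is $r_i \cup \{i\}$: each agent always points to her genuinely favorite available item within her neighborhood. I would argue, round by round, that the set of items agent $i$ can possibly obtain is exactly $\{h_j : j \in r_i \cup \{i\}\}$, and that among those, truthful pointing guarantees her the $\succ_i$-best one she can reach. Concretely, the standard TTC argument applies: consider the first round; if $i$'s true favorite neighbor's item is in a cycle with $i$, $i$ gets it and cannot do better; otherwise that item leaves with some other cycle and $i$ could not have obtained it regardless of her report (her report only affects where \emph{she} points, not the cycle structure among the remaining agents relative to items she is not claiming — more precisely, changing $\succ_i'$ only changes $i$'s outgoing edge). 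I would make this precise by induction on rounds: misreporting can only cause $i$ to point to a less-preferred item in some round, or to be absorbed into a cycle earlier with a worse item; in no case can she end up with an item strictly better under $\succ_i$ than what truthful pointing yields, because every item she could be matched to under any report was already available (not yet removed) at the round where truthful SWN matches her, and truthful SWN picks her favorite among exactly those.

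Next I would handle misreporting the neighbor set, i.e., reporting $r_i' \subsetneq r_i$ (recall agents can only under-report neighbors). The crucial structural fact is that in SWN an agent is only ever allocated an item $h_j$ with $j \in r_i' \cup \{i\}$, so shrinking $r_i'$ can only shrink the set of items $i$ can possibly receive — it never creates new opportunities. One subtlety is the diffusion aspect: by not reporting a neighbor, $i$ might disqualify some agents downstream, changing the global graph. But by the diffusion-mechanism definition and the neighborhood restriction, removing agents from the game can only remove potential trading partners from $i$'s neighborhood or leave it unchanged; it cannot give $i$ access to an item owned by someone not in $r_i$. So I would argue: let $A$ be the item $i$ gets under truthful report and $A'$ under the misreport; then $A'$ is the $\succ_i$-favorite available item in $r_i' \cup \{i\}$ at $i$'s matching round in the manipulated run, and since $r_i' \cup \{i\} \subseteq r_i \cup \{i\}$ and truthful SWN gives $i$ her favorite \emph{reachable} item, I need to show $A \succeq_i A'$. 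The combined claim (misreporting both simultaneously) then follows because the two effects compose: the manipulated allocation for $i$ is always the favorite-under-$\succ_i'$ item in the round she is matched, drawn from a neighborhood no larger than her true one, and truthfulness already extracts the best of that.

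The main obstacle I anticipate is the interaction between the diffusion structure and the dynamic cycle-removal order: when $i$ misreports her neighbor set, the \emph{order} in which cycles form and items are removed can change globally, so I cannot simply say ``the same items are available in the same rounds.'' The careful step is to establish an invariant of the form: for any report of $i$, if $i$ is matched to $h_j$, then $j \in r_i \cup \{i\}$ \emph{and} at every round before $i$'s true-report matching round the truthful run still has $h_j$ available (or $i$ is matched no later), so that truthful SWN's greedy choice dominates. I would likely prove this by tracking, for a fixed truthful run, which of $i$'s neighbors' items are removed in which round, and showing that $i$ deviating cannot ``rescue'' an item that truthful SWN has already let leave in a cycle not involving $i$ — because that cycle's existence does not depend on $i$'s outgoing edge (it involves only agents whose pointing is unaffected by $i$'s report, once we note $i$ is the only one whose edge changes with $\succ_i'$, and disqualified agents can only vanish, not redirect). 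This ``irrelevance of $i$'s deviation to cycles she is not part of'' is the heart of the argument and is where I would spend the most care.
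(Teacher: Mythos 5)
Your proposal is correct and follows essentially the same route as the paper's proof: you split the manipulation into a preference misreport and a neighbor-set misreport, use the fact that SWN only ever allocates an agent an item owned by someone in her reported neighborhood, and argue that any better item she covets left in a trading cycle not containing her, whose formation is independent of her report. Your explicit attention to the round-by-round dynamics and the disqualification (diffusion) effects is in fact somewhat more careful than the paper's brief argument, but it is the same underlying idea.
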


\begin{proof}
Since each agent $i$'s type consists of two parts, her preference $\succ_i$ and her neighbor set $r_i$, we will prove misreporting neither $\succ_i$ nor $r_i$ can improve her allocation. 

\noindent \textbf{Misreport on $\succ$:} For agent $i$, we fix her reported neighbor set as $r_i'$. Her real preference is $\succ_i$ and her reported preference is $\succ_i'$. Suppose her allocation $\pi_i(( \succ_i', r_i' ), \theta'_{-i}) = h_{j'} \succ_i \pi_i(( \succ_i, r_i' ), \theta'_{-i}) = h_j$. 

In SWN, when $i$ truthfully reports her preference, agent $i$ is allocated $h_j$ instead of $h_{j'}$, we know that $h_{j'}$ is in a trading cycle without $i$. Let the cycle containing $j'$ be $C_{j'}$. When we fix the others preference and $i$ misreports as $(\succ'_i,r_i')$, since the trading cycle $C_{j'}$ is only determined by agents' preference in $C_{j'}$, which excludes agent $i$, $C_{j'}$ still forms. By SWN, agent $i$ cannot be allocated $h_{j'}$, which contradicts our assumption.

\noindent \textbf{Misreport on $r$:} For agent $i$, we fix her reported preference as $\succ_i'$. Her real neighbor set is $r_i$ and her reported neighbor set is $r_i'$. Now we suppose her allocation $\pi_i(( \succ_i', r_i' ), \theta'_{-i}) = h_{j'} \succ_i' \pi_i(( \succ_i', r_i ), \theta'_{-i}) = h_j$. 

In SWN, each agent can only be allocated the item from her reported neighbor set $r_i'$. Therefore, we have $j \in r_i$, $j' \in r_i$ and $j' \in r_i'$. Since with true neighbor set $r_i$, agent $i$ is allocated $h_j$ instead of $h_{j'}$, we know that $h_{j'}$ is in a trading cycle without $i$. Let the cycle containing $j'$ be $C_{j'}$. According to SWN, each agent in $C_{j'}$ is pointing to her neighbor, which means their neighbor set is irrelevant to $r_i$ as well as any $r_i'$. Therefore, the trading cycle $C_{j'}$ still remains and excludes agent $i$ when agent $i$ misreports $r_i'$. By SWN, agent $i$ cannot be allocated $h_{j'}$, which contradicts our assumption.

Put the above two steps together, we prove that SWN is IC. 
\end{proof}

\begin{theorem}
SWN is stable-cc.
\end{theorem}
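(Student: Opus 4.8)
The plan is to adapt the classical proof that the Top Trading Cycles allocation lies in the (strong) core. The one feature of SWN that must be used is that an agent only ever points to herself or to a currently-remaining neighbor; consequently a complete component is exactly the structure in which every member ``sees'' all the houses held inside the component, so the core-style peeling argument goes through once blocking coalitions are restricted to complete components.

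Suppose, for contradiction, that for some truthful profile $\theta$ the SWN allocation $\pi(\theta)$ is blocked by a set $S$ that forms a complete component in $G(\theta)$, witnessed by an allocation $\pi'(\theta)$ with $\pi'_i(\theta)\in H_S$ and $\pi'_i(\theta)\succeq_i\pi_i(\theta)$ for all $i\in S$ and $\pi'_j(\theta)\succ_j\pi_j(\theta)$ for some $j\in S$; note that $\pi'$ restricted to $S$ is a bijection onto $H_S$. Choose such an $S$ of minimum size, which is legitimate since every subset of a complete component is again a complete component. Now run SWN and look at the iteration at which the first member of $S$ is removed, in some cycle $C^{*}$ containing an agent $i_1\in S$. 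At that iteration no house of $H_S$ has been taken yet, and since $S$ is complete every $l\in S\setminus\{i_1\}$ is a remaining neighbor of $i_1$, so $i_1$ points to her favorite house over a set that contains $H_S$; as she receives what she points to, $\pi_{i_1}(\theta)\succeq_{i_1}h$ for every $h\in H_S$, in particular $\pi_{i_1}(\theta)\succeq_{i_1}\pi'_{i_1}(\theta)$. Together with $\pi'_{i_1}(\theta)\succeq_{i_1}\pi_{i_1}(\theta)$ and strictness of $\succ_{i_1}$ this forces $\pi'_{i_1}(\theta)=\pi_{i_1}(\theta)$, which lies in $H_S$ and hence belongs to the successor $i_2$ of $i_1$ on $C^{*}$, so $i_2\in S$. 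Iterating this argument around $C^{*}$ shows $C^{*}\subseteq S$ and $\pi'_i(\theta)=\pi_i(\theta)$ for every $i\in C^{*}$.

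Finally, peel $C^{*}$ off. If $C^{*}=S$, then $\pi$ and $\pi'$ agree on all of $S$, contradicting the existence of the strictly-improved agent $j$. Otherwise $S\setminus C^{*}$ is a strictly smaller complete component; because $C^{*}$ is a trading cycle, the houses consumed on $C^{*}$ under $\pi'$ are exactly $\{h_i : i\in C^{*}\}$, so $\pi'$ maps $S\setminus C^{*}$ onto $H_{S\setminus C^{*}}$, it remains weakly improving there, and $j\notin C^{*}$ still lies in it. Hence $S\setminus C^{*}$ is a smaller blocking complete component, contradicting the minimality of $S$. Therefore $\pi(\theta)$ admits no blocking complete component, i.e.\ SWN is stable-cc.

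The step I expect to be the main obstacle is the chaining around $C^{*}$: one must verify that each agent reached on $C^{*}$ is simultaneously still present, sees all of $H_S$ (using completeness of $S$), and is therefore pinned to an $H_S$-optimal house, so that her SWN house returns to $H_S$ and drags the next agent into $S$. Everything else is routine bookkeeping: a removed cycle carries off precisely its own houses (used in the peeling step), and a complete component meeting $Q(\theta)$ lies entirely in $Q(\theta)$, so unqualified agents cause no difficulty.
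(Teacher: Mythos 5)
Your proof is correct, and it is in fact a more rigorous rendering of the idea the paper uses. The paper's own argument is a short direct one: since $S$ is a complete component, every member of a would-be blocking coalition can point at the owner of her improved house, so the improving trades would already be realized as SWN trading cycles, forcing $\pi'=\pi$ on $S$ --- but it asserts rather than verifies that the coalition's cycles actually survive the round-by-round dynamics. You supply exactly that missing bookkeeping by importing the classical TTC-core argument: take a minimum blocking complete component, look at the round in which its first member $i_1$ leaves, use completeness (plus the fact that a house $h_l$ leaves only when $l$ does) to conclude $i_1$ sees all of $H_S$ and hence $\pi'_{i_1}=\pi_{i_1}$, chain around the removed cycle $C^*$ to get $C^*\subseteq S$ with $\pi'=\pi$ there, and peel $C^*$ off to contradict minimality. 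What the paper's version buys is brevity; what yours buys is an airtight treatment of the step the paper glosses over (why pointing at a neighbor's house guarantees receiving it within the coalition), at the cost of the minimal-counterexample/peeling machinery. Your side remark that a complete component meeting $Q(\theta)$ lies entirely in $Q(\theta)$ (under truthful reports, neighbors of qualified agents are qualified) also handles a qualification edge case the paper's proof silently ignores.
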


\begin{proof}
For every $S\subseteq N$ and their item set $H_S$. Let the allocation given by SWN be $\pi(\theta)$. If there exists a blocking coalition $S$, where $S \subseteq N$ is the node set of a complete component in $G(\theta)$. 

Since $S$ is the node set of a complete component, we have $\forall i \in S, S \subseteq r_i$. A blocking coalition $S$ suggests there exists a $z(\theta)$ such that for all  $i\in S$, $z_i(\theta) \in H_S$,  $z_i(\theta) \succeq_i \pi_i(\theta)$ with at least one $j\in S$ we have $z_j(\theta) \succ_j \pi_j(\theta)$. Therefore, for all $ j \in S$, the blocking coalition guarantees the owner of $z_j(\theta)$ and $j$ are neighbors. Based on SWN, $z_j(\theta) \succ_j \pi_j(\theta)$ means $j$ can always point to and get allocated $z_j(\theta)$ instead of $\pi_j(\theta)$. Thus, the trading cycle which contains the owner of $z_j(\theta)$ and $j$ can trade by the cycle (i.e.,$\forall i \in S, z_i(\theta)$ = $\pi_i(\theta)$). This contradicts the assumption of existing at least one $j \in S \subseteq r_j, z_j(\theta) \succ_j \pi_j(\theta)$. Hence, SWN is stable-cc.
\end{proof}

\subsection{Leave and Share}
In this section, we prove that LS is IR, IC and stable-cc.

\begin{theorem}
For any order $\mathcal{O}$, LS is IR.
\end{theorem}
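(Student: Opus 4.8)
The plan is to prove IR directly from the structure of the protocol: I will show that, in every execution of LS, the house finally assigned to an agent $i$ is, under $i$'s \emph{reported} preference, weakly preferred to her own endowment $h_i$; instantiating this with $i$ reporting her true type $\theta_i$ (and everyone else reporting $\theta'_{-i}$) then gives $\pi_i(\theta_i,\theta'_{-i})\succeq_i h_i$ for every $i$ and every $\theta'_{-i}$, which is exactly IR. The whole argument rests on one structural fact: at every moment during LS, agent $i$ can "point to herself", because her effective neighbor set $R_i$ always contains $i$. Nothing in the argument depends on the particular order $\mathcal{O}$.

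First I would check that LS is well defined, i.e., that it halts and outputs a genuine matching. Inside step~2(b), during any maximal run of pushes in sub-step (i) the stack strictly grows and its size is bounded by $n$, so after at most $n$ pushes the favorite of the current top already lies in $S$ and a trading cycle $C$ is popped in sub-step (ii); hence the inner \textbf{while} loop makes progress and eventually empties the stack. Each completed round adds the nonempty set $N_{out}^t$ (containing at least that first popped cycle) to $N_{out}$, so the outer loop runs at most $n$ times and terminates with $N_{out}=N$. Moreover, on a popped cycle $C=(c_1,\dots,c_k)$ we have $f_{c_j}(R_{c_j})=c_{j+1}$ (indices mod $k$), so $c_j$ receives $h_{c_{j+1}}$; this is a cyclic permutation of $\{h_j: j\in C\}$, and since agents leave $N$ once popped, distinct cycles are disjoint, so the collection of all cycle-assignments is a valid allocation of $H$ over $N$.

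Next I would prove the IR inequality for a fixed agent $i$. Let $C\ni i$ be the cycle on which $i$ is popped and let $R_i=r_i'\cup\{S_{bottom},i\}$ denote the value of $i$'s effective neighbor set at that instant; by the mechanism, $\pi_i=h_{f_i(R_i)}$, where $f_i(R_i)$ is $i$'s $\succ_i'$-favorite agent within $R_i$. The key invariant is $i\in R_i$ \emph{at all times}: this holds because $R_i$ is defined to contain $i$ explicitly, and the two internal updates preserve it — the removal in step~2(b)(iii) only deletes members of an already-popped cycle from the neighbor sets of still-unmatched agents, and $i$, being still on the stack, belongs to no such cycle; the sharing in step~2(c) only enlarges neighbor sets. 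Consequently $f_i(R_i)$ is weakly preferred to $i$ itself under $\succ_i'$, i.e., $h_{f_i(R_i)}\succeq_i' h_i$. Note this conclusion is insensitive to $R_i$ changing between the moment $i$ is pushed and the moment her cycle is popped, since we only use that $i\in R_i$ at the instant of allocation. Finally, when $i$ reports truthfully, $\succ_i'=\succ_i$, so $\pi_i(\theta_i,\theta'_{-i})=h_{f_i(R_i)}\succeq_i h_i$; as $i$, $\theta_i$, $\theta'_{-i}$ and $\mathcal{O}$ were arbitrary, LS is IR.

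I expect the main obstacle to be purely the bookkeeping in the first two steps rather than any conceptual difficulty: one has to be careful to confirm that the protocol is well defined under the dynamic Leave/Share updates (every pushed agent is unmatched, neighbor sets of unmatched agents contain only unmatched agents, cycles are disjoint) and to nail down the invariant $i\in R_i$ through both the removal and the sharing updates. Once those are in place, the IR conclusion is immediate because an agent always retains the option of forming a self-cycle and keeping $h_i$.
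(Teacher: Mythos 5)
Your proof is correct and rests on the same core observation as the paper's: since $i\in R_i$ at all times, $i$'s allocated item $h_{f_i(R_i)}$ is weakly preferred (under her reported, hence under truthful reporting her true, preference) to her own endowment $h_i$, i.e., she can always effectively "point to herself." The extra bookkeeping you supply on termination and the invariant through the Leave/Share updates is a more careful elaboration, not a different route.
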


\begin{proof}
In LS, agent $i$ leaves only when she gets an item $h_j$. Agent $i$ can always choose herself as her favorite agent, and then LS will allocate $h_i$ to $i$. Thus, LS is IR.
\end{proof}

\begin{theorem}
For any order $\mathcal{O}$, LS is IC.
\end{theorem}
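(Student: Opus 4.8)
The plan is to reuse the two–part template from the SWN proof: with $\theta'_{-i}$ fixed, first bound any gain from misreporting the preference $\succ_i$ (holding the reported neighbor set $r'_i$ fixed), then bound any gain from misreporting the neighbor set (holding the reported preference fixed). A first useful observation is that $i$'s own shortest‑path layer, hence her position in $\mathcal{O}$, is determined only by who invites $i$ and not by her own report; so for a preference deviation the order $\mathcal{O}$, the partition of agents into rounds, and the entire schedule of ``Leave'' cycles and ``Share'' rewirings are all determined independently of $\succ_i$, and for a neighbor deviation they are at least unaffected \emph{for $i$ herself}. This is the networked analogue of the fact that, in TTC/SWN, cycles not containing $i$ do not depend on $i$'s preference; making it precise is the technical heart.

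\textbf{Misreport on $\succ$.} Run LS with $i$ truthful and let $h_j=\pi_i(\theta)$. The key lemma I would prove is a ``stability of the trading history'' statement: for every item $h_{j'}$ with $h_{j'}\succ_i h_j$, the cycle $C_{j'}$ that removes $h_{j'}$ in the truthful run does not contain $i$, the stack contents and the (already‑updated) neighbor sets of $C_{j'}$'s members at the moment $C_{j'}$ pops are unchanged when $i$ switches to $\succ'_i$, and $i$ is never connected to the owner of $h_{j'}$ while $h_{j'}$ is still present. I would establish this by induction on the order in which cycles leave the stack: $i$'s reported preference can only change \emph{which} agent $i$ pushes next, hence the \emph{order} in which certain agents enter the stack, but a set of agents that closes into a cycle under their current favorites still does so whenever they are all simultaneously present with the same neighbor data, and the intra‑round Share step only \emph{deletes} departed agents from neighbor sets, so no new reachability arises that depends on $\succ_i$. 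Granting the lemma, the SWN argument goes through verbatim: $C_{j'}$ removes $h_{j'}$ before $i$ could ever point to it, so under $\succ'_i$ agent $i$ cannot obtain any $h_{j'}\succ_i h_j$, and truthful reporting is optimal.

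\textbf{Misreport on $r$.} Since $r'_i\subseteq r_i$, a misreport can only hide neighbors, and I would show hiding is never profitable by coupling the truthful run with the misreported run. Hiding a neighbor $k$ either leaves $k$ (and everyone downstream of $k$) still qualified through another inviter at the same shortest‑path layer, in which case a locality argument shows the prefix of the execution relevant to $i$ --- the rounds up to the one in which $i$ leaves, together with the Share steps feeding into it --- is unaffected in any way $i$ can exploit; or it pushes some agents into strictly later layers or out of $Q(\theta')$ entirely, in which case the same inductive bookkeeping shows that the trading partners and items ever made available to $i$ (directly or via Share) under the misreport form a subset of those available under truth, and arrive no earlier, so the favorite‑among‑currently‑available item $i$ ends up with is weakly worse. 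Combining the two parts yields IC for every order $\mathcal{O}$.

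The hard part is the ``stability of the trading history'' lemma and its neighbor‑misreport counterpart. Because the Share step rewires neighbor sets \emph{between} rounds, a change in $i$'s report can a priori perturb \emph{when} later cycles form and therefore which rewirings happen, and a perturbed rewiring could make the owner of some strictly better item a neighbor of $i$ while that item is still present --- exactly the scenario the lemma must exclude. Controlling this cascade is where the shortest‑path order $\mathcal{O}$ is essential: it pins $i$'s own processing round (and, for preference deviations, everyone's round) independently of $i$'s report, and hiding a neighbor can only delay other agents, never advance them or $i$; so the only freedom $i$ retains is to reshuffle pointers and hidden edges, and the induction shows such reshuffling can only remove, never add, opportunities that reach $i$ before she leaves.
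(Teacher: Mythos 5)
Your overall decomposition (preference deviation with the reported neighbor set fixed, then neighbor deviation, using the fact that the distance-based order and the whole execution prefix before $i$ first enters the stack are independent of $i$'s report) is exactly the paper's skeleton. The gap is in your load-bearing ``stability of the trading history'' lemma, which is both stronger than what is true and not actually established. It is false that, under a preference deviation, ``the partition of agents into rounds and the entire schedule of Leave cycles and Share rewirings are determined independently of $\succ_i$'': once $i$ is pushed, her reported pointing decides which agent enters the stack next, hence which cycles pop and in what order, which agents get matched in that round, and therefore what the end-of-round Share step does; only the prefix before $i$'s first push (and $i$'s own position/round) is invariant, and the paper explicitly concedes that a misreport can ``force other agents to leave earlier with different cycles.'' Your justification of the lemma --- that a cycle re-forms whenever its members are ``simultaneously present with the same neighbor data'' and that ``no new reachability arises'' --- begs the question: the deviation changes \emph{when} other agents depart, so when a member $y$ of the truthful-run cycle $C_{j'}$ is pushed in the deviating run, her set of still-available options can differ (agents who had already left in the truthful run may still be present), she may point to a more preferred agent, and $C_{j'}$ need not re-form; the claim that the stack contents at the pop are unchanged is simply false in general. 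Ruling out added edges handles only half the problem; the timing side is precisely the cascade you acknowledge in your closing paragraph, and the induction that is supposed to control it is never given. The same unproven bookkeeping carries your neighbor-deviation claim that the opportunities reaching $i$ under a misreport are a subset of the truthful ones ``arriving no earlier.''

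The paper's proof takes a different, weaker route at exactly this point: it does not assert history invariance but argues that any alternative cycle $B$ through $i$ and $j'$ created by the misreport cannot ``break'' the truthful cycle $C_{j'}$, because at the node $y$ where $B$ would leave $C_{j'}$, $y$'s $C_{j'}$-match is her favorite and is still available whenever she is pushed, so she never points along $B$. For the neighbor part, the paper shows the state at the moment $i$ is pushed is independent of $r_i'$ (agents ordered before $i$ have shortest paths avoiding $i$; agents cut off by $i$ can only appear after $i$ is in the stack) and then reduces hiding neighbors to demoting them in $\succ_i$, i.e., to the already-settled preference case. To repair your argument you would need either to adopt that exit-node argument, or to prove a genuine confluence statement for the within-round cycle removal (that $i$'s allocation --- not the whole history --- is invariant to the exploration order induced by her pointing); as written, the central lemma is asserted, not proved, and in its literal form is false.
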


\begin{proof}
Since each agent $i$'s type consists of two parts, her preference $\succ_i$ and her neighbor set $r_i$, we will prove misreporting neither $\succ_i$ nor $r_i$ can improve her allocation. 

\noindent \textbf{Misreport on $\succ$:} For agent $i$, we fix her reported neighbor set as $r_i'$. Her real preference is $\succ_i$ and her reported preference is $\succ_i'$. Now we compare her allocation $\pi_i(( \succ_i, r_i' ), \theta_{-i}) = h_j$ with $\pi_i(( \succ_i', r_i' ), \theta_{-i}) = h_{j'}$. 
%Let's define a set of good items $H_G=\{ h_k \mid h_k \succ_i h_j\}$ and a set of bad items $H_B=\{ h_k \mid h_j \succ_i h_k \}$. Specially, if $H_G=\{ h_k \mid h_k \succ_i h_j\} = \emptyset$, which means agent $i$ has already been allocated her favorite item, agent $i$ cannot get a better allocation by misreporting. 

Since $\mathcal{O}$ is based on the minimum distance, which is irrelevant to agents' preferences, we only need to prove that $h_j \succeq_i h_{j'}$ for all agents for a given order.
 %$\mathcal{P}$ remains the same no matter what $\succ_i'$ is. %$(( \succ_i', r_i' ), \theta_{-i})$ is. 

Before $i$ is pushed into the stack, all trading cycles are irrelevant to $\succ_i'$( $i$ has not been preferred by the agents in the stack before, so $\succ_i'$ is not used at all). Thus we only consider the situation when agent $i$ is pushed into the stack and then $\succ_i'$ can decide which agent after $i$ is pushed into the stack. 

%$i$ is pushed into the stack cannot be changed by her misreporting. Hence, $S_{bottom}$ cannot be determined by $i$.

When $i$ is on the top of the stack, the next pushed agent $f_i(R_i)$ is determined by $\succ_i'$. Agent $i$ can be allocated with $h_{j'}$ only when there is a trading cycle with $i$. Assume that $h_{j'} \succ_i h_j$, i.e., misreporting $\succ_i$ gives $i$ a better item. We will show this leads to a contradiction. If $i$ reported $\succ_i$ truthfully, then $i$ would first choose $j'$ before $j$ ($j'$ is pushed into the stack first), since $i$ did not get $h_{j'}$, which means $j'$ formed a cycle $C_{j'}$ without $i$. If reporting $\succ_i'$, $i$ is matched with $j'$, then it must be the case that there exists another trading cycle $B$ which breaks the cycle $C_{j'}$. Otherwise, whenever $i$ points to $j'$, $j'$ will form the original cycle $C_{j'}$ as it is independent of $i$'s preference. The only possibility for $i$ to achieve this is by pointing her favorite agent under the false preference $\succ_i'$. By doing so, $i$ can force other agents to leave earlier with different cycles including $B$. Next, we will show that it is impossible for $B$ to break $C_{j'}$.

If $B$ can actually break $C_{j'}$, there must be an overlap between $B$ and $C_{j'}$. Assume that $x$ is the node where $B$ joins $C_{j'}$ and $y$ is the node where $B$ leaves $C_{j'}$ ($x$ and $y$ can be the same node). For node $y$, her match in $B$ and $C_{j'}$ cannot be the same (the model assumes strict preference), and no matter when $y$ is pushed into the stack, both items in $B$ and $C_{j'}$ are still there. Assuming the matching in $C_{j'}$ is her favorite, then cycle $B$ will never be formed. This contradicts to $h_{j'} \succ_i h_j$, so reporting $\succ_i$ truthfully is a dominant strategy.

\noindent \textbf{Misreport on $r$:} As the above showed for any reported neighbor set $r_i'$, reporting $\succ_i$ truthfully is a dominant strategy. Next, we further show that under truthful preference reports, reporting $r_i$ is a dominant strategy.
%For agent $i$, she reports her neighbors $r_i$ as $r_i'$. Since truthfully reporting real preference is the dominate strategy for each agent, we only need to consider 
That is, for the allocation $\pi_i (( \succ_i, r_i ), \theta_{-i}) = h_j$ and $\pi_i(( \succ_i, r_i' ), \theta_{-i}) = h_{j'}$, we will show $h_j \succeq_i h_{j'}$.

Firstly, we show that the tradings before $i$ being pushed into the stack are irrelevant to $i$'s neighbor set report $r_i'$. 
For all the agents ranked before $i$ in $\mathcal{O}$, their shortest distance is smaller than or equal to $i$'s shortest distance to agent $1$, which means that their shortest paths do not contain $i$ and therefore $r_i'$ cannot change them. Thus, $r_i'$ cannot change the order of all agents ordered before $i$ in $\mathcal{O}$. In addition, agent $i$ could be a cut point to disconnect certain agents $D_i$ from agent $1$, so $r_i'$ can impact $D_i$'s distances and qualification. However, $D_i$ can only be involved in the matching after $i$ is in the stack, as others cannot reach $D_i$ without $i$. 
%Therefore, before $i$ is pushed into the stack, the tradings do not involve $D_i$. 
Hence, before $i$ is pushed into the stack, the tradings only depend on those ordered before $i$ and the agents excluding $D_i$, which are independent of $i$. In fact, the order of the agents pushed into the stack before $i$ is the same no matter what $r_i'$ is. That is, when $i$ is pushed into the stack, the agents, except for $D_i$, remaining in the game are independent of $i$.

Then when $i$ misreports $r_i$, she will only reduce her own options in the favorite agent selection. Whether $r_i'$ disconnects $D_i$ or not, reporting $r_i'$ here is equivalent to modifying $\succ_i$ by disliking neighbors in $r_i\setminus r_i'$. As we have shown, this is not beneficial for the agent. Therefore, reporting $r_i$ truthfully is a dominant strategy, i.e., $h_j \succeq_i h_{j'}$.

Put the above two steps together, we have proved that $\pi_i(( \succ_i, r_i ), \theta_{-i}) \succeq_i \pi_i(( \succ_i', r_i' ), \theta_{-i})$, i.e., LS is IC. 
\end{proof}

\begin{theorem}
For any order $\mathcal{O}$, LS is stable-cc.
\end{theorem}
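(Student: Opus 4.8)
The plan is to argue by contradiction and induction on $|S|$. Suppose that under the truthful profile $\theta$ the LS allocation $\pi(\theta)$ is blocked: there is an agent set $S$ forming a complete component in $G(\theta)$, an allocation $\pi'(\theta)$ with $\pi'_i(\theta)\in H_S$ for every $i\in S$, $\pi'_i(\theta)\succeq_i\pi_i(\theta)$ for every $i\in S$, and $\pi'_{j^*}(\theta)\succ_{j^*}\pi_{j^*}(\theta)$ for some $j^*\in S$. The aim is to show this forces $\pi=\pi'$ on all of $S$, killing the strict gain.

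The first ingredient is a connectivity-persistence claim for the run of LS on $\theta$: whenever two agents $i,k\in S$ are both still in the market (not yet placed in $N_{out}$), we have $k\in R_i$. Since reports are truthful and $S$ is a complete component of $G(\theta)$, initially $k\in r_i=r'_i$; the ``Leave'' update in step~(b)(iii) only deletes from a remaining agent's neighbor set the members of a cycle that has just left, and the ``Share'' update in step~(c) only adds edges. Hence $k$ is never removed from $r'_i$ before $k$ itself leaves, so $k\in r'_i\subseteq R_i$ as long as both are present; in particular $S\subseteq R_i$ for every $i\in S$ still in the market. Now let $C$ be the trading cycle that LS pops at the earliest moment at which some member of $S$ is removed, and set $A=S\cap C\neq\emptyset$. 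By the choice of $C$, no member of $S$ has left before this step, so when LS assigns the items of $C$ (step~(b)(ii), which precedes adding $C$ to $N_{out}^t$ and the neighbor updates) every agent of $S$ is still present. Fix $i\in A$: by LS, $\pi_i(\theta)=h_{f_i(R_i)}$, and since $S\subseteq R_i$ at that moment, $h_{f_i(R_i)}\succeq_i h_k$ for all $k\in S$; as $\pi'_i(\theta)\in H_S$ this gives $\pi_i(\theta)\succeq_i\pi'_i(\theta)$, and with $\pi'_i(\theta)\succeq_i\pi_i(\theta)$ and strict preferences, $\pi_i(\theta)=\pi'_i(\theta)$ for all $i\in A$. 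Moreover $\pi_i(\theta)\in H_C$ (agents of a cycle trade among themselves) and equals $\pi'_i(\theta)\in H_S$, so $\pi_i(\theta)\in H_{C\cap S}=H_A$; by injectivity, $\pi|_A=\pi'|_A$ is a bijection of $A$ onto $H_A$.

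This lets me peel $A$ off and recurse. Since $\pi'$ already maps $A$ bijectively onto $H_A$, it maps $S'=S\setminus A$ bijectively onto $H_{S'}$; $S'$ is again a complete component of $G(\theta)$; every agent of $S'$ still weakly improves; and $j^*\notin A$ (otherwise $\pi_{j^*}(\theta)=\pi'_{j^*}(\theta)$ contradicts the strict gain), so $j^*\in S'$ and $S'$ is a strictly smaller blocking coalition. Iterating, $|S|$ decreases while $j^*$ stays in the coalition, so we eventually reach the coalition $\{j^*\}$, whose sole member leaves LS in some cycle $C$ with $A=\{j^*\}$, forcing $\pi_{j^*}(\theta)=\pi'_{j^*}(\theta)$ and contradicting $\pi'_{j^*}(\theta)\succ_{j^*}\pi_{j^*}(\theta)$. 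Hence no blocking complete component exists, i.e.\ LS is stable-cc.

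I expect the delicate parts to be (i) making the persistence claim watertight — pinning down that the ``Share'' step only adds edges and the ``Leave'' step only removes already-departed agents, and that the items of $C$ are read off \emph{before} $C$ is deleted from the other agents' neighbor sets, so the coalition's mutual links survive exactly as long as all its members remain — and (ii) realising that one must peel off the entire intersection $A=S\cap C$ at once rather than a single agent, together with the short argument that $\pi$ maps $A$ onto $H_A$, which is what makes $S\setminus A$ a legitimate sub-coalition for the induction.
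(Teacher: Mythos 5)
Your proof is correct and follows essentially the same route as the paper's: assume a blocking complete component, use the fact that under LS its members remain mutual neighbors until they leave (Leave only deletes departed agents, Share only adds edges), so each member is allocated her favorite among a set containing all of $H_S$ at her exit, forcing the blocking allocation to coincide with LS's and contradicting the strict improvement. The only difference is presentational: your earliest-exit cycle plus peeling induction makes rigorous the step that the paper argues more informally (that every coalition member ends up with $z_i(\theta)=\pi_i(\theta)$), which is a welcome tightening rather than a different approach.
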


\begin{proof}
For every $S\subseteq N$ and their item set $H_S$. Let the allocation given by LS be $\pi(\theta)$. If there exists a blocking coalition $S$, where $S \subseteq N$ is the node set of a complete component in $G(\theta)$, we have $\forall i \in S, S \subseteq r_i$. A blocking coalition $S$ suggests there exists a $z(\theta)$ such that for all  $i\in S$, $z_i(\theta) \in H_S$,  $z_i(\theta) \succeq_i \pi_i(\theta)$ with at least one $j\in S$ we have $z_j(\theta) \succ_j \pi_j(\theta)$. Therefore, for all $ j \in S$, the blocking coalition guarantees the owner of $z_j(\theta)$ and $j$ are in one trading cycle. This indicates if a trading cycle contains any agent in the coalition, all the agents in the trading cycle are in the coalition. Based on LS, $z_j(\theta) \succeq \pi_j(\theta)$ means the owner of $z_j(\theta)$ will be pushed into the stack before the owner of $\pi_j(\theta)$. Thus, the trading cycle which contains the owner of $z_j(\theta)$ and $j$ can trade by the cycle (i.e.,$\forall i \in S, z_i(\theta)$ = $\pi_i(\theta)$). This contradicts the assumption of existing $j \in S, z_j(\theta) \succ_j \pi_j(\theta)$. Hence, LS is stable-cc.
\end{proof}

\subsection{Connected Trading Cycles}
In this section, we prove that CTC is IR, IC, and stable-cc. The proof for optimal-cc is in Section 5, Theorem 5.7.

\begin{theorem}
For any order $\mathcal{O}$, CTC is IR.
\end{theorem}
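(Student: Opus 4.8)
The IR arguments for SWN and LS in this paper both rest on one observation: an agent can always secure her own endowment, since she may point to herself. The plan is to carry the same idea over to CTC, the only complication being that CTC repeatedly rewrites the favorite pointing graph through the next-favorite operator $\succ^{next}$. Concretely, I would maintain a single invariant throughout CTC's execution: \emph{for every qualified agent $i$, her current pointing in $F(\theta')$ is to an agent owning an item weakly preferred to $h_i$ under $\succ_i$}. Granting this invariant, IR follows immediately: when a cycle $C_m=T$ is settled in CTC step c.ii it is traded, so each agent on it receives the item of the agent she points to, whence $\pi_i(\theta_i,\theta'_{-i})\succeq_i h_i$; and by the definition of a diffusion matching mechanism an \emph{unqualified} $i$ just keeps $h_i$, so IR is trivial for her.

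The steps, in order, would be: (1) dispose of unqualified agents as above; (2) for qualified $i$, note $h_i\in\{h_k : k\in Q(\theta')\}$, which is the crucial fact; (3) base case of the invariant: $i$'s initial pointing is her favorite among the items of qualified agents, and $h_i$ is one of them, so the pointing is $\succeq_i h_i$; (4) inductive step: the pointing of $i$ changes only when some branch of CTC step c applies $\succ_i^{next}$ to her, and by the invariant she then points to some $h_j$ with $h_j\succ_i h_i$, so the set $Q(\theta')\setminus\{k : h_k\succeq_i h_j\}$ defining $\succ_i^{next}$ contains $i$ herself, giving a new pointing that is again $\succeq_i h_i$; (5) the self-pointing case: once $i$ points to herself, the only cycle through $i$ is $C_m=\{i\}$, Path Detection returns $T=\{i\}$ and $S=\emptyset$, so branch c.ii fires, $i$ is placed in $V$ and allocated $h_i$, and the subsequent ``while'' loop of c.ii only touches agents outside $V$, so $i$ is never re-pointed; (6) termination: each round of CTC either enlarges $V$ or strictly lowers some agent's pointing, and a pointing can only drop finitely often, so CTC ends with $V=N$ and $i$ is eventually settled, receiving (by the invariant) an item $\succeq_i h_i$.

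The part that will need the most care is step (5): I must verify that a self-pointing agent $i$ always lands in branch c.ii and never in c.i, c.iii, c.iv or c.v, and is never the cascaded agent in the ``while'' loop of c.ii. The argument is that if $i$ points to herself she lies on no non-trivial cycle and cannot lie on the acyclic tail $\{p_1,\dots,p_{l-1}\}$ of any detected sequence, so whenever $i$'s pointing is processed the detected cycle is exactly $\{i\}$; then $T=\{i\}\neq\emptyset$ rules out c.i, $C_m=T$ rules out c.iii, and the trivial length-$0$ path from $i$ to $i$ is vacuously marked exclusively by $i$, so $i\notin S$ and $S=\emptyset$, ruling out c.iv and c.v. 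One also has to observe that this processing is guaranteed to occur at the latest when $i$ becomes the minimum-order agent of $N\setminus V$ (forced by the growth of $V$ from step (6)). Everything else is routine bookkeeping on top of these observations.
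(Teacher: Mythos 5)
Your proposal is correct and follows essentially the same route as the paper's (much terser) proof: the paper argues that every agent can always point to her own endowment and has an exclusive path to herself, hence is never forced to switch below $h_i$ and eventually receives an item weakly preferred to it. Your invariant on the pointing graph, the monotonicity of $\succ_i^{next}$, and the verification that a self-pointing agent is settled via step c.ii are simply a careful expansion of that same idea.
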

\begin{proof}
In CTC, every agent $i$ can point to her endowment $h_i$, and $i$ always has an exclusive path to herself so she does not have to switch her preference. Eventually, CTC will assign $h_i$ to $i$. Thus, CTC is IR.
\end{proof}

\begin{theorem} 
For any order $\mathcal{O}$, CTC is IC.
\end{theorem}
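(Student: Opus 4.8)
The plan is to mirror the two-stage structure used in the IC proofs for SWN and LS. Fix an agent $i$ and $\theta'_{-i}$. In the first stage I fix an arbitrary reported neighbour set $r_i'$ and show that reporting $\succ_i$ truthfully weakly dominates any $\succ_i'$; in the second stage I fix the truthful preference and show that reporting $r_i$ weakly dominates any $r_i' \subseteq r_i$; composing the two inequalities yields $\pi_i(\theta_i,\theta'_{-i}) \succeq_i \pi_i(\theta'_i,\theta'_{-i})$. Before either stage I would record the basic invariant that CTC is well defined and terminates with each agent settled exactly once: $F(\theta')$ restricted to the unsettled agents always contains a cycle (each such agent has out-degree one), every branch of step~c either settles at least one cycle (step c-ii) or makes exactly one agent switch to a strictly less preferred pointing, and a settled agent never re-enters $N\setminus V$; so the pair (number of unsettled agents, total position along preference lists) strictly decreases.

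For the preference stage, the key object is the transcript of the run: the sequence of detected paths $P_m$, cycles $C_m$, components $T$, marked subgraphs $G_T$, and forced pointing switches (steps c-i, c-iii--c-v and the cascade in c-ii). I would prove, by induction on this sequence, that every step occurring strictly before the settling event that places $i$ into $V$ is independent of $\succ_i$. The point absent in SWN and LS is that $i$ may occur as a \emph{helper} inside some other agent's component $T$, because $T$ must be closed under the pointing in $F(\theta')$ and $i$'s current pointing depends on $\succ_i$. The claim to establish is that this helper role does not let $i$ influence which of the items $i$ strictly prefers get locked away in cycles that $i$ cannot join: by step c-ii a cycle only settles when $C_m=T$ and every agent on it has an exclusive path, so the edges that certify such a cycle are marked only by its own members; hence the existence of that cycle, and the item it removes, does not depend on $i$ when $i\notin C_m$. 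Granting this, at the moment $i$ is settled under truthful reporting she holds her favourite among the items not yet locked away, and she was forced onto that pointing (in c-i/c-iii/c-iv/c-v, or the c-ii cascade) only by overlap and connectivity constraints determined by $r_i'$ and $\theta'_{-i}$, none of which $\succ_i$ can relax. Therefore every $h_{j'}\succ_i h_j$ was already unavailable, and no $\succ_i'$ can recover it.

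For the neighbour stage I would use the robustness of the order $\mathcal{O}$: since it is built from shortest-path distances to $N_0$, replacing $r_i$ by $r_i'\subseteq r_i$ can only increase the distance of, or disqualify, the set $D_i$ of agents that reach $N_0$ only through $i$, while leaving the qualification and relative order of every other agent unchanged. Agents in $D_i$ are unreachable except through $i$, so no path, component, or cycle can involve them until after $i$ herself has been processed; dropping them therefore only removes matching options that would surface ``after'' $i$ in any case. For the edges kept in $r_i\cap r_i'$, reporting $r_i'$ is, from $i$'s standpoint, equivalent to reporting $r_i$ but demoting every neighbour in $r_i\setminus r_i'$ to the bottom of $\succ_i$, which is a pure preference misreport, already shown non-beneficial in the first stage. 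Hence $\pi_i((\succ_i,r_i),\theta'_{-i})\succeq_i \pi_i((\succ_i,r_i'),\theta'_{-i})$, and combining with the first stage completes the proof.

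I expect the main obstacle to be the helper-node argument in the preference stage: making fully rigorous that an agent cannot, by reporting a different preference, steer which of the \emph{other} agents' cycles are detected and which agents are forced to switch, so as to unlock one of her strictly preferred items. Everything else --- termination, the order's robustness, and the reduction of neighbour misreports to preference misreports --- is routine once that lemma is in hand.
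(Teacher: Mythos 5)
Your plan has a genuine gap, and you have named it yourself: the ``helper-node'' lemma --- that agent $i$ cannot, by changing her reported preference, steer which cycles among the \emph{other} agents are detected, which agents land in $S$, and who is forced to switch, before $i$ herself is settled --- is exactly where all of the CTC-specific difficulty lives, and your proposal defers it rather than proving it. Worse, the invariant you propose to establish by induction (``every step occurring strictly before the settling event that places $i$ into $V$ is independent of $\succ_i$'') is not obviously true as stated: $i$'s current pointing enters the closure condition defining the minimum component $T$ of another agent's cycle, so changing $\succ_i$ can enlarge or shrink $T$, change the shortest-path markings in $G_T$, change membership of $S$, and hence change which agent is told to switch in steps c-iii/c-iv/c-v --- all while $i \notin C_m$. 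Your justification via step c-ii (``the edges certifying a settled cycle are marked only by its own members, hence its existence does not depend on $i$'') only addresses the final settling event, not the chain of intermediate switches that determines whether that cycle is ever present in $F(\theta')$ to be settled. So the induction you sketch presupposes the lemma it is meant to prove. The second stage (neighbour misreports reduce to preference misreports via the shortest-path order, with $D_i$ only reachable through $i$) is fine and matches the LS-style argument, but it inherits the same unresolved dependence for the preference stage it reduces to.

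For contrast, the paper does not use your two-stage transcript induction at all. It argues directly by contradiction on the final trading cycles: if truthful reporting gives $i$ item $h_j$ via a settled connected cycle $C_i$ and some misreport gives $h_{j'} \succ_i h_j$ via a settled connected cycle $C'_i$, then the portions $j \to \cdots \to k \to i$ and $j' \to \cdots \to k' \to i$ of these cycles are formed by agents whose reports are fixed in $\theta'_{-i}$, so they form independently of $i$'s report; under truthful reporting $i$ would point to $j'$ before $j$, so $C'_i$ would form and trade, contradicting $\pi_i(\theta_i,\theta'_{-i}) = h_j$. Neighbour misreports are handled by observing that $r'_i \subseteq r_i$ can never create new paths, and disqualifying a competitor $i'$ for $h_j$ is useless because if all of $i'$'s paths to $j$ pass through $i$, then $i'$'s path overlaps $i$'s and $i'$ is forced to switch before $i$ anyway. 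Whether or not you find that argument fully satisfying in its own right, it shows the intended route: reason about the two final cycles and the exclusivity/overlap rules of Path Detection, rather than trying to freeze the entire pre-settlement transcript against changes in $\succ_i$ --- which, as your own caveat concedes, is the step your proposal does not deliver.
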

% \begin{proof}
% Suppose agent $i$ is allocated $\pi_i((\theta_i,\theta'_{-i}))=h_j$ when truthfully reports her type. In CTC, agent $i$ must have pointed to all the items she like better than $h_j$, denoted as $H_G =\{h_j'| h_j' \succ_i h_j\}$. Given that $i$ is not allocated items in $H_G$, the owners of $H_G$ must be involved in connected cycles (cycles formed by a group of connected agents in $G(\theta')$, every agent has an exclusive path to her pointing, as in CTC step c-ii.) without $i$ in $\pi(\theta)$. That is to say, agent $i$ is not in the minimum connected component $T$ for any of these cycles. Hence, these cycles are irrelevant to $i$'s report, which means $i$ cannot misreport her type to get an allocation better than $h_j$.
% \end{proof}

\begin{proof}
    In CTC, the path detection and preference switching process (in CTC step c) ensures each agent choose from her best possible choice (based on the next favorite function), and only switch pointing if all paths from herself to her pointing fail to be connected by an exclusive path. In path detection, while agent $i$'s path to her pointing $j$ overlaps with others, we constantly find all possible paths from $i$ to $j$ with ascending order in length. This means, $i$ cannot misreport neighbor set ($r'_i\subset r_i$) to add new paths to $j$. Further, $i$ has no incentive to misreport neighbor set to exclude her competitor $i'$ (i.e., $i$ can misreport $r'_i\subset r_i$ so that $i'$ who also points to $j$ is unqualified.) because in this case, all paths from $i'$ to $j$ pass through $i$. So, if $i$ and $i'$ are competing for the same item $h_j$, $i'\to j$ overlaps with $i\to j$, so $i'$ will switch to a less preferred agents before $i$ does.
    
    Suppose CTC is not IC, there must exist an agent $i\in N$, when $i$ truthfully report her type, we have $\pi_i((\theta_i, \theta'_{-i}))=h_j$, and $i$ can misreport to improve her matching, $\pi_i((\theta'_i, \theta'_{-i}))=h_{j'}\succ_i h_j$.

    We denote the cycle involving $i\to j$ as $C_i$, suppose $\pi_k((\theta_i,\theta'_{-i}))=h_i$ and $\pi_i((\theta_i, \theta'_{-i}))=h_j$. According to CTC step c.ii, we know that cycle $C_i$ is a connected cycle and each agent in $C_i$ has an exclusive path to her pointing.

    Similarly, we denote the cycle formed under $i$'s misreport as $C'_i$, involving $i\to j'$. Suppose we have $\pi_{k'}((\theta'_i,\theta'_{-i}))=h_i$ and $\pi_i((\theta'_i, \theta'_{-i}))=h_{j'}$. According to CTC step c.ii, we know that cycle $C'_i$ is also a connected cycle and each agent in $C'_i$ has an exclusive path to her pointing.

    Since other agents type profiles are fixed as $\theta'_{-i}$, $j\to \cdots \to k \to i$ in cycle $C_i$ and $j'\to \cdots \to k' \to i$ in cycle $C'_i$ form irrelevant of agent $i$'s reported type profile. Thus, following truthful type $\theta_i$, agent $i$ will point to $j'$ first ($h_{j'}\succ_i h_j$), so $C'_i$ will form and get traded in CTC.

    This contradicts our assumption which means CTC is IC.
\end{proof}

\begin{theorem}
For any order $\mathcal{O}$, CTC is stable-cc.
\label{Stable-cc}
\end{theorem}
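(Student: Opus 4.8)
The plan is to argue by contradiction, in the spirit of the optimal-cc proof (Theorem~5.7) but adapted to the weaker ``only one agent needs to strictly improve'' requirement of stability, and evaluating everything at the truthful report so that the reported network equals $G(\theta)$. Suppose that for some profile $\theta$ there is a complete component $S$ in $G(\theta)$, a blocking reallocation $z(\theta)$ with $z_i(\theta)\in H_S$ and $z_i(\theta)\succeq_i\pi_i(\theta)$ for all $i\in S$, and an agent $j\in S$ with $z_j(\theta)\succ_j\pi_j(\theta)$, where $\pi$ is the CTC allocation. Since $z(\theta)$ restricted to $S$ is a bijection onto $H_S$, it decomposes into trading cycles; let $C=(c_1,\dots,c_\ell)$, with $c_1=j$, be the one through $j$, so that under $z$ agent $c_t$ receives $h_{c_{t+1}}$ (indices mod $\ell$), $h_{c_{t+1}}\succeq_{c_t}\pi_{c_t}(\theta)$ for every $t$, and $h_{c_2}\succ_j\pi_j(\theta)$. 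Because $C\subseteq S$ and $S$ is complete, $C$ is itself a complete component of $G(\theta)$, so each $c_t$ has the direct edge $c_t\to c_{t+1}$. It then suffices to show that CTC actually gives every $c_t$ an item $\pi_{c_t}(\theta)\succeq_{c_t}h_{c_{t+1}}$, since for $j$ this reads $\pi_j(\theta)\succeq_j h_{c_2}\succ_j\pi_j(\theta)$, a contradiction.

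To establish that, I would run CTC and take $a$ to be the \emph{first} agent of $C$ to enter the settled set $V$, with $b$ its successor on $C$ (so $h_b=z_a(\theta)$). At the step $a$ enters $V$, no agent of $C$ has been settled yet, so every item $h_{c_t}$ is still unallocated and each $c_t$ is still qualified and unsettled; in particular this holds for $b$. Agents enter $V$ only in CTC step c.ii, as part of a connected trading cycle $C_m=T$, receiving the item they currently point to, so $\pi_a(\theta)$ is $a$'s pointing at that step. A pointing in $F(\theta)$ only ever degrades (each change is a move to a strictly less preferred owner via $\succ^{next}$), and by the definition of $\succ_i^{next}$ a pointing cannot skip over a qualified, unsettled owner; hence $a$'s pointing can end up strictly below $h_b$ only if at some earlier step $a$ pointed exactly to $b$ and CTC forced it off the edge $a\to b$. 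I claim this cannot happen (next paragraph), which gives $\pi_a(\theta)\succeq_a h_b$; combined with $z_a(\theta)=h_b\succeq_a\pi_a(\theta)$ and strictness of $\succ_a$, this forces $\pi_a(\theta)=h_b$. Then $b$ lies on $a$'s CTC trading cycle $C_m$ and enters $V$ together with $a$, so $b$ is also a first-settled agent of $C$; repeating the argument for $b$, its successor, and so on around $C$ forces $C\subseteq C_m$, and since $C$ and $C_m$ are cycles with the edges of $C$ contained in those of $C_m$ we get $C_m=C$ and $\pi_{c_t}(\theta)=h_{c_{t+1}}$ for all $t$, contradicting $h_{c_2}\succ_j\pi_j(\theta)$.

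The hard part, and the only place the intricate \textbf{Path Detection} machinery truly enters, is the claim that CTC never forces a member of a complete component to switch its pointing away from another member of that component while the latter is unsettled. Fix a step where $a$ points to $b$ with $a,b\in S$. Inside $G_T$ the direct edge $a\to b$ is the unique shortest $a$-to-$b$ path, so in the ascending-length sweep of step~d it is marked by $a$ before any other agent's path through it can be (such a path has length at least two and is processed strictly later); thus $a$ always retains a path to $b$ whose original mark is $a$, and even after some longer path of another agent contaminates that edge, $a$'s outgoing edge then carries a foreign mark, so $a$ fails the membership condition of the ``stuck'' set in step~c and is not chosen in c.iv or c.v --- instead the agent whose longer path \emph{covers} $a\to b$ is the one made to switch (step c.v). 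One also checks that c.i (no admissible $T$) and c.iii cannot select $a$, because within $S$ the cycle through $a$ admits $T=C_m$ and every agent there has its exclusive direct-edge path. (Here one must also disambiguate the notation, since $S$ is used both for our blocking coalition and for CTC's internal stuck set in step~c; I would rename one of them.) Everything else --- the cycle decomposition of $z$, monotonicity of pointings, and the ``first-settled'' induction around $C$ --- is routine once this lemma about complete components is in hand, so essentially all of the work sits in the Path-Detection case analysis just described.
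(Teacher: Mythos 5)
Your proposal is correct and takes essentially the same route as the paper's own proof: a contradiction via a blocking complete component, whose central point is that an agent pointing at the owner of her blocking allocation through her direct outgoing edge can never be placed in the ``stuck'' set of CTC step c nor selected to switch in any of cases c.i--c.v, so she holds that pointing and the blocking allocation would have been produced by CTC itself. Your additional scaffolding (decomposing the blocking allocation into cycles and the ``first-settled'' induction around one cycle) is a more careful bookkeeping of the same idea --- it also explicitly handles the forced switch in step c.ii when a pointed-to agent is already settled, a case the paper's terser argument passes over --- but it does not constitute a different proof strategy.
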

\begin{proof}
For any given type profile $\theta$, if the allocation $\pi(\theta)$ given by CTC violates stable-cc, there exists a group of agents $B$ that forms a complete component in $G(\theta)$, they deviate together and swap among themselves can result in a better allocation $\pi'(\theta)$. Since group $B$ is a complete component, each agent $i$ can point to the owner of her allocation $\pi'_i(\theta)$ through her outgoing edge (i.e., $\pi'_i(\theta)=h_j$, agent $i$ points to $j\in B$). 

We next demonstrate that when agent $i\in B$ points to $\pi_i'(\theta)$, $i$ will not switch her pointing to a less preferred agent in CTC. Suppose during CTC's execution, an agent $i\in B$ points to $j\in B, \pi_i'(\theta)=h_j$, let the connected component involving $i\to j$ be $T$. Since $i\to j$ is $i$'s outgoing edge, whether $i\to j$ is exclusively marked by $i$ or not, $i$ is not in $S$ (refer to CTC step c). If $T=\emptyset$ (CTC step c.i), CTC will not let $i$ switch her pointing because $i$ can always connect to her pointing $j$ by her outgoing edge. If $S=\emptyset$ and $C_m=T$ (CTC step c.ii), CTC will let $C_m$ trade, so $i$ still does not switch points. If $S=\emptyset$ and $C_m\subset T$ (CTC step c.iii), since $i$'s pointing path passes no other agents, $i$ does not have to switch points. Finally, if $S\neq \emptyset$ (CTC step c.iv and c.v), since $i\notin S$, CTC will not let $i$ switch her pointing.

That is to say, any agent $i\in B$ can hold her pointing to $\pi'_i(\theta)$, and CTC will allocate $\pi'(\theta)$. This means $\pi(\theta)=\pi'(\theta)$, which contradicts our assumption. 
\end{proof}

% \begin{theorem}
% For any order $\mathcal{O}$, CTC is optimal-cc.
% \end{theorem}
% \begin{proof}
% For any given type profile $\theta$, if the allocation $\pi(\theta)$ given by CTC violates optimal-cc, there exists another allocation $\pi'(\theta)$ such that every agent in a complete component $B$ in $G(\theta)$ is strictly better, while others receive the same allocation. Let $i,j$ be two agents in $B$, and $\pi'_i(\theta)=\pi_j(\theta)=h_k$. Because $B$ is a complete component in $G(\theta)$, $i$ can reach $j$ by her outgoing edge, which means $i\to j$ is an exclusive path for $i$. Since $\pi(\theta)$ is given by CTC, $j$ has an exclusive path to agent $k$, the owner of her allocation $\pi_j(\theta)$ (as in CTC step c.ii). Combining the two paths together, agent $i$ can reach $k$, the owner of $\pi'_i(\theta)$, by $i \to j \to k$. Hence, if $i$ points to $k$, CTC will not make agent $i$ switch her pointing to a less preferred agents. Therefore, there exists an exclusive path for every agent $i\in B$ to reach the owner of $\pi'_i(\theta)$, and CTC will allocate $\pi'(\theta)$. This means $\pi(\theta)=\pi'(\theta)$, which contradicts our assumption.
% \end{proof}

\subsection{Complexity Analysis}
\begin{proposition}
    The computation complexity of SWN is $O(N^2)$.
\end{proposition}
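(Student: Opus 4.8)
The plan is to give a concrete implementation of SWN and bound its cost by an amortization argument over monotone pointers, in the same spirit as the standard linear‑per‑cycle implementations of Top Trading Cycles. Throughout write $n=|N|$, and recall that $|r_i'|\le n-1$ for every agent $i$, so the reported graph $G(\theta')$ has $O(n^2)$ edges.

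First I would handle preprocessing. For each agent $i$, convert the reported preference $\succ_i'$ into a rank array (so comparing two houses becomes $O(1)$), which is $O(n)$ per agent; then, using these ranks as bucket keys, sort the agents in $r_i'\cup\{i\}$ into a list $L_i$ ordered by $i$'s preference, again $O(n)$ per agent by bucket sort. I also keep for each $i$ a pointer $p_i$ into $L_i$ (initially at the head), a Boolean ``removed'' flag, and reverse‑adjacency lists so that a deleted agent can reach the agents currently pointing to her. The whole preprocessing is $O(n^2)$.

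Then the main loop. The ``favorite remaining neighbor'' of $i$ is read off as the first not‑yet‑removed agent at or after $p_i$ in $L_i$; whenever $p_i$ lands on a removed agent I advance it forward until it reaches a live one. This presents a functional graph on the remaining agents, which always contains a cycle; I follow pointers from an arbitrary remaining agent --- reaching a cycle within at most $n$ steps --- extract the cycle(s), allocate along them, mark their agents removed, and then use the reverse‑adjacency lists to locate the surviving agents who had pointed into a just‑removed cycle so that their pointers are re‑advanced the next time they are queried. Repeat until no agent remains.

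For the complexity bound itself, two observations suffice. First, each round removes at least one agent, so there are at most $n$ rounds, and detecting the cycles of a functional graph on $k\le n$ vertices is $O(k)=O(n)$, giving $O(n^2)$ over all rounds. Second, the pointer $p_i$ only ever moves forward through a list of length $|L_i|\le n$, so the total work spent advancing all pointers --- and, with the reverse‑adjacency maintenance, locating the agents who must re‑advance --- is $\sum_i |L_i|=O(n^2)$. Together with the $O(n^2)$ preprocessing this yields $O(n^2)=O(|N|^2)$. The only non‑routine point --- the ``hard part'' --- is this second observation: a literal reading of the SWN definition rebuilds the entire pointing graph every round and costs $\Theta(n^3)$ in the worst case (e.g.\ a path graph where each round deletes a single self‑looping agent), so one must notice that an agent's pointed‑to neighbor can only change to a \emph{less}‑preferred remaining neighbor, hence no edge of $G(\theta')$ is ever rescanned for the same agent, which is exactly what makes the amortization go through.
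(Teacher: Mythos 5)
Your proof is correct and reaches the same $O(N^2)$ bound, but it is a more careful argument than the one in the paper. The paper's proof simply counts at most $N$ rounds (worst case: one self-looped agent removed per round) times an $O(N)$ cost per round for cycle detection in the pointing graph; it never accounts for the cost of determining each remaining agent's favorite remaining neighbor in each round, implicitly treating that lookup as free. You make exactly this step explicit: preprocess each agent's neighbor list into preference order ($O(N^2)$ total), and then observe that because agents only ever leave the market, an agent's favorite remaining neighbor is monotone along her list, so a forward-only pointer amortizes all favorite-queries to $\sum_i |L_i| = O(N^2)$, which combined with $O(N)$ cycle detection per round gives the claimed bound. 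This buys a genuinely complete argument where the paper's is a loose count. One small correction to your aside: the path-graph example does not witness a $\Theta(N^3)$ cost for the naive re-scan implementation, since there every agent has degree at most two and rescanning all neighbors is only $O(N)$ per round, hence $O(N^2)$ overall; to see $\Theta(N^3)$ for the naive method you need a dense instance, e.g.\ a complete graph with preferences arranged so that exactly one agent self-loops and is removed in each round while all remaining agents rescan $\Theta(N)$ neighbors. This does not affect your upper-bound argument, which is the substance of the proposition.
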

\begin{proof}
    In SWN, each agent points to the neighbor with her favorite item, and then we detect cycles in the constructed directed pointing graph with the computation cost of $O(N)$. In the worst-case scenario, all agents are self-looped, the above process will be conducted $N$ times, so the computation complexity is $O(N^2)$.
\end{proof}

\begin{proposition}
    The computation complexity of LS is $O(N^2)$.
\end{proposition}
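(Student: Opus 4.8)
The plan is to split the running time of LS into four pieces and bound each by $O(N^2)$: (i) the stack pushes and pops together with the evaluations of the favorite function $f$; (ii) the detection and removal of trading cycles in the Leave stage; (iii) the pruning of a departed cycle out of its neighbors' reported neighbor sets; and (iv) the Share stage, where the remaining neighbors of a departed cycle are all made mutual neighbors.

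First I would record the basic counting facts. Once an agent is matched she enters $N_{out}$, is deleted from every remaining agent's reported neighbor set, and is never reconsidered; hence each agent is pushed onto the stack at most once over the whole execution, so there are $O(N)$ pushes, $O(N)$ cycle formations (each cycle removes at least one agent), and $O(N)$ rounds. Every evaluation of $f_{S_{top}}(R_{S_{top}})$ scans $R_{S_{top}}=r'_{S_{top}}\cup\{S_{bottom},S_{top}\}$ at cost $O(N)$, and every such evaluation either immediately precedes a distinct push or else detects a trading cycle, so $f$ is evaluated $O(N)$ times in total; piece (i) is therefore $O(N^2)$. Popping a cycle $C$ and allocating its items costs $O(|C|)$ with $\sum_C|C|\le N$, giving $O(N)$ for piece (ii). Removing a matched cycle $C$ from its neighbors' lists, done by deleting each $i\in C$ from the list of each of $i$'s current neighbors, costs $O(\sum_{i\in C}|r'_i|)$, and since every agent is matched exactly once the total over all cycles of all rounds is $O(\sum_{i\in N}|r'_i|)=O(N^2)$ for piece (iii).

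The delicate piece, and the \emph{main obstacle}, is (iv). Implemented literally, a single round turns $X=\bigcup_{i\in N_{out}^t}r'_i\setminus N_{out}^t$ into a clique at cost $\Theta(|X|^2)$, and since there can be $\Theta(N)$ rounds each with $|X|=\Theta(N)$ this is $\Theta(N^3)$ in the worst case. The way out is the observation that the Share stage does nothing but merge agents into a clique, and that once the edge between two still-present agents is deleted (because one endpoint is matched) it never reappears, a matched agent being gone forever; hence each unordered pair becomes an edge at most once and is deleted at most once, so there are only $O(N^2)$ genuine edge events over the whole run. To turn this into an $O(N^2)$ algorithm I would keep the sharing-cliques explicitly, each agent holding a pointer to the leader of her current clique, treat an agent's effective neighbor set as her surviving explicit neighbors together with her clique, and implement one Share step by merging (via small-to-large union) the cliques of the matched agents and of their explicit neighbors. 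Then each agent's leader pointer is rewritten $O(\log N)$ times in all, for $O(N\log N)$ total; assembling $X$ scans each matched agent's explicit list and each absorbed clique once, which amortizes to $O(N^2)$; and one evaluation of $f$ still costs $O(N)$ since it scans an explicit list plus a clique. Summing the four pieces gives the claimed $O(N^2)$ bound; equivalently, one may argue directly that each of the $O(N^2)$ edge events is handled $O(1)$ times.
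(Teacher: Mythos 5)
Your decomposition into pieces (i)--(iii) is sound (each agent is pushed and matched once, each evaluation of $f$ costs $O(N)$ and there are $O(N)$ of them, and the total pruning cost is $O(\sum_i |r_i'|)=O(N^2)$), and your diagnosis of piece (iv) is sharper than the paper's own argument: the paper simply charges $O(N)$ per round for ``merging the remaining neighbors'' and multiplies by $N$ rounds, whereas materializing the clique on $X_t$ literally touches $\Theta(|X_t|^2)$ pairs, which can indeed sum to $\Theta(N^3)$ (e.g.\ a complete network in which every agent ranks her own house first, so one agent leaves per round and $|X_t|=N-t$). Your amortization observation is also correct: since a matched agent is deleted forever, every unordered pair is inserted at most once and deleted at most once, so there are only $O(N^2)$ genuine edge events.

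The gap is in the data structure you use to cash in that observation. Maintaining one clique per agent and implementing a Share step by small-to-large merging of ``the cliques of the matched agents and of their explicit neighbors'' does not faithfully simulate LS: the mechanism only makes the set $X_t$ (the current neighbors of the departed agents) mutually adjacent, while your merge also absorbs clique-mates of members of $X_t$ who are \emph{not} neighbors of anyone departed. Concretely, take $N_0=\{x_1,y\}$ with edges $x_1{-}u$, $x_1{-}v$, $y{-}w$, $w{-}u$, $w{-}z$; if $x_1$ leaves alone in round 1, the shared clique is $\{u,v\}$, and if $\{y,w\}$ trade and leave in round 2, LS shares $X_2=\{u,z\}$ only, so $v$ and $z$ never become neighbors --- but your merge unions $u$'s clique $\{u,v\}$ with $\{z\}$ and creates the spurious edge $v{-}z$, which can change later pointings and hence the allocation. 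To close the argument you must either repair the implementation (e.g.\ store each Share set once, allow agents to belong to several of them, and then redo the accounting for $f$-evaluations, which is no longer obviously $O(N)$ each) or prove an invariant of LS under which the over-merge is harmless; as written, the claimed $O(N^2)$ implementation is not established. A small additional omission: you never account for computing the order $\mathcal{O}$ (a BFS from $N_0$, trivially $O(N+|E|)=O(N^2)$), which the paper does include.
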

\begin{proof}
    In LS, we first compute the shortest path length from each agent to the initial agent set to determine the order $\mathcal{O}$. The corresponding computation complexity is $O(N^2)$. Next, following the order, at each turn, we start from the agent with the smallest order to form a cycle with a computation complexity of $O(N)$. A stack is used to record the pointed agents, if a sequence of agents forms a cycle, they will be popped out of the stack (as in LS step b). Once the stack is empty, LS merges the remaining neighbors of the matched agents with a computation complexity of $O(N)$ (as in LS step c). In the worst-case scenario, all agents are self-looped, so the above process will be conducted $N$ times, and the total computation complexity is $O(N^2)$.
\end{proof}

\begin{proposition}
    The computation complexity of CTC is $O(2^N\cdot N^2)$.
\end{proposition}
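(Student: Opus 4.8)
The plan is to bound separately the number of iterations of the outer loop of CTC and the cost incurred inside a single iteration, and then combine the two; the $2^{N}$ factor will come entirely from step a of \textbf{Path Detection}, while everything else contributes only polynomial overhead.

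\textbf{Iteration count.} First I would show that each pass through steps a--c makes progress in exactly one of two mutually exclusive ways: either it moves at least one agent into the settled set $V$ (case c.ii, together with the cascade of redirections that follows it), or it redirects exactly one agent $i$'s pointing from its current target to $\succ_i^{next}(F(\theta'))$, i.e.\ to a \emph{strictly} less preferred owner (cases c.i, c.iii, c.iv, c.v). No branch ever removes an agent from $V$, and an agent whose pointing has descended all the way to herself sits on a self-cycle and is settled at her next visit. Consequently each agent is redirected at most $N-1$ times, so there are $O(N^{2})$ redirecting passes and at most $N$ settling passes, and the outer loop runs $O(N^{2})$ times; this argument simultaneously establishes termination. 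The small lemma to verify here is that each of the five branches c.i--c.v is applicable exactly in its stated case and always performs one of the two progress steps.

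\textbf{Cost per iteration.} Maintaining the favorite pointing graph $F(\theta')$ costs $O(N^{2})$ in total (each agent scans her preference list once overall, and the ``while some $i\in N\setminus V$ points into $V$'' cascade in c.ii touches each agent a bounded number of times per pass); locating the minimum-order unsettled agent and following pointers to extract $P_m$ and the cycle $C_m$ are $O(N)$. Inside \textbf{Path Detection}, step b builds $G_T$ in $O(N^{2})$, and step d is polynomial because the shortest-path set $SP$ receives, for each agent, only paths of strictly increasing length, hence at most $O(N)$ paths per agent and $O(N^{2})$ BFS computations in total. The dominant step is \textbf{Path Detection} step a: it asks for a minimum connected subgraph $T$ of $G(\theta')$ with $C_m\subseteq T$ that is closed under the pointing map $F(\theta')$. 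Since this demand subsumes a Steiner-type connectivity requirement, the natural worst-case-correct implementation enumerates the vertex sets $S$ with $C_m\subseteq S\subseteq N$, checks connectivity of the induced subgraph and closure under pointing in $O(N^{2})$, and keeps a smallest valid one, which is $O(2^{N})$ candidates.

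\textbf{Combining.} Multiplying the $O(N^{2})$ bound on the number of iterations by the per-iteration cost, and noting that the $O(2^{N})$ subset search of step a dominates the polynomial work of all other steps, yields the stated $O(2^{N}\cdot N^{2})$. I expect the main obstacle to be precisely step a of Path Detection: one must justify that the brute-force subset search is what the worst-case bound refers to (there is no essentially cheaper procedure for the closed-connected-subgraph problem in general graphs), and one must carry out the bookkeeping carefully enough that the polynomial factors coming from the other steps and from the $O(N^{2})$ outer iterations are accounted as the $N^{2}$ factor rather than a larger power of $N$. A secondary obstacle is the progress-and-applicability lemma for branches c.i--c.v, since it is what legitimizes the $O(N^{2})$ iteration count used above.
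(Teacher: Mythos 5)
Your decomposition is the same one the paper uses: the $2^{N}$ factor comes from brute-force enumeration of candidate vertex sets $T\supseteq C_m$ in step a of Path Detection, with an $O(N^{2})$ connectivity/closure check per candidate, while cycle extraction, building $G_T$, and the shortest-path marking of step d are polynomial (the paper charges the marking $O(N^{2}\cdot N\log N)$, still dominated by $2^{N}\cdot N^{2}$). Where you diverge is that you also bound the number of outer-loop passes via the progress argument (each pass either settles agents or strictly demotes one agent's pointing, hence $O(N^{2})$ passes); the paper's proof contains no such count at all --- it simply sums the per-step costs once and declares the total, implicitly treating the polynomial number of passes as absorbed.

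This extra care exposes the one genuine problem in your write-up: the ``combining'' step as stated does not give the claimed bound. Multiplying your $O(N^{2})$ iteration count by the $O(2^{N}\cdot N^{2})$ per-pass cost yields $O(2^{N}\cdot N^{4})$, and ``the subset search dominates the polynomial work'' does not repair this, since $2^{N}\cdot N^{4}$ is not $O(2^{N}\cdot N^{2})$ in the strict sense. You flag this bookkeeping issue yourself but leave it unresolved, so your argument, taken literally, proves only a $2^{N}\cdot\mathrm{poly}(N)$ bound with a larger polynomial. To land exactly on the paper's statement you would either have to interpret the proposition as a per-pass (or loose, exponential-absorbs-polynomial) bound --- which is in effect what the paper's own proof does by never multiplying by the number of iterations --- or argue separately that the exponential subset search is invoked only $O(1)$ times in the accounting. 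Aside from this combining step, your cost analysis of the individual stages matches the paper's.
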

\begin{proof}
    In CTC, we first compute the order for each agent with a computation complexity of $O(N^2)$. Then, starting from the agent with the smallest order, CTC finds a cycle sequence $C_m$ in the favorite pointing graph with computation complexity of $O(N)$ (as in CTC step a). During the path detection process, the construction of minimum connected component $T$ requires a combination of all agents in $N\setminus C_m$ and a check on the connectedness for each combination attempt with computation complexity of $O(N^2)$, resulting in a computation cost of $O(2^N\cdot N^2)$ (as in Path Detection step a). Next, CTC applies a marking process to determine whether each agent in $T$ has an exclusive path to her pointing to further distinguish the connectedness of $C_m$ (as in Path Detection step c,d). This process involves determining the shortest paths for all agents in $T$, sorting the paths, and marking on a subgraph $G_T$. Together, the computation cost for the marking process is $O(N^2\cdot Nlog N)$. Finally, with the marked $G_T$, CTC determines which agent should switch her pointing to the next favorite or let the cycle $C_m$ get traded in $O(N)$ as in CTC step c. Hence, the computation complexity of CTC is $O(2^N\cdot N^2)$.
\end{proof}

%%%%%%%%%%%%%%%%%%%%%%%%%%%%%%%%%%%%%%%%%%%%%%%%%%%%%%%%%%%%%%%%%%%%%%%%

\end{document}